\tikzstyle{vertex}=[circle, draw, inner sep=0pt, minimum size=4pt, fill = black]
\newcommand{\multiline}[1]{%
  \begin{tabularx}{\dimexpr\linewidth-\ALG@thistlm}[t]{@{}X@{}}
    #1
  \end{tabularx}
}
\DeclarePairedDelimiter\abs{\lvert}{\rvert}%
\def\BState{\State\hskip-\ALG@thistlm}
\titlespacing{\section}{0pt}{3ex}{2ex}
\titlespacing{\subsection}{0pt}{2ex}{1ex}
\titlespacing{\subsubsection}{0pt}{0.5ex}{0ex}
\newtheorem{theorem}{Theorem}[section]
\newtheorem{corollary}{Corollary}[section]
\newtheorem{definition}{Definition}[section]
\newtheorem{lemma}{Lemma}[section]
\newtheorem{claim}{Claim}
\newtheorem{observation}{Observation}[section]
\let\c@fconjecture\c@conjecture
\let\c@fconj\c@conj
\def \eps {\varepsilon}
\newcommand{\ignore}[1]{}
\def\tO{\tilde{O}}
\title{Approximation Algorithms for Min-Distance Problems}
\author{Mina Dalirrooyfard, Virginia Vassilevska Williams, Nikhil Vyas,\\ Nicole Wein, Yinzhan Xu and Yuancheng Yu\thanks{minad@mit.edu, virgi@mit.edu, nvyas@mit.edu, nwein@mit.edu, xyzhan@mit.edu, ycyu@mit.edu, MIT EECS and CSAIL}}
\date{}
\begin{document}

\maketitle
\begin{abstract}
We study fundamental graph parameters such as the Diameter and Radius in directed graphs, when distances are measured using a somewhat unorthodox but natural measure: the distance between $u$ and $v$ is the {\em minimum} of the shortest path distances from $u$ to $v$ and from $v$ to $u$. The center node in a graph under this measure can for instance represent the optimal location for a hospital to ensure the fastest medical care for everyone, as one can either go to the hospital, or a doctor can be sent to help.

By computing All-Pairs Shortest Paths, all pairwise distances and thus the parameters we study can be computed exactly in $\tilde{O}(mn)$ time for directed graphs on $n$ vertices, $m$ edges and nonnegative edge weights. Furthermore, this time bound is tight under the Strong Exponential Time Hypothesis [Roditty-Vassilevska W. STOC 2013] so it is natural to study how well these parameters can be \emph{approximated} in $O(mn^{1-\eps})$ time for constant $\eps>0$. Abboud, Vassilevska Williams, and Wang [SODA 2016] gave a  polynomial factor approximation for Diameter and Radius, as well as a constant factor approximation for both problems in the special case where the graph is a DAG. We greatly improve upon these bounds by providing the first constant factor approximations for Diameter, Radius and the related Eccentricities problem in general graphs. Additionally, we provide a hierarchy of algorithms for Diameter that gives a time/accuracy trade-off.

\end{abstract}

\section{Introduction}
The diameter, radius and eccentricities of a graph are fundamental parameters that have been extensively studied  \cite{chung, Hakimi, ChepoiD94,eppstein-planar-jv, aingworth, Corneil01, Chepoi02, Dvir04, BenMoshe, BeKa07, WN08, Yuster10, Chan12, FHW12, WeYu13, RV13, ChechikLRSTW14, AbboudGW15, BCH+15} (and many others). The eccentricity of a vertex $v$ is the largest distance between $v$ and any other vertex. The diameter is the maximum eccentricity of a vertex in the graph, thus measuring how far apart two nodes can be, and the radius is the minimum eccentricity, measuring the maximum distance to the most central node. 

The distance between two vertices in an undirected graph is just the shortest path distance $d(\cdot,\cdot)$ between them. For directed graphs, however, this notion of distance $d$ is no longer necessarily symmetric, and rather than being a distance {\em between} two nodes, it measures the distance in a given direction. Several related notions of pairwise distance that are symmetric have been studied. These include the roundtrip distance \cite{CW99} which for two vertices $u$ and $v$ is just $d(u,v)+d(v,u)$, the max-distance \cite{AbboudWW16} which is $\max\{d(u,v),d(v,u)\}$, and the min-distance \cite{AbboudWW16} which is $\min\{d(u,v),d(v,u)\}$.

Each of these notions of distance has a particular application. For instance, one would have to pay the roundtrip distance when going to the store and back. On the other hand, if one needs medical assistance, one could either go to the hospital, or have a physician come to the home --- the time to receive care is then measured by the min-distance. Another example of min-distance is in symmetric-key encryption: any pair of parties can create a shared private key by using only one-way communication.

For each notion of distance, the diameter, radius and eccentricity parameters are well-defined. Given the shortest path distances $d(\cdot,\cdot)$ for all vertices, the parameters for each distance measure can be computed in $O(n^2)$ time in $n$ vertex graphs. The fastest known algorithms for All-Pairs Shortest Paths (APSP) \cite{ryanapsp,Pettie02,PettieR05} give the fastest known algorithms to compute these parameters exactly, running in $n^3/\exp(\sqrt{\log n})$ time and $O(mn+n^2\log\log n)$, respectively on $m$-edge, $n$-vertex graphs. 
Furthermore, under the Strong Exponential Hypothesis, there is no $O(m^{2-\eps})$ time algorithm for Diameter in unweighted graphs (and thus also for any of these notions of Diameter and Eccentricities in directed graphs)~\cite{RV13}. For Radius, the same lower bound holds but under the ``Hitting Set" conjecture \cite{AbboudWW16}. 

As exact computation is expensive, it makes sense to resort to approximation algorithms. For the shortest path distance versions of Diameter, Eccentricities and Radius, there are several fast algorithms that achieve various small constant approximation ratios~\cite{RV13,ChechikLRSTW14,CGR,BackursRSWW18}. For instance, for Diameter, a folklore linear time algorithm can achieve a $2$-approximation, and an $\tilde{O}(m^{3/2})$ time\footnote{We use $\tO$ notation to hide polylogarithmic factors} algorithm can achieve a $3/2$-approximation \cite{RV13,ChechikLRSTW14}.

Many of these algorithms \cite{RV13,ChechikLRSTW14,BackursRSWW18} work for any distance measure that satisfies the triangle inequality. Thus they work for the shortest paths distance, max-distance and roundtrip distance. The min-distance however does not satisfy the triangle inequality: e.g. you might have edges $(x,y)$ and $(z,y)$, and thus the min-distance between $x$ and $y$ and between $y$ and $z$ are both $1$, yet there may be no directed path between $x$ and $z$ in any direction, so that the min-distance between them may be $\infty$. 

This issue makes it much more difficult to design fast approximation algorithms for Min-Diameter, Min-Radius and Min-Eccentricities (the parameters of interest under the min-distance). The only known nontrivial algorithms are by Abboud et al.~\cite{AbboudWW16}. For Min-Diameter \cite{AbboudWW16} gives a near-linear time $2$-approximation algorithm if the input is a directed acyclic graph. For general graphs, the only nontrivial fast approximation algorithm is an $\tilde{O}(mn^{1-\eps})$ time $n^\eps$-approximation algorithm for any constant $\eps>0$.
(No constant factor approximation algorithm is known that runs significantly faster than just computing APSP.) For Min-Radius, \cite{AbboudWW16} gives an $\tilde{O}(m \sqrt n)$ time $3$-approximation algorithm for directed acyclic graphs. For general graphs, they only achieve a very weak $n$-approximation in near-linear time that checks if the Min-Radius is finite.
There are no known approximation algorithms for Min-Eccentricities faster than just computing APSP.

\subsection{Our Results}
The main goal of our paper is to obtain new fast, $O(mn^{1-\eps})$ time for some constant $\eps>0$, algorithms for Min-Diameter, Min-Radius and Min-Eccentricities (thus beating the $\tilde{O}(mn)$ time of exact computation). We achieve this by developing powerful new techniques that can handle the complications that arise due to the fact that the min-distance does not satisfy the triangle inequality.

Our results are as follows.
For Min-Diameter we achieve a hierarchy of algorithms trading off running time with approximation accuracy. 

\begin{theorem}
For any integer $0<\ell\le O(\log{n})$, there is an $\tilde{O}(mn^{1/(\ell+1)})$ time randomized algorithm that, given a directed weighted graph $G$ with edge weights non-negative and polynomial in $n$, can output an estimate $\tilde{D}$ such that $D / (4\ell-1) \le \tilde{D} \le D$ with high probability, where $D$ is the min-diameter of $G$. 
\end{theorem}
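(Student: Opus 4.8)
The upper bound $\tilde D\le D$ will come for free: the algorithm only ever reports a quantity of the form $\min\{d(u,v),d(v,u)\}$ for a pair $(u,v)$ it has actually examined, and it evaluates such distances in the input graph $G$ (not in any subgraph), so every value it could output is at most $D$. All the work is in the lower bound $\tilde D\ge D/(4\ell-1)$, which I would establish by induction on $\ell$. Fix a witness pair $(a,b)$ with $\min\{d(a,b),d(b,a)\}=D$, so that \emph{both} directed distances $d(a,b),d(b,a)$ are at least $D$; the one structural fact available is the ordinary directed triangle inequality for $d(\cdot,\cdot)$ (the min-distance fails it, the individual distances do not).

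\emph{Base case $\ell=1$ (a $3$-approximation in $\tO(m\sqrt n)$ time).} Sample $\Theta(\sqrt n\log n)$ vertices, run Dijkstra into and out of each, and let $\tilde D$ be the largest $\min\{d(s,w),d(w,s)\}$ seen. The point is: if some sampled source $s$ lies in the ``two-way ball'' $B(a,\tfrac23 D):=\{w:d(a,w)\le\tfrac23 D,\ d(w,a)\le\tfrac23 D\}$, then $d(s,b)\ge d(a,b)-d(a,s)\ge\tfrac13 D$ and $d(b,s)\ge d(b,a)-d(s,a)\ge\tfrac13 D$, so the pair $(s,b)$ certifies $\tilde D\ge D/3$; symmetrically for $B(b,\tfrac23 D)$. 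By the hitting-set property the sample meets any vertex set of size $\ge\sqrt n$, so this case triggers whenever one of these two balls is large. It remains to treat the case that both balls are small; here I would argue that the sampling ``covers'' all but a set $Z$ of $O(\sqrt n)$ vertices — covered meaning two-way reachable within $O(\tilde D)$ from a sampled core, so that any two covered vertices have min-distance $O(\tilde D)$, whence covering everything already gives $\tilde D\ge\Omega(D)$ — and that $a,b\in Z$ (if $a$ were covered, the triangle inequality from its covering source to $b$ would, as above, have forced $\tilde D$ up). One then simply runs Dijkstra from every vertex of $Z$ (the $\ell=0$ ``brute force'', costing $\tO(m\sqrt n)$), which sees the pair $(a,b)$ and hence gets $\tilde D=D$ exactly in this case.

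\emph{Inductive step.} For level $\ell+1$, sample $\Theta(n^{1/(\ell+2)}\log n)$ sources and Dijkstra into/out of each, at total cost $\tO(mn^{1/(\ell+2)})$. As in the base case, if a sampled source lands in a two-way ball of radius $\Theta(D)$ around $a$ or $b$ we certify $\tilde D\ge\Omega(D)$ directly. Otherwise the sampling together with the hitting-set bound confines the ``uncovered'' vertices to a set $Z$ with $|Z|=O(n^{(\ell+1)/(\ell+2)})$, and one shows $Z$ contains a pair at min-distance at least $D-4\tilde D$ (this is the combinatorial heart). We then invoke the level-$\ell$ algorithm on the instance $G$ \emph{restricted to the pair-set} $Z$ — crucially still measuring distances in all of $G$, but using only $\approx|Z|$ Dijkstra sources — at cost $\tO\!\big(m|Z|^{1/(\ell+1)}\big)=\tO(mn^{1/(\ell+2)})$; by induction it returns an estimate $\ge(D-4\tilde D)/(4\ell-1)$. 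Reporting the larger of this and $\tilde D$, a one-line computation (if $\tilde D<D/(4\ell+3)$ then $D-4\tilde D>D(4\ell-1)/(4\ell+3)$) gives an estimate $\ge D/(4(\ell+1)-1)$, closing the induction; and since the recursion always evaluated min-distances in $G$, the output never exceeds $D$.

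\emph{The main obstacle.} The delicate part is everything hiding behind ``covered/uncovered'': producing a notion of covered for which (i) the uncovered set $Z$ is genuinely small (this follows from hitting-set), (ii) $Z$ still contains a pair of min-distance $\ge D-O(\tilde D)$ (so the recursion is not run on a trivial instance), (iii) $a$ and $b$ provably belong to $Z$, and (iv) the restricted recursion is both fast — only $O(|Z|)$ sources are needed even though distances are computed in $G$ — and correct, with constants that telescope to exactly $4\ell-1$. A secondary technical point is the handling of arbitrary nonnegative weights: the vertex on a shortest $a$–$b$ path at distance ``$\tfrac23 D$'' may overshoot by one edge weight, which is absorbed by the standard device of charging that edge separately and does not affect the constants up to lower-order terms.
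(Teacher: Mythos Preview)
Your proposal has a genuine gap at exactly the point you flag as ``the main obstacle'': the covered/uncovered mechanism is not just delicate, it cannot work as stated. Your notion of \emph{covered} is ``two-way reachable within $O(\tilde D)$ from a sampled source,'' but in a DAG (or even a directed path $v_1\to v_2\to\cdots\to v_n$) every pair of distinct vertices has one of the two directed distances equal to $\infty$, so no vertex except the samples themselves is ever covered. Hence $|Z|\approx n$, not $O(\sqrt n)$, and the brute-force step in your base case costs $\tO(mn)$. The same failure propagates through the induction: the two-way ball $B(a,\tfrac23 D)$ is typically $\{a\}$, the hitting-set argument never fires, and the uncovered set never shrinks. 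More broadly, the min-distance problem is hard precisely because there is no symmetric ``closeness'' relation to exploit; any scheme that relies on sampled vertices being simultaneously close in both directions to many others will break on DAGs.

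The paper's algorithm is structurally quite different. Instead of a symmetric covering, it uses an \emph{asymmetric} partition: for each sampled $w$ set $S_w=\{u:d(u,w)<d(w,u)\}$ and $T_w=V\setminus(S_w\cup\{w\})$, and recursively refine to get $\tO(n^{1/(\ell+1)})$ parts $V_i$ of size $\tO(n^{\ell/(\ell+1)})$ each, with the property that any two parts are separated by some $w$ (one part in $S_w$, the other in $T_w$). If the diameter endpoints land in different parts, Lemma~\ref{lem:splitting_v} gives $D'\ge D/2$ directly. If they land in the same $V_i$, the paper does \emph{not} recurse on $V_i$ as a sub-instance of $G$; instead it builds an \emph{augmented} graph $G_i$ on $V_i\cup\{w_i^S,w_i^T\}$ with carefully weighted auxiliary edges, and proves (Lemmas~\ref{lem:diam_upper_bound_combined} and~\ref{lem:diam_lower_bound_combined}) that $d(u,v)-4D'\le d_{G_i}(u,v)\le d(u,v)$ for $u,v\in V_i$. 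Recursing on $G_i$ with the level-$(\ell-1)$ algorithm then yields an estimate at least $(D-4D')/(4\ell-5)$, and combining with $D'$ gives the $4\ell-1$ factor. The augmentation step---encoding the effect of paths that leave $V_i$ via the auxiliary vertices---is the idea your proposal is missing; without it there is no way to recurse on a small instance while controlling the distance distortion.
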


When we set $\ell=1$, we obtain an $\tilde{O}(m\sqrt{n})$ time $3$-approximation algorithm, and when we set $\ell=\lceil \log n\rceil$, we get an $\tilde{O}(m)$ time $O(\log n)$-approximation.

Our tradeoff achieves the first constant factor approximation algorithms for Min-Diameter in general graphs that run in $O(mn^{1-\eps})$ time for constant $\eps>0$. Such a result was only known for directed acyclic graphs, whereas for general graphs the only known efficient algorithm could achieve an $n^\eps$-approximation.

For Min-Radius, we also achieve the first constant factor approximation algorithm for general graphs running in $O(mn^{1-\eps})$ time for some constant $\eps>0$. Such a result was only known for directed acyclic graphs, whereas for general graphs the only known efficient algorithm could only check if the Min-Radius is finite.

\begin{theorem}
For any constant $\delta$ with $1>\delta>0$, there is an $\tilde{O}(m\sqrt n/\delta)$ time randomized algorithm, that given a directed weighted graph $G$ with edge weights positive and polynomial in $n$, can output an estimate $R'$ such that $R\le R'\le (3+\delta)R$ with high probability, where $R$ is the min-radius of $G$. 
\end{theorem}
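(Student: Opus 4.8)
The plan is to combine a geometric search over the value of the min-radius with a random-sampling scheme, and to symmetrize the whole argument by running it on both $G$ and its reverse $G^{\mathrm{rev}}$.

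\textbf{Reductions.} Since edge weights are positive and polynomially bounded, every finite min-eccentricity lies in $[w_{\min}, n^{O(1)}]$ where $w_{\min}$ is the smallest edge weight; so it suffices to try all guesses $\hat R$ in a geometric sequence of ratio $1+\delta/4$, of which there are $O(\log n/\delta)$, and for each to solve the following promise problem: if the true min-radius $R$ satisfies $R\le\hat R$, output a vertex $c$ with $\varepsilon_{\min}(c)\le(3+\delta/2)\hat R$. We will then recompute $\varepsilon_{\min}$ exactly (two Dijkstra runs) for every candidate the procedure produces and return the smallest value found; this automatically guarantees $R'\ge R$, while the smallest guess $\hat R$ that is $\ge R$ is $<(1+\delta/4)R$, so the reported value is at most $(3+\delta/2)(1+\delta/4)R\le(3+\delta)R$. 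The second reduction uses that $d_{\min}$ is symmetric: $G$ and $G^{\mathrm{rev}}$ have the same min-radius and the same centers, so running the procedure on both lets us assume the unknown center $w$ is \emph{out-heavy}, i.e.\ $|A|\ge n/2$ where $A=\{v:d(w,v)\le\hat R\}$. Set $B=\{v:d(v,w)\le\hat R\}$; the promise $R\le\hat R$ gives $A\cup B=V$.

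\textbf{The sampling procedure for fixed $\hat R$.} Sample $S$, a set of $\Theta(\sqrt n\log n)$ vertices chosen uniformly at random, and run in- and out-Dijkstra from each; this costs $\tilde{O}(m\sqrt n)$, hence $\tilde{O}(m\sqrt n/\delta)$ over all guesses, matching the target. The candidate set will be $S$ together with $O(1)$ auxiliary vertices read off the Dijkstra trees of each $s\in S$ (still $\tilde{O}(\sqrt n)$ candidates). The analysis splits on how round-trip-close vertices are to $w$. If $|A\cap B|\ge\sqrt n$, then w.h.p.\ $S$ contains some $s$ with $d(w,s)\le\hat R$ \emph{and} $d(s,w)\le\hat R$; then for $v\in A$, $d(s,v)\le d(s,w)+d(w,v)\le 2\hat R$, and for $v\in B$, $d(v,s)\le d(v,w)+d(w,s)\le 2\hat R$, so (using $A\cup B=V$) $s$ is itself a $2$-approximate center. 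So assume $|A\cap B|<\sqrt n$. Since $|A|\ge n/2$, w.h.p.\ $S$ still contains some $s$ with $d(w,s)\le\hat R$; the computation above shows such an $s$ already in-covers all of $B$ within $2\hat R$, so it can only fail on $A\setminus B$, the vertices $w$ reaches within $\hat R$ but which do not reach $w$ within $\hat R$. For these I plan to argue in one of two ways: (i) if $d(s,w)\le 2\hat R$ then $s\to w\to v$ certifies $d(s,v)\le 3\hat R$ for every $v\in A$, so $s$ is a $3$-approximate center; (ii) otherwise exploit that every vertex of $A\setminus B$ is reached from $w$ within $\hat R$ — so $w$ acts as a common source for that set — and bring in an auxiliary candidate extracted from the Dijkstra trees, e.g.\ the vertex maximizing $d(\cdot,s)$ subject to $d(\cdot,s)\le\hat R$ as a proxy for $w$, to get a vertex covering $A\setminus B$ within $3\hat R$ as well.

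\textbf{Main obstacle.} The crux is case (ii): unlike the shortest-path, max-distance, and roundtrip settings, ``being one-directionally close to the center'' provides no leverage through the triangle inequality, so ruling out that all of $A\setminus B$ is simultaneously far (in both directions) from every candidate we can afford to test requires a genuinely structural argument about where these vertices sit relative to $w$ (and a careful choice of the $O(1)$ auxiliary candidates per sampled vertex). This is exactly the step that pushes the approximation factor from $2$ up to $3$ and where the delicate part of the proof lies; the granularity of the geometric guessing is what contributes the extra $\delta$ and the $1/\delta$ factor in the running time.
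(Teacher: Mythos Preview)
Your proposal has a genuine gap, and it is precisely the one you flag as the ``Main obstacle.'' In case~(ii) you have a sampled vertex $s$ with $d(w,s)\le\hat R$ but $d(s,w)>2\hat R$, and you must cover every $v\in A\setminus B$, i.e.\ vertices with $d(w,v)\le\hat R$ and $d(v,w)>\hat R$. From $d(w,s)\le\hat R$ and $d(w,v)\le\hat R$ you cannot bound either $d(s,v)$ or $d(v,s)$: both $s$ and $v$ are reached \emph{from} $w$, but nothing forces a short path between them in either direction. Your suggested auxiliary candidate, ``the vertex maximizing $d(\cdot,s)$ subject to $d(\cdot,s)\le\hat R$,'' has no demonstrated relation to $w$ or to $A\setminus B$; there is no reason such a vertex should cover $A\setminus B$ within $3\hat R$. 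This is not a detail to be filled in---it is the entire difficulty of the problem, since the lack of a triangle inequality for min-distance is exactly what breaks every sampling-plus-triangle-inequality template that works for the roundtrip and max-distance variants.

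The paper's proof takes a completely different route. It does fix a guess $r$ and sample $\tilde O(\sqrt n)$ vertices $W$, but instead of trying to find the center near a sampled vertex, it builds a global DAG-like ordering of the whole vertex set: first a ``far graph'' on $W$ (edges between pairs at min-distance $>2r$), then a ``close graph'' on its components (edges for one-way distance $\le 5r$), whose strongly connected components are contracted into ordered \emph{supercomponents}. Every vertex is then slotted into this order, and the center is shown to lie in a small, identifiable bucket $C_i$. The algorithm then runs Dijkstra from each candidate in $C_i$ only on a bounded window $G_i$ of the ordering, and proves two structural facts: shortest paths of length $\le r$ from the true center stay inside $G_i$, and any vertex that looks like a center inside $G_i$ has min-eccentricity $\le 3r$ in $G$. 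None of this machinery is present in your outline, and it is what actually substitutes for the missing triangle inequality.
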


Finally, we obtain the first $O(mn^{1-\eps})$ time (for constant $\eps>0$) constant factor approximation algorithms for the Min-Eccentricities of all vertices in a graph. For unweighted graphs we are able to obtain a close to $3$ approximation in $\tilde{O}(m\sqrt n)$ time. For weighted graphs, our approximation factor grows to $5$, while the running time is the same. Previously, the only algorithm to approximate the Min-Eccentricities computed them exactly via an APSP computation.

\begin{theorem}
For any constant $\delta$ with $1>\delta>0$, there is an $\tilde{O}(m\sqrt n/\delta)$ time randomized algorithm, that given a directed weighted graph $G=(V,E)$ with weights positive and polynomial in $n$, can output an estimate $\eps'(s)$ for every vertex $s \in V$ such that $\eps(s)\le \eps'(s)\le (5+\delta)\eps(s)$ with high probability, where $\eps(s)$ is the min-eccentricity of vertex $s$ in $G$. 
\end{theorem}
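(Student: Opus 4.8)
The plan is to first reduce, by a geometric search over the answer, to a one‑sided promise problem. Fix a grid $t_0<t_1<\dots$ with $t_{i+1}=(1+\delta')t_i$, $\delta'=\Theta(\delta)$, covering the range of finite min‑distances (which is $\mathrm{poly}(n)$ since weights are); there are $O(\delta^{-1}\log n)$ grid points. For each $t_i$ I want a test $\mathsf{Far}(v,t_i)\in\{0,1\}$ with the guarantee that $\eps(v)<t_i$ forces $\mathsf{Far}(v,t_i)=0$, while $\eps(v)\ge c\,t_i$ (for a fixed constant $c$ slightly below $5$) forces $\mathsf{Far}(v,t_i)=1$. Then setting $\eps'(v)$ to $c$ times the largest grid point on which the test fired gives $\eps(v)\le\eps'(v)\le c(1+\delta')\eps(v)\le(5+\delta)\eps(v)$ once $\delta'$ is small enough. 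So it suffices to implement each test in $\tilde O(m\sqrt n)$ time (the $1/\delta$ in the bound then comes from the number of tests).

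\textbf{The test: sampling and the easy certificates.} For a fixed $t$, sample $S\subseteq V$ uniformly with $|S|=\Theta(\sqrt n\log n)$, run Dijkstra both from and into every $s\in S$ ($\tilde O(m\sqrt n)$ total), and from the resulting distances compute the exact value $\eps(s)=\max_u\min(d(s,u),d(u,s))$ and an exact farthest witness $w_s$ for each $s\in S$. Declare $\mathsf{Far}(v,t)=1$ if (i) some $s\in S$ has $\min(d(v,s),d(s,v))\ge t$, or (ii) some $s\in S$ has $d(v,s)<t$, $d(s,v)<t$ and $\eps(s)\ge 3t$. Rule (i) is sound trivially. Rule (ii) is sound because $d(v,w_s)\ge d(s,w_s)-d(s,v)>3t-t=2t$ (from $d(s,w_s)\le d(s,v)+d(v,w_s)$) and symmetrically $d(w_s,v)\ge d(w_s,s)-d(v,s)>2t$, so $\eps(v)>2t\ge t$. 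Note that to use a sampled vertex $s$ one really needs $s$ to be close to $v$ in \emph{both} directions — this is forced by the fact that $d_{\min}$ violates the triangle inequality, so one‑directional closeness alone yields no lower bound on distances out of or into $v$.

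\textbf{Completeness and the residual case.} Suppose $\eps(v)\ge c\,t$ with witness $w$, so $d(v,w),d(w,v)\ge c\,t$. I would case on ball sizes. If the "min‑far'' set $\{u:\min(d(v,u),d(u,v))\ge t\}$ has $\Theta(\sqrt n\log n)$ vertices, whp $S$ meets it and (i) fires. If the "round‑trip ball'' $\{u:d(v,u)<t\text{ and }d(u,v)<t\}$ has $\Theta(\sqrt n\log n)$ vertices, whp $S$ meets it, and any such $s$ satisfies $d(s,w)\ge d(v,w)-d(v,s)\ge (c-1)t\ge 3t$ and $d(w,s)\ge (c-1)t\ge 3t$, so $\eps(s)\ge 3t$ and (ii) fires. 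The remaining, hard case is when both balls are small: then $V$ is, up to a small remainder, partitioned into $B^{\mathrm{out}}(v,t)$ and $B^{\mathrm{in}}(v,t)$ with negligible overlap, i.e.\ $v$ sees a "near‑bipartite'' neighborhood and has essentially no round‑trip‑close vertex at all, yet hides a far witness $w$. Here I would add a third certificate built from a \emph{two‑step} route: find $s_1\in S$ out‑close to $v$ and $s_2\in S$ in‑close to $v$, and certify $\eps(v)\ge t$ from a vertex $u$ with $d(u,s_1)\ge t+d(v,s_1)$ and $d(s_2,u)\ge t+d(s_2,v)$ (sound since $d(u,v)\ge d(u,s_1)-d(v,s_1)\ge t$ and $d(v,u)\ge d(s_2,u)-d(s_2,v)\ge t$). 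Making this complete and keeping the per‑vertex cost to $\tilde O(m/\sqrt n+\sqrt n)$ requires organizing the search along the sampled in/out shortest‑path trees, and, I expect, an $O(1)$‑level recursion on shrinking subgraphs in the spirit of the Min‑Diameter tradeoff (Theorem~1.1); the two‑step routing $v\rightsquigarrow s_1$, $s_2\rightsquigarrow w$ is precisely what inflates the constant from the $\approx 3$ attainable in the unweighted case — where BFS layers are dense enough that one intermediate vertex suffices — to $5$ in the weighted case.

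\textbf{Main obstacle.} The hardest step is the completeness of the two‑step certificate in the residual near‑bipartite case while staying within the time budget. The failure of the triangle inequality means no single sampled vertex is usable for $v$ there, so everything hinges on chaining one‑directional estimates through two sampled vertices and showing that the candidate witness $u$ can always be produced — either directly, or after recursing — and it is this analysis that simultaneously fixes the $\sqrt n$ sampling threshold and pins the final approximation ratio at $5+\delta$.
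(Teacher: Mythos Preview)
Your reduction to a per-threshold test and your certificates (i) and (ii) are fine, but the proof has a real gap: the ``residual near-bipartite'' case is not actually handled. Two concrete issues. First, completeness of the two-step certificate is not established. You want the witness $w$ (with $d(v,w),d(w,v)\ge ct$) to satisfy $d(w,s_1)\ge t+d(v,s_1)$, but from $d(v,s_1)<t$ and $d(w,v)\ge ct$ you only get a lower bound on $d(s_1,w)$, not on $d(w,s_1)$; symmetrically for $s_2$. So it is not clear that any $u$ meets both conditions. Second, even granting completeness, you have not given an implementation within budget: as stated, for each $v$ you would scan all $u$ against a pair $(s_1,s_2)$, which is already $\Theta(n)$ per vertex, i.e.\ $\Theta(n^2)$ total; the appeal to ``an $O(1)$-level recursion in the spirit of the Min-Diameter tradeoff'' is a hope, not an argument, and that tradeoff augments graphs to \emph{upper}-bound distances, which is the wrong direction for certifying large eccentricities.

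The paper's proof is structurally quite different and sidesteps this obstacle. Rather than a per-vertex local certificate, it reuses the global machinery of the Min-Radius algorithm: from the sampled set $W$ it builds the undirected far graph $G_{far}$ and the directed close graph $G_{close}$, contracts SCCs of $G_{close}$ into \emph{supercomponents} $W_1,\dots,W_k$ forming a DAG order, and then slots every vertex into sets $C_i$ or $B_i$ according to which supercomponents it is $2\rho$-before/after. The low-eccentricity candidates are shown to lie in $C$, and for each $v\in C_i$ one runs Dijkstra only on the induced subgraph $G_i$ over an $O(1)$ window of consecutive supercomponents; this is where the $\tilde O(m\sqrt n)$ bound comes from. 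The one genuinely new step over Min-Radius --- and the source of the $5$ rather than $3$ --- is that when the ``Additional Dijkstra'' vertices $a,b$ on the $\le 5\rho$ paths turn out to have $\eps(a),\eps(b)\le 3\rho$, the algorithm cannot stop; instead it certifies $\eps(s)\le 5\rho$ for every $s\in S_b^\rho\cap T_a^\rho$ via a triangle-inequality lemma. Your near-bipartite case is precisely the situation this supercomponent ordering is designed to tame, and your sketch does not reproduce that structure.
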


\begin{theorem}
For any constant $\delta$ with $1>\delta>0$, there is an $\tilde{O}(m\sqrt n/\delta^2)$ time randomized algorithm, that given a directed unweighted graph $G=(V,E)$, can output an estimate $\eps'(s)$ for every vertex $s \in V$ such that $\eps(s)\le \eps'(s)\le (3+\delta)\eps(s)$ with high probability, where $\eps(s)$ is the min-eccentricity of the vertex $s$ in $G$. 
\end{theorem}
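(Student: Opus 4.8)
The plan is to combine a geometric guess of the target value with a hub‑sampling scheme, exploiting that in an unweighted graph every finite min‑eccentricity is an integer in $[0,n)$. I would reduce to the following fixed‑scale subproblem: for a guess $g$ ranging over the $O(\log n/\delta)$ powers of $(1+\delta/4)$ up to $n$, produce for each source $s$ a value that is always $\ge \eps(s)$ and is $\le (3+O(\delta))g$ whenever $\eps(s)\le g$. Reporting, for each $s$, the smallest $g$ at which the subproblem ``succeeds'' then gives the theorem, and the number of guesses contributes one factor $1/\delta$ to the running time. Throughout, every value we output will be an explicit upper bound on $\eps(s)$ obtained from an actual walk, so the ``$\ge \eps(s)$'' side is automatic and the whole content is the $(3+O(\delta))$ upper bound.

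For the hub part, sample a set $S$ of $\tilde{O}(\sqrt n)$ vertices and add to it all $O(\sqrt n)$ vertices of out‑ or in‑degree exceeding $m/\sqrt n$; run in‑ and out‑BFS from every $x\in S$ and record the exact value $\eps(x)$, at total cost $\tilde{O}(m\sqrt n)$. For $x\in S$ set $f_x(a,b):=\max_v \min\!\big(a+d(x,v),\,b+d(v,x)\big)$; since $d(s,x)+d(x,v)\ge d(s,v)$ and $d(x,s)+d(v,x)\ge d(v,s)$, the quantity $f_x(d(s,x),d(x,s))$ is an upper bound on $\eps(s)$. The key observation is that if some $x\in S$ lies within distance $\eps(s)$ of $s$ in \emph{both} directions, then $f_x(d(s,x),d(x,s))\le f_x(\eps(s),\eps(s))=\eps(s)+\eps(x)$, and a short argument (split each $v$ by which of $d(s,v),d(v,s)$ is at most $\eps(s)$) gives $\eps(x)\le \eps(s)+\max(d(s,x),d(x,s))\le 2\eps(s)$, so this estimate is at most $3\eps(s)$. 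Each $f_x$ admits $\tilde{O}(1)$‑time evaluation after $\tilde{O}(n)$ preprocessing — the condition $f_x(a,b)>t$ is a two‑dimensional dominance query among the pairs $(d(x,v),d(v,x))$, so one binary‑searches on $t$ — hence $\min_{x\in S} f_x(d(s,x),d(x,s))$ is computed for all $s$ in $\tilde{O}(n\sqrt n)\le\tilde{O}(m\sqrt n)$ total. A uniformly random $S$ of size $\tilde{O}(\sqrt n)$ hits $B_s:=\{x:d(s,x)\le\eps(s)\text{ and }d(x,s)\le\eps(s)\}$ with high probability whenever $|B_s|\ge\sqrt n$, so all such sources are handled, and without reference to the guess $g$.

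It remains to handle sources $s$ with $|B_s|<\sqrt n$. For these I would run truncated in‑ and out‑searches from $s$ at scale $g$ — BFS that halts once it has scanned too many edges (or reached radius $\Theta(g)$) and that ``short‑circuits'' through any encountered member of $S$, in particular through every high‑degree vertex, via its precomputed BFS trees. Because distances are integers, one can additionally guess, up to a $(1+\delta)$ factor, an internal radius governing how far each search must travel before a sampled hub, or the search itself, certifies the estimate; this second geometric search is the source of the extra $1/\delta$ factor and is exactly where unweightedness is used (beyond the $O(\log n/\delta)$ guesses, short integral searches stand in for bounded Dijkstra calls).

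The main obstacle is precisely this last case: charging the per‑source truncated searches to $\tilde{O}(m/\sqrt n)$ amortized while still certifying the factor‑$3$ bound. The worst situation is a source whose radius‑$g$ out‑ball and in‑ball are both large but whose intersection $B_s$ is small — then no single sampled hub even yields a \emph{finite} estimate for some pairs $(s,v)$, so the argument must combine the direct search from $s$ with the global in/out BFS trees rooted at $S$ rather than lean on one hub, and one must argue that the direct search is cheap exactly because the relevant ball is small or is pierced by a high‑degree (hence sampled) vertex. Getting this interplay right, together with the bookkeeping that keeps the running time at $\tilde{O}(m\sqrt n/\delta^2)$, is where essentially all the work lies.
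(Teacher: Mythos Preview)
Your hub‐sampling part is correct as stated, but it is essentially vacuous on exactly the instances that make min‐distance hard. Take the directed path $v_1\to v_2\to\cdots\to v_n$: here $B_s=\{s\}$ for \emph{every} $s$, no vertex has high degree, and for any hub $x\ne s$ one of $d(s,x),d(x,s)$ is infinite, which forces $f_x(d(s,x),d(x,s))=\infty$ (some $v$ on the wrong side of $x$ makes both terms in the $\min$ infinite). So $\min_{x\in S}f_x=\infty$ for every $s\notin S$, and your hitting‐set argument never fires. The same happens on a pair of parallel directed paths with rungs $a_i\leftrightarrow b_i$: every vertex has degree $\le 4$, yet $|B_s|=2$ for all $s$. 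In other words, the ``$|B_s|\ge\sqrt n$'' case that you handle cleanly is not the typical case; on DAG‐like inputs it is the empty case, and the entire problem is pushed into your truncated‐BFS step.

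That step is not bookkeeping; it is the whole algorithm, and the sketch you give does not work. Stopping a BFS from $s$ once it hits a sampled hub $x$ gives you $d(s,x)$ but typically $d(x,s)=\infty$, so ``short‐circuiting through $x$'' does not certify any min‐distance to vertices on the other side of $s$; you need structural information about which side of $s$ each vertex lies on, which is precisely what a single hub cannot provide when the triangle inequality fails. On the running‐time side, ``halt after $\tilde O(m/\sqrt n)$ edges'' from every source is $\tilde O(m\sqrt n)$ total, but there is no guarantee you hit a sample within that budget: visiting $\sqrt n$ low‐degree vertices can already cost $\sqrt n\cdot(m/\sqrt n)=m$ edges per source. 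The paper's algorithm avoids per‐source searches entirely. It samples $W$ via a recursive $S_v/T_v$ partition so that the induced cells $C_i$ each have size $\tilde O(\sqrt n)$, builds a DAG of ``supercomponents'' from a far/close‐graph analysis, and then runs full Dijkstra only from the $\tilde O(\sqrt n)$ vertices in each $C_i$ inside a constant‐width window of supercomponents; the $(3+\delta)$ factor for unweighted graphs comes from subdividing the $\le 5\rho$ witness paths into pieces of length $\le\delta\rho/2$ (possible because edge weights are $1$), which sharpens the eccentricity bound in the ``additional Dijkstra'' step from $5\rho$ to $(3+\delta)\rho$ and is the source of the second $1/\delta$. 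None of this structure is present in your outline.
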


\subsection{Our Techniques}\label{sec:tech}
To obtain our results, we develop powerful new techniques which we outline below.

\subparagraph*{Partial search graphs.}
The idea of partial search graphs is used in the algorithms of \cite{AbboudWW16} for Min-Radius and Min-Diameter on DAGs. These algorithms use the following high-level framework: perform Dijkstra's algorithm from some vertices and then perform a
\emph{partial} Dijkstra's algorithm from \emph{every} vertex. The partial search from a vertex $v$ is with respect to a carefully defined partial search graph $G_v\subset G$. The crux of the analysis for the algorithms on DAGs is to argue that if the executions of Dijkstra's algorithm on the full graph did not find a good estimate for the desired quantity (either min-diameter or min-radius), then the partial search from some vertex $v$ returns a good estimate of the min-eccentricity of $v$, which in turn is a good estimate for the desired quantity. 
In DAGs it is natural to define the partial search graphs $G_v$ by considering a topological ordering of the vertices and letting each $G_v$ be some interval containing $v$ (though defining the exact intervals requires some work). For general graphs it is completely unclear how to even define such intervals since there is no natural notion of an ordering of the vertices, and thus figuring out what the 
 $G_v$'s should be is nontrivial. 
Our approach to overcoming this hurdle is to carefully define a DAG-like structure in general graphs. Such a structure may be of independent interest.

\subparagraph{Defining a DAG-like structure in general graphs.}
It would be ideal to directly reduce the problem on general graphs to the problem on DAGs, however it is very unclear how to do this. Instead, we recognize that it suffices to define a \emph{DAG-like} structure in general graphs. As a first step, we use the following idea. Suppose we have performed Dijkstra's algorithm from a vertex $v$. We let $S_v=\{u:d(u,v)< d(v,u)\}$ and we let $T_v=\{u:d(u,v)> d(v,u)\}$\footnote{$u$'s with $d(u,v)=d(v,u)$ are added to either $S_v$ or $T_v$ as specified in the formal definition later}. Then, we partially order the vertices so that the vertices in $S_v$ appear before $v$ and those in $T_v$ appear after $v$. We note that this partial ordering is ``DAG-like" because it is consistent with the topological ordering of a DAG; that is, if we apply this partition into $S_v$ and $T_v$ to a DAG then there trivially exists a topological ordering such that every vertex in $S_v$ appears before $v$ and every vertex in $T_v$ appears after $v$. After partitioning into $S_v$ and $T_v$, we recursively partition each set to create a more precise partial ordering. Importantly, we show that by recursively sampling vertices randomly, we can guarantee that our partitioning is approximately balanced which is crucial for the runtime analysis.
 The obtained partial ordering is the starting point for all of our algorithms. 

\subparagraph*{Min-Diameter: graph augmentation.} 
The Min-Diameter algorithm on DAGs from \cite{AbboudWW16} relies heavily on the following key property of DAGs. Consider a topological ordering and the graphs induced by the first and second halves of the ordering; which are defined with respect to the middle vertex in the ordering. For all pairs of vertices in the same half of the ordering, their min-distance in the graph induced by this half is the same as their min-distance in the full graph. 
As previously mentioned, if we sample a vertex $v$, we can make sure that $S_v$ and $T_v$ are approximately balanced, so that we can think of $S_v$ and $T_v$ as corresponding to the first and second half of a DAG topological ordering, respectively.
However it is unclear how to obtain a property of $S_v$ and $T_v$ analogous to the above key property of DAGs. In particular, 
the min-distance between a pair of vertices in the graph induced by $S_v$ could be wildly different from their min-distance in the full graph, since paths whose endpoints are in $S_v$ can contain vertices outside of $S_v$. To overcome this hurdle, 
we \emph{augment} the graph induced by $S_v$ and the graph induced by $T_v$ by carefully adding edges so that distances within these augmented graphs approximate distances in the original graph. 

\subparagraph*{Min-Radius: refined DAG-like structure}

Our Min-Radius algorithm is much more delicate than our Min-Diameter algorithm due to the fact that for Min-Radius we care about small distances instead of large distances. In particular, the graph augmentation idea from our Min-Diameter algorithm does not help for Min-Radius because although the augmentations do not distort large distances much, they heavily distort small distances. Furthermore, the previously mentioned DAG-like structure for general graphs does not suffice for Min-Radius. However we use it as a starting point to define a more refined DAG-like partial ordering.
Most of our algorithm is concerned with precisely arranging vertices in this partial ordering.
Specifically, we structure the partial ordering to satisfy \emph{roughly} the following property: for every pair of vertices $u$, $v$ such that $u$ appears before $v$ in the partial ordering, $d(v,u)$ is large while $d(u,v)$ is small. 

\subsection{Notation}
Given a graph $G=(V,E)$, $n=|V|$ and $m=|E|$. Graphs are directed and have non-negative weights polynomial in $n$ unless otherwise specified. For any pair of vertices $u$ and $v$, the \emph{distance from $u$ to $v$} $d(u,v)$ is the length of the shortest directed path from $u$ to $v$. When the context is not clear, we write $d_G(u,v)$ to specify the graph $G$. The \emph{min-distance} between a pair of vertices $u$ and $v$ is $d_{min}(u,v)=\min\{d(u,v),d(v,u)\}$. The \emph{min-diameter} of a graph is $\max_{u,v\in V} d_{min}(u,v)$. The \emph{min-radius} of a graph is $\min_{v\in V} \max_{u\in V} d_{min}(u,v)$. For any vertex $v$, the \emph{min-eccentricity} of $v$ is $\eps(v)=\max_{u\in V}d_{min}(u,v)$. When the context is not clear, we say $\eps_G(v)$ to specify the graph $G$. Note that we do not use the \emph{min} subscript to denote the min-eccentricity of a vertex. For an algorithm with input size $n$ we use \emph{with high probability} to denote the probability $>1-1/n^c$ for all constants $c$. We say some quantity is $poly(n)$ to mean it is $O(n^c)$ for some fixed constant $c$. We use $\tO$ notation to hide polylogarithmic factors.

\subsection{Organization}
In Section~\ref{sec:overview} we give an overview of all of our algorithms, in Section~\ref{section:prelim} we describe a graph partitioning procedure that begins all of our algorithms, in Section~\ref{section:diam} we describe our Min-Diameter algorithms, in Section~\ref{section:rad} we describe our Min-Radius algorithm, and in Section~\ref{section:ecc} we describe our Min-Eccentricities algorithm.

\section{Overview of Algorithms}\label{sec:overview}

We use the algorithms from \cite{AbboudWW16} for Min-Diameter and Min-Radius on DAGs as inspiration. For each problem, we first outline the DAG algorithm and then provide intuition for how to apply these ideas to general graphs. 


\subsection{Min-Diameter}
\subsubsection*{Algorithm for DAGs}
We begin by outlining the $\tO(n+m)$ time 2-approximation algorithm for Min-Diameter on DAGs from \cite{AbboudWW16}. Consider a topological ordering of the vertices and perform Dijkstra's algorithm from the middle vertex $v$. Then recurse on the graphs induced by the vertices in the first half (before $v$) and in the second half (after $v$). A key observation in the analysis is that if the true endpoints $s^*$ and $t^*$ of the min-diameter fall on opposite sides of $v$ in the ordering, then the min-eccentricity $\eps(v)$ of $v$ is a 2-approximation for the min-diameter $D$. This is because if $\eps(v)<D/2$ and $s^*$ and $t^*$ fall on opposite sides of $v$ in the ordering, then $d(s^*,v)<D/2$ and $d(v,t^*)<D/2$ so $d(s^*,t^*)<D$, a contradiction. So, suppose (without loss of generality) that $s^*$ and $t^*$ both fall before $v$ in the ordering. Since the graph is a DAG, every path between $s^*$ and $t^*$ only uses vertices before $v$ in the ordering. Thus, the min-distance between $s^*$ and $t^*$ in the graph induced by the first half of the graph is still $D$.

\subsubsection*{Algorithm for general graphs}
We now outline a precursor to our Min-Diameter algorithm for general graphs that mimics the algorithm for DAGs. This $\tO(n+m)$ time algorithm does not achieve a constant approximation factor, however it provides intuition for our constant-factor approximation algorithms. We begin by  performing Dijkstra's algorithm from a vertex $v$ and constructing $S_v$ and $T_v$ as defined in the previous section. 
Analogously to the DAG algorithm if the true min-diameter endpoints $s^*$ and $t^*$ fall into different sets $S_v$, $T_v$ then the min-eccentricity $\eps(v)$ is a 2-approximation. This is because if $\eps(v)<D/2$, $s^*\in S_v$, and $t^*\in T_v$ then $d(s^*,v)<D/2$ and $d(v,t^*)<D/2$ so $d(s^*,t^*)<D$, a contradiction. However, unlike the DAG algorithm, we cannot simply recurse independently on the graphs induced by $S_v$ and $T_v$ since the shortest path between a pair of vertices in $S_v$ may not be completely contained in $S_v$ (and analogously for $T_v$).

To overcome this hurdle, before recursing we first augment the graphs induced by $S_v$ and $T_v$ by carefully adding edges so that distances within these augmented graphs approximate distances in the original graph. 
Specifically, for every vertex $u\in S_v$, we add the directed edge $(u,v)$ with weight 0 and the directed edge $(v,u)$ with weight $\max\{0,d(v,u)-\eps(v)\}$. This choice of edges allows us to argue that 
the distances within the augmented graphs are approximations of the distances in $G$ up to an additive error of $2\eps(v)$. Then, by returning the maximum of $\eps(v)$ and the min-diameter estimates from recursing on the augmented graphs, we get an approximation guarantee, which turns out to be a logarithmic factor. Intuitively, the approximation factor is not constant because the recursion causes the distance distortion to compound at each level of recursion.

To reduce the approximation factor to a constant, we would like to decrease the number of recursion levels. To achieve this, we initially partition the graph into more than just two parts $S_v$ and $T_v$, by sampling more vertices. 
For our $\tO(m\sqrt{n})$ time 3-approximation, 
we perform a full Dijkstra's algorithm from $\tO(\sqrt{n})$ vertices 
to define an ordered partition of the vertices into $\tO(\sqrt{n})$ parts of $\tO(\sqrt{n})$ vertices each. 
Then we apply the above idea of adding weighted edges within each part, however we must refine the definition of the graph augmentation to take into account \emph{all} of the $\tO(\sqrt{n})$ vertices we initially perform Dijkstra's algorithm from, instead of just $v$. Finally we use brute force (without recursion) on each part in the partition by running an exact all-pairs shortest paths algorithm. 

To achieve our time-accuracy trade-off algorithm, we carefully combine ideas from the logarithmic factor approximation and the 3-approximation algorithms. Specifically, we initially perform Dijkstra's algorithm from fewer than $\sqrt{n}$ vertices to define an ordered partition with larger parts than in the 3-approximation. Then we augment the graph induced by each part and carry out a constant number of recursion levels to further partition the graph before applying brute-force. 

\subsection{Min-Radius}
\subsubsection*{Algorithm for DAGs}
We begin by outlining the $\tO(m\sqrt{n})$ time 3-approximation algorithm for Min-Radius on DAGs from \cite{AbboudWW16}, which is very different from and more involved than the Min-Diameter algorithm on DAGs. 
We begin by considering a topological ordering of the vertices and performing Dijkstra's algorithm from a set $W$ of $\tO(\sqrt{n})$ evenly spaced vertices including the first and last vertex. If a vertex $v\in W$ has min-eccentricity at most twice the true min-radius $R$ then we have obtained a 2-approximation. (We do not know $R$ in advance but we repeatedly run the algorithm with different values of $R$ to perform a binary search on $R$.)

Otherwise, we will define intervals in the ordering such that the min-center $c$ cannot be contained in any of these intervals. A key observation is that if there is a pair of vertices $(u,v)$ such that $u$ appears before $v$ in the topological ordering and $d(u,v)>2R$, then the min-center $c$ cannot fall between $u$ and $v$ in the topological ordering. This is because if it did, then $d(u,c)\leq R$ and $d(c,v)\leq R$, so $d(u,v)\leq 2R$, a contradiction.  We define the intervals that cannot contain $c$ as follows: for all $v\in W$ we let $a_v$ be the first vertex in the ordering such that $d(a_v,v)>2R$ (if it exists, otherwise $a_v=v$) and define $b_v$ to be the last vertex in the ordering such that $d(v,b_v)>2R$ (if it exists, otherwise $b_v=v$). Then, the key observation implies that $c$ cannot fall in the interval $[a_v,b_v]$ in the ordering. Now, we have a set of possibly overlapping intervals that cannot contain $c$. We take the union of these intervals to get a set of disjoint intervals that cannot contain $c$. 

Every vertex $u$ that does not appear in such an interval, falls between two consecutive intervals $I_u$ and $I_u'$. We define the partial search graph of $u$ to be the graph induced by the set of vertices in $I_u$ or $I_u'$ or between $I_u$ and $I_u'$. 
After performing the partial searches, 
the algorithm returns 3 times the minimum min-radius of all partial search graph.
Next we give the idea of the analysis, which demystifies the factor of 3 in the returned value.

We claim that if the min-eccentricity of a vertex with respect to its partial search graph is at most $R$, then its min-eccentricity with respect to the full graph is at most $3R$,
and the min-eccentricity of the true min-center with respect to its partial search graph is at most $R$
(because for any path in a DAG whose starting and ending points are in a certain interval, every vertex in the path is in that interval). Thus, assuming the claim, $3R$ is a 3-approximation for the min-radius. We now outline the proof of the claim. Let $u$ be the min-center with the minimum min-radius $R$ of all partial search graphs. Let $v\in W$ such that $a_v$ is the first vertex (in the topological order) of $I_u$, then $v\in I_u$ and
$d(v,u)\leq R$. Furthermore, by the definition of $a_v$, all vertices that appear before the beginning of the interval $I_u$ have distance at most $2R$ to $v$, and thus distance at most $3R$ to $u$. A symmetric argument holds for vertices that appear after the end of the interval $I_u'$. Hence the min-eccentricity of $u$ with respect to the full graph is at most $3R$.

This algorithm runs in time $O(m\sqrt{n})$ because the vertices of $W$ are evenly spaced so there are no more than $\sqrt{n}$ vertices between each pair of consecutive intervals. This implies that in the partial searches, each edge is only scanned $O(\sqrt{n})$ times. (Furthermore, repeatedly running the algorithm to binary search for $R$ adds a logarithmic factor to the runtime.)
    
\subsubsection*{Algorithm for general graphs}
We now give a high-level outline of our $\tO(m\sqrt{n})$ time 3-approximation algorithm for Min-Radius. This algorithm is much more delicate than our Min-Diameter algorithm, hence more of the details are deferred to the full description. We begin by running Dijkstra's algorithm from a set $W$ of $\tO(\sqrt{n})$ randomly sampled vertices to recursively partition the vertices into $S_v$ and $T_v$ as outlined in Section \ref{sec:tech}. This defines an initial DAG-like structure, however our analysis requires constructing a much more refined DAG-like structure. 

Perhaps counter-intuitively, it makes sense to place vertices that are \emph{far} from each other in the graph \emph{close} to each other in the DAG-like structure. The reason for this is illuminated by the Min-Radius algorithm on DAGs, in which we find pairs of vertices $u,v$ that are far from each other and apply the key observation that the min-center cannot be between $u$ and $v$ in the topological ordering. Intuitively, it is as if we collapse the interval between $u$ and $v$ in the DAG since we do not have to search within this interval for the min-center. An analogous key observation is true for general graphs: if there is a pair of vertices $(u,v)$ with $d_{min}(u,v)>2R$, then either $c\in S_u\cap S_v$ or $c\in T_u\cap T_v$. This is because if $c\in T_u\cap S_v$, then $d(u,c)\leq R$ and $d(c,v)\leq R$ so $d(u,v)\leq 2R$, a contradiction; the last case $c\in S_u \cap T_v$ is symmetric. In our algorithm for general graphs, we ensure that far vertices are near each other in the DAG-like structure by doing the following: we let the \emph{far graph} $G_{far}$ be an undirected graph on $V$ with an edge between $u\in W$ and $v\in V$ if $d_{min}(u,v)>2R$. All vertices in $W$ that are in the same connected component in $G_{far}$ will be grouped in the DAG-like structure. We let $F_i$ be the set of vertices in $W$ that are in the $i^{th}$ connected component of $G_{far}$.

To construct the DAG-like structure, we show that precisely chosen groups of $F_i$s can be merged to create \emph{supercomponents}, which constitute a DAG-like structure in the following sense: there is an ordering of supercomponents such that for every pair of vertices $u,v\in W$ where the supercomponent containing $u$ appears before that containing $v$, $d(u,v)$ is small and $d(v,u)$ is large. Specifically, we define the \emph{close graph} $H$ whose vertex set is the set of $F_i$s. We add a directed edge between a pair of vertices in $H$ if there exists a short path (length $\leq 5R$) between the corresponding $F_i$s. Then we merge all $F_i$s that appear in the same strongly connected component of $H$ into a supercomponent. This contraction of strongly connected components of $H$ results in a DAG, which defines the ordering of the supercomponents.

Now that we have arranged the vertices in $W$ into a DAG-like structure, we would like to fit every vertex in the graph into this structure. Based on the precise way that we have defined the supercomponents, we can use an intricate argument to show \emph{roughly} the following property: for every vertex $v$ there exists an $i$ such that for every vertex $u\in W$ in the first $i$ supercomponents, $d(u,v)$ is small and for every vertex $u\in W$ in the remaining supercomponents, $d(v,u)$ is small.

After fitting every vertex into the refined DAG-like ordering, we can define each partial search graph to be an interval in the ordering that is large enough to contain several supercomponents.
In the algorithm for DAGs, there were two important properties of the partial search graphs: (1) the min-eccentricity of the true min-center with respect to its partial search graph is at most $R$, and (2) if the min-eccentricity of a vertex with respect to its partial search graph is at most $R$ then its min-eccentricity with respect to the full graph is at most $3R$.
We show that due to the precise structure of the supercomponents, refinements of properties (1) and (2) are also true for general graphs. 

Intuitively, property (1) is roughly true because for every pair of vertices $u,v\in W$ such that $u$'s supercomponent appears before $v$'s in the ordering, $d(v,u)> 5R$, since otherwise this pair of supercomponents would be in the same strongly connected component of $H$ and would have been merged into a single supercomponent. This implies that paths of length at most $R$ to or from the min-center cannot stray beyond its partial search graph. Intuitively, property (2) is roughly true because for every pair of vertices $u,v\in W$ such that $u$'s supercomponents appears before $v$'s in the ordering, $d(u,v)\leq 2R$ because otherwise, $u$ and $v$ would be in the same component of $G_{far}$ and thus be in the same supercomponent. Thus, like the argument for DAGs, for all $u$, all vertices that appear before $u$'s partial search graph $G_u$ have distance at most $2R$ to each supercomponent in $G_u$, and thus distance at most $3R$ to $u$. A symmetric argument holds for vertices after $u$ in the ordering. 




\subsection{Min-Eccentricities}

Our Min-Eccentricities algorithm is a modification of our Min-Radius algorithm. In our Min-Radius algorithm, we identify a vertex whose min-eccentricity is at most about $3R$, where $R$ is the true min-radius. In our Min-Eccentricities algorithm, we show that with some extra bookkeeping, the algorithm can identify \emph{all} vertices with min-eccentricity at most about $5\rho$ for any $\rho$. We run the algorithm repeatedly, increasing $\rho$ by a factor of $(1+\delta)$ at each execution until we have estimated the min-eccentricity of every vertex.

The major modification of the Min-Radius algorithm here is that if one of the vertices that we run Dijkstra from has min-eccentricity at most $3\rho$, we cannot stop running the algorithm, as we can in the Min-Radius algorithm. Instead, we use this vertex as a tool to find vertices with min-eccentricity at most $5\rho$.



\section{Preliminary Graph Partitioning}\label{section:prelim}

In this section we describe a graph partitioning procedure we use as a first step in our Min-Diameter, Min-Radius, and Min-Eccentricities algorithms. The goal of this partitioning is to define a DAG-like structure in general directed graphs.


\begin{definition}
Assign each vertex a unique ID from $[n]$. For each vertex $v$, let $S_v=\{u\in V: d(u,v)<d(v,u)\lor [d(u,v)=d(v,u)\land ID(u)<ID(v)]\}$. Let $T_v=V\setminus (S_v\cup\{v\})$. 
\end{definition}




The runtime of our algorithms relies on whether the partition into $S_v$ and $T_v$ is \emph{balanced}. Using the observation that if $u\in S_v$, then $v\in T_u$, the following lemma shows that for most vertices, the partition is indeed approximately balanced. 

\begin{lemma}
\label{lem:antisym}
For any graph on $n$ vertices there are more than $\frac{n}{2}$  
vertices $v$ such that $\frac{\abs{S_v}}{8}\le\abs{T_v}\le8\abs{S_v}$. 

More generally, for any $U\subseteq V$, there are more than $\frac{\abs{U}}{2}$  
vertices $v\in U$ such that $\frac{\abs{S_v\cap U}}{8}\le\abs{T_v\cap U}\le8\abs{S_v\cap U}$.
\end{lemma}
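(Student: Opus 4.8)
The plan is to prove the more general statement (the first statement follows by taking $U = V$), using a double-counting / extremal argument on the ordered pairs $(u,v)$ with $u \in S_v$. The key structural fact, noted just before the lemma, is the \emph{antisymmetry} property: for $u \neq v$, exactly one of $u \in S_v$ or $v \in S_u$ holds. Indeed, if $d(u,v) < d(v,u)$ then $u \in S_v$ and $v \notin S_u$; if $d(u,v) > d(v,u)$ the reverse; and if $d(u,v) = d(v,u)$ the tie is broken by ID, again putting $u$ in exactly one of the two sets. Equivalently, $u \in S_v \iff v \in T_u$, restricted to $U$ as well: $u \in S_v \cap U \iff v \in T_u \cap U$ for $u,v \in U$.

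First I would restrict attention to the induced problem on $U$: write $s_v = |S_v \cap U|$ and $t_v = |T_v \cap U|$ for $v \in U$, and note $s_v + t_v = |U| - 1$ for every $v \in U$. Call a vertex $v \in U$ \emph{bad} if it violates the inequality, i.e. either $t_v < s_v/8$ (so $v$ has ``too many predecessors'') or $t_v > 8 s_v$ (``too many successors''). Let $A = \{v \in U : 8 t_v < s_v\}$ and $B = \{v \in U : 8 s_v < t_v\}$ be the two types of bad vertices; these are disjoint, and I want to show $|A| + |B| < |U|/2$, equivalently $|A| + |B| \le \lceil |U|/2 \rceil - 1$ — actually it suffices to show $|A| + |B| \le |U|/2$ with the count of good vertices being strictly more than $|U|/2$, so I need $|A|+|B| < |U|/2$.

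The main step is a counting argument on the bipartite-like relation. Consider $N := |\{(u,v) \in U \times U : u \in S_v \cap U\}| = \sum_{v \in U} s_v$. By antisymmetry, for each unordered pair $\{u,v\}$ of distinct vertices in $U$ exactly one contributes, so $N = \binom{|U|}{2}$. Now I would bound the contribution of bad vertices. For $v \in A$ we have $s_v > 8 t_v$, so $s_v + t_v = |U|-1$ forces $s_v > \tfrac{8}{9}(|U|-1)$; symmetrically for $v \in B$ we get $t_v > \tfrac{8}{9}(|U|-1)$, i.e. $s_v < \tfrac{1}{9}(|U|-1)$. The idea is that vertices in $A$ each ``receive'' more than $\tfrac{8}{9}(|U|-1)$ incoming relations, while each vertex $u \in U$ can ``send'' at most $|U|-1$ outgoing relations. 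Counting edges between $A$ and the rest: $\sum_{v\in A} s_v$ counts pairs $(u,v)$ with $v\in A$, $u\in S_v\cap U$; these pairs are all distinct, but the number of pairs with first coordinate in a fixed set is at most (that set's size)$\times(|U|-1)$, while on the other hand edges internal to $A$ are counted once. I would set up the inequality $\sum_{v \in A} s_v \le |A|(|U|-1)$ trivially, which is too weak, so instead I compare $A$ against $B \cup (\text{good vertices})$: each $v \in A$ has $s_v > \tfrac{8}{9}(|U|-1)$ incoming edges, and at most $t_v < \tfrac{1}{9}(|U|-1)$ of the $|U|-1$ pairs $\{u,v\}$ go the ``other way,'' so $v$ dominates all but a $\tfrac{1}{9}$-fraction of $U\setminus\{v\}$; symmetrically each $w \in B$ is dominated by all but a $\tfrac19$-fraction of $U \setminus \{w\}$. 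If $|A| \ge 2$, pick two vertices $v_1, v_2 \in A$: the pair $\{v_1,v_2\}$ has $v_1$ dominating $v_2$ and $v_2$ dominating $v_1$ simultaneously only if both ``miss'' each other, but each misses fewer than $\tfrac19 |U|$ vertices, so for this to be consistent across all of $A$ we need... — here the cleanest route is: $\sum_{v\in A} t_v \ge \binom{|A|}{2}$ since internal pairs of $A$ each contribute to some $t_v$, giving $|A| \cdot \tfrac19(|U|-1) > \sum_{v\in A} t_v \ge \binom{|A|}{2}$, hence $|A| < \tfrac29(|U|-1) + 1 < \tfrac29|U| + 1$; symmetrically $|B| < \tfrac29|U|+1$. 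That yields $|A|+|B| < \tfrac49|U| + 2$, which is not quite $< |U|/2$ for all $|U|$; I would tighten by a more careful accounting (e.g. using $\sum_{v\in A} t_v \ge \binom{|A|}{2}$ together with the exact constant $8$ rather than $8/9$, or by also counting the $A$–$B$ cross pairs, which must all go one fixed way and hence $|A|\cdot|B|$ is bounded), to push the bound below $|U|/2$.

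The step I expect to be the main obstacle is exactly this final quantitative tightening: getting the constant in $|A| + |B|$ strictly below $1/2$ rather than merely below some constant. The slack to exploit is that (i) pairs internal to $A$ are ``wasted'' from $A$'s perspective (a vertex in $A$ wants few successors but its $A$-neighbors force successors), (ii) similarly for $B$, and (iii) every cross pair between $A$ and $B$ must be oriented from the $B$-vertex's predecessor side, i.e. contributes to $s_v$ for $v \in B$ and $t_u$ for $u \in A$, simultaneously straining both budgets; combining $\binom{|A|}{2} + |A||B| \le \sum_{v\in A} t_v < \tfrac19 |A|(|U|-1)$ and the symmetric bound for $B$ should give $|A|-1+|B| < \tfrac29(|U|-1)$ and likewise, from which $|A|+|B| < \tfrac29 |U| + O(1)$, comfortably below $|U|/2$. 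I would finish by recording that the good vertices number more than $|U| - |U|/2 = |U|/2$, and invoking the $U = V$ case for the first assertion.
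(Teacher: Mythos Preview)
Your approach is essentially the paper's: both exploit the antisymmetry (tournament) structure on $U$ via the same counting. The paper packages it as a skew-symmetric $\pm1$ matrix $M$ indexed by $U$, but unwinding the matrix language gives exactly your inequality $\binom{|A|}{2}\le\sum_{v\in A}t_v<\tfrac19|A|(|U|-1)$ and its $B$-analogue.

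The one genuine error is in your final tightening: the claim that every $A$--$B$ cross pair has the $B$-vertex lying in $T_u$ for the $A$-vertex $u$ is false---nothing forces $v\in B$ to lie in $T_u$ rather than $S_u$---so $\binom{|A|}{2}+|A||B|\le\sum_{v\in A}t_v$ need not hold and the route to $|A|+|B|<\tfrac{2}{9}|U|+O(1)$ collapses. You don't need that step, though. The fix, which is exactly what the paper does, is to bound $|A|$ and $|B|$ \emph{separately} by $|U|/4$ rather than bounding their sum jointly. Your own inequality already gives $|A|<\tfrac{2}{9}(|U|-1)+1$, which is below $|U|/4$ once $|U|$ is past a small constant. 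The paper's organization of the same count assumes for contradiction that (say) $|B|\ge|U|/4$ and tallies ones in those columns of $M$ two ways: at least $\tfrac{2}{9}|U|^2$ from the condition $t_v>\tfrac{8}{9}|U|$, but at most $\binom{|U|/4}{2}+\tfrac{3}{4}|U|\cdot\tfrac{1}{4}|U|\le\tfrac{7}{32}|U|^2$ since the square block $M_{B,B}$ is skew-symmetric and the rest has only $\tfrac{3}{16}|U|^2$ entries. As $\tfrac{7}{32}<\tfrac{2}{9}$ this is a contradiction, so $|A|,|B|<|U|/4$ and the balanced vertices number more than $|U|/2$.
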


\begin{proof}
Since the first statement is a special case of the second statement with $U=V$, we prove the more general statement. Let $\abs{U}=k$.
Let $M$ be a $k\times k$ matrix indexed by the vertices in $U$ where $M_{u,v}=-1$ if $u\in S_v\cap U$, $M_{u,v}=1$ if $u\in T_v\cap U$, and $M_{u,u}=0$ for $u\in U$. Note that $M$ is skew-symmetric, i.e., $M_{u,v}=-M_{v,u}$ for all $u,v$. For any $A,B\subseteq U$, let $M_B$ be the $k\times\abs{B}$ submatrix consisting of the columns indexed by $B$, and let $M_{A,B}$ the $\abs{A}\times\abs{B}$ submatrix of $M_B$ consisting of its rows indexed by $A$. 

Suppose for contradiction there is a set $C\subset U$ of $\frac{k}{4}$ vertices $v$ such that $\abs{T_v\cap U}>8\abs{S_v\cap U}$. Then $M_C$ contains at least $\frac 89 k\cdot\frac k4=\frac 29 k^2$ ones.

The $\frac k4\times\frac k4$ submatrix $M_{C,C}$ is also skew-symmetric, so at most half of its entries are ones, i.e., $M_{C,C}$ contains at most $\frac{k^2}{32}$ ones. Letting $\bar{C}=U\setminus C$, we see that $M_{\bar{C},C}$ has $\frac{3}{4}k \times \frac{k}{4} = \frac{3}{16} k^2$ entries, and hence at most $\frac{3}{16} k^2$ ones. In total, $M_C$ contains at most $\frac{7}{32} k^2<\frac{2}{9} k^2$ ones, contradiction. 

Therefore the number of vertices $v\in U$ such that $\abs{T_v\cap U}>8\abs{S_v\cap U}$ is less than $\frac{k}{4}$, and symmetrically the number of vertices $v\in U$ such that $\abs{T_v\cap U}<\frac{\abs{S_v\cap U}}{8}$ is less than $\frac{k}{4}$. Hence more than half of the vertices $v\in U$ have that $\frac{\abs{S_v\cap U}}{8}<\abs{T_v\cap U}<8\abs{S_v\cap U}$.
\end{proof}


Next, we describe how we use Lemma~\ref{lem:antisym}  to recursively construct a balanced partition of the vertices into a given number of of sets.

\begin{lemma}
\label{lem:sampling}
Given a graph $G$ with $n$ vertices and a constant $c>0$, in $\tO(mn^{1-c})$ time we can partition $V$ into disjoint sets $W,V_1$, $V_2$,\ldots,$V_{q+1}$, where $q=\abs{W}=n^{1-c}$, such that with high probability:

\begin{enumerate}
    \item for all $i$, $\abs{V_i}=\Theta(\frac nq)$;
    \item for all $i\ne j$, there exists a vertex $w\in W$ such that either $V_i \subseteq S_{w}$,$V_j \subseteq T_{w}$, or $V_i \subseteq T_{w}$,$V_j \subseteq S_{w}$;
    \item for all $U\subseteq W$, let $V_U=\displaystyle{\left(\bigcap_{w\in U}S_w\right)\bigcap\left(\bigcap_{w\in W\setminus U}T_w\right)}$, then $V_U\subseteq V_i$ for some $i\in[q+1]$.
    
\end{enumerate}


\end{lemma}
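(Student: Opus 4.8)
The plan is to build the partition by a recursive sampling procedure that uses Lemma~\ref{lem:antisym} to keep the pieces balanced. Start with $U_0 = V$. At each stage we have a collection of ``active'' blocks, each being a subset of $V$, that together partition $V$; initially there is one block, $V$ itself. While some block $B$ has $|B| > n/q$, we process $B$ as follows: by Lemma~\ref{lem:antisym} applied to $U = B$, more than half the vertices of $B$ are ``balanced'' inside $B$, meaning $\frac{|S_v \cap B|}{8} \le |T_v \cap B| \le 8|S_v \cap B|$. Sample $O(\log n)$ vertices of $B$ uniformly at random; with high probability at least one sampled vertex $w$ is balanced in $B$. Add $w$ to $W$, and split $B$ into $B \cap S_w$ and $B \cap T_w$ (the vertex $w$ itself is placed into $W$ and removed from the blocks). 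Since $w$ is balanced in $B$, each of the two new blocks has size at most $\frac{8}{9}|B|$, so the recursion depth is $O(\log n)$; and because we stop splitting a block once its size drops to $n/q$ or below, each final block $V_i$ has size $\Theta(n/q)$, giving property~1. For the runtime: one Dijkstra (both in $G$ and in the reverse graph) from each $w \in W$ costs $\tO(m)$, and we will argue $|W| = \tO(n^{1-c}) = \tO(q)$, so the total is $\tO(mn^{1-c})$; the $O(\log n)$-fold oversampling only affects which vertices we designate as the splitter, but we should be careful to charge only the chosen $w$ (not all sampled candidates) to $W$, so that $|W|$ equals the number of internal nodes of the recursion tree, which is at most the number of leaves, i.e. $O(q)$. (If it is cleaner, one can run Dijkstra from all $O(\log n)$ sampled candidates per split and simply absorb the extra $\log n$ factor into the $\tO$.)

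For property~3, observe that the recursion tree is a binary tree: each internal node corresponds to a chosen splitter $w \in W$, with the left subtree handling $B \cap S_w$ and the right subtree handling $B \cap T_w$. Consider any $U \subseteq W$ and the set $V_U = \left(\bigcap_{w \in U} S_w\right) \cap \left(\bigcap_{w \in W \setminus U} T_w\right)$. I claim all of $V_U$ ends up in a single leaf block. Indeed, trace the path of any vertex $x \in V_U$ down the tree: at the internal node with splitter $w$, $x$ goes left iff $x \in S_w$ iff $w \in U$ (using that $x \in V_U$); and if $w \notin U$ then $x \in T_w$ and $x$ goes right. Thus the path taken by $x$ depends only on which splitters along the way lie in $U$ — not on $x$ itself — so every $x \in V_U$ follows the identical root-to-leaf path and lands in the same $V_i$. (One subtlety: a vertex $x$ is removed from the blocks exactly when it is chosen as a splitter, i.e. when $x \in W$; since $V_U \subseteq V \setminus W$, no vertex of $V_U$ is ever removed, so the trace is well defined all the way to a leaf.) This gives property~3.

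Property~2 follows from property~3 together with the tree structure. Fix $i \ne j$, with $V_i$ the leaf of a root-to-leaf path $P_i$ and $V_j$ the leaf of a path $P_j$; since $i \ne j$ these paths diverge at some internal node, whose splitter is some $w \in W$, with (say) $P_i$ continuing into the $S_w$-child and $P_j$ into the $T_w$-child. Every vertex placed in $V_i$ descended through the $S_w$-child, hence lies in $S_w$; likewise $V_j \subseteq T_w$. (And the symmetric case, $P_i$ going to $T_w$ and $P_j$ to $S_w$, gives $V_i \subseteq T_w$, $V_j \subseteq S_w$.) That is exactly property~2.

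The main obstacle I anticipate is the bookkeeping that simultaneously yields $|W| = \tO(n^{1-c})$ and properties~1–3 with the claimed high probability. The delicate points are: (i) guaranteeing that a balanced splitter is found at every split (union bound over the $O(q)$ internal nodes, each failing with probability $\le n^{-\Theta(1)}$, so the oversampling constant must be set against the depth and node count); (ii) making sure the $\Theta(n/q)$ lower bound on $|V_i|$ is genuinely maintained — a block of size slightly above $n/q$ might split very unevenly into one piece of size $\Theta(n/q)$ and one tiny piece; this is handled by the stopping rule (we never split a block of size $\le n/q$) plus the observation that before a split the block had size $> n/q$, so at least one child inherits a $\frac{1}{9}$ fraction and the tiny children are few and can be merged up to size $\Theta(n/q)$ or simply bounded in number so they don't blow up the count of blocks beyond $O(q)$; and (iii) ensuring the $V_i$'s, $W$ genuinely partition $V$ — which is immediate from the construction since every split sends each non-splitter vertex to exactly one child and splitters go to $W$. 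None of these is deep, but stating the stopping rule and the oversampling parameters precisely is where the real work lies.
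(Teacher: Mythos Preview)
Your approach is the same as the paper's: recursively split using balanced splitters guaranteed by Lemma~\ref{lem:antisym}, and read off properties~2 and~3 from the binary-tree structure (your tree-path arguments for these are exactly what the paper does, stated more tersely there). The paper executes the recursion slightly differently in a way that dissolves your bookkeeping worries: instead of a size-based stopping rule, it always splits the \emph{largest} current block and performs exactly $q$ splits, so $|W|=q$ and the number of parts is $q+1$ on the nose. Property~1 then follows from a one-line induction showing that after every round the largest block is at most $9$ times the smallest (if the new smallest came from the split, it is at least $\tfrac{1}{9}$ of the old largest; otherwise the smallest is unchanged and the largest can only shrink).

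Your concern~(ii) about ``tiny pieces'' is unfounded and the proposed fix is dangerous. Conditioned on the high-probability event that every chosen splitter is balanced, the smaller child of a block $B$ always has size at least $\tfrac{1}{9}(|B|-1)$; since you only split blocks with $|B|>n/q$, both children automatically have size $\Omega(n/q)$, so no merging is needed. This is fortunate, because merging two leaves that were separated by some splitter $w$ would produce a block contained in neither $S_w$ nor $T_w$, destroying properties~2 and~3. Your variant does give $|W|=\Theta(q)$ rather than exactly $q$; if you want to match the statement literally, switch to the paper's ``always split the largest, stop after $q$ rounds'' rule.
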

\begin{proof}
We begin with $W=\emptyset$ and we will iteratively populate $W$ with vertices. We let $\mathcal{V}_0=\{V\}$ and for all $i\in [q]$ when we add the $i^{th}$ vertex to $W$, we will construct $\mathcal{V}_{i}$ from $\mathcal{V}_{i-1}$ by partitioning the largest set in $\mathcal{V}_{i-1}$ into two parts.  After adding $q$ vertices to $W$ we will have constructed $\mathcal{V}_{q}=\{V_1\dots V_{q+1}\}$. 


For all $i\in [q]$, let $A_i,B_i$ be the largest and smallest sets in $\mathcal{V}_i$, respectively.

We describe how to construct $W$ and $\mathcal{V}_q$ inductively. Suppose $\abs{W}=r-1$ and we have constructed $\mathcal{V}_{r-1}$. 
By Lemma~\ref{lem:antisym}, if we randomly sample $O(\log^2 n)$ vertices from $A_{r-1}$, with probability at least $1-2^{-\log^2 n}=1-n^{-\log n}$ we will sample a vertex $w_r$ such that $A_S=A_{r-1}\cap S_{w_r}$ and $A_T=A_{r-1}\cap T_{w_r}$ differ by a factor of at most $8$. We add $w_r$ to $W$ and let $\mathcal{V}_r=\mathcal{V}_{r-1}\cup\{A_S,A_T\}\setminus\{A_{r-1}\}$. 

By union bound over the $q=n^{1-c}$ partitionings, with probability at least $1-n^{1-c-\log n}$, every partitioning produces two sets that differ in size by a factor of at most 8.


We prove property 1 by induction on $\abs{W}=r$. Specifically, we will show that for all $r\in[q]$, $|A_r|\leq 9|B_r|$. This implies that $\abs{A_q}=O(\abs{B_q})$, and property 1 follows.
Lemma \ref{lem:antisym} implies that $|A_1|\leq 9|B_1|$. Assume inductively that $\abs{A_{r-1}}\le 9\abs{B_{r-1}}$. Since no subset grows in size, $\abs{A_r}\le\abs{A_{r-1}}$ and $\abs{B_r}\le\abs{B_{r-1}}$. If $\abs{B_r}=\abs{B_{r-1}}$, then $\abs{A_r}\le \abs{A_{r-1}}\le9\abs{B_{r-1}}=9\abs{B_r}$. Otherwise, $\abs{B_r}<\abs{B_{r-1}}$, which implies that $B_r$ is one of the two sets obtained by partitioning $A_{r-1}$. In this case $\abs{A_{r-1}}\le 9\abs{B_r}$ by Lemma \ref{lem:antisym}. Hence $\abs{A_r}\le\abs{A_{r-1}}\le9\abs{B_r}$, completing the induction.

Property 2 follows from the partitioning procedure: for any $i\ne j$, if for all $w\in W$, $V_i,V_j\subseteq S_{w}$ or $V_i,V_j\subseteq T_{w}$ then $V_i\cup V_j$ would never have been partitioned.

Property 3 also follows from the partitioning procedure: observe that for all $w\in W$ and all $U\subseteq W$, $V_U\subseteq S_w$ or $V_U\subseteq T_w$, so $V_U$ is never partitioned and thus $V_U\subseteq V_i$ for some $i\in[q+1]$.


Since we sample $n^{1-c}\log^2 n$ vertices and for all $v$ finding $S_v,T_v$ takes $O(m)$ time, the runtime is $\tO(mn^{1-c})$. 

\end{proof}

\section{Min-Diameter Algorithm}\label{section:diam}
Throughout this section, let $D$ be the min-diameter, and let $s^*,t^*$ the endpoints of the min-diameter. In this section we prove the time/accuracy trade-off theorem for Min-Diameter.
\begin{theorem}\label{thm:min_diam_param}
For any integer $0<\ell\le O(\log{n})$, there is an $\tilde{O}(mn^{1/(\ell+1)})$ time randomized algorithm that, given a directed weighted graph $G$ with edge weights non-negative and polynomial in $n$, can output an estimate $\tilde{D}$ such that $D / (4\ell-1) \le \tilde{D} \le D$ with high probability, where $D$ is the min-diameter of $G$. 
\end{theorem}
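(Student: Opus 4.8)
The plan is to prove Theorem~\ref{thm:min_diam_param} by recursion on $\ell$, building on the two ideas sketched in the overview: an ordered partition into roughly $n^{1/(\ell+1)}$ parts obtained from Lemma~\ref{lem:sampling}, and a graph-augmentation step that lets each part be treated as a self-contained subgraph whose internal min-distances approximate those of $G$ up to a controlled additive error. Concretely, set $c = \ell/(\ell+1)$ so that Lemma~\ref{lem:sampling} gives us a set $W$ with $|W| = q = n^{1-c} = n^{1/(\ell+1)}$, together with a partition $V_1,\dots,V_{q+1}$ with $|V_i| = \Theta(n/q) = \Theta(n^{\ell/(\ell+1)})$, each $V_i$ being an intersection-cell $V_U$ of the half-space partitions $S_w, T_w$. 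We run Dijkstra's algorithm from every $w \in W$ (cost $\tilde O(mq) = \tilde O(mn^{1/(\ell+1)})$), obtaining $\varepsilon(w)$ for each, and let $E = \max_{w \in W}\varepsilon(w)$, which is always a valid lower bound $E \le D$.

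First I would dispose of the "cross" case: if $s^*$ and $t^*$ lie in different cells $V_i \ne V_j$, then by property~2 of Lemma~\ref{lem:sampling} there is some $w\in W$ with (say) $s^* \in S_w$ and $t^* \in T_w$; since $s^*\in S_w$ means $d(s^*,w)\le d(w,s^*)$ — wait, more carefully, $s^*\in S_w$ tells us $d(s^*,w) \le \varepsilon(w)$ is not immediate, so the argument must be: $s^*\in S_w$ gives $d_{min}(s^*,w)=d(s^*,w)$, hence $d(s^*,w)\le\varepsilon(w)\le E$; similarly $t^*\in T_w$ gives $d(w,t^*)\le\varepsilon(w)\le E$; so $d(s^*,t^*)\le 2E$ and thus $D = d_{min}(s^*,t^*)\le 2E$. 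So in this case $E$ itself is a $2$-approximation, which is stronger than $4\ell-1$. Hence we may assume $s^*,t^*$ lie in the same cell $V_i$.

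Next comes the augmentation step. For each cell $V_i$, define an augmented graph $G_i$ on vertex set $V_i \cup W$ (or $V_i$ plus a gadget per $w\in W$): keep all edges of $G$ internal to $V_i$, and for each $w\in W$ and each $u\in V_i$ add shortcut edges through $w$ whose weights encode $d(u,w), d(w,u)$ but are truncated by $\varepsilon(w)$ exactly as in the overview (edge $(u,w)$ of weight $\max\{0, d(u,w)-\varepsilon(w)\}$ or $0$ depending on which side $u$ falls, and symmetrically $(w,u)$). The design goal, which I would state and prove as a lemma, is: for all $u,u'\in V_i$, $d_{G}(u,u') \le d_{G_i}(u,u') + O(E)$ and $d_{G_i}(u,u') \le d_{G}(u,u')$ (or the reverse inequality with a small additive slack) — so that $d_{min}$ in $G_i$ approximates $d_{min}$ in $G$ up to additive $O(E)$. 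The reason this works: any shortest $u\rightsquigarrow u'$ path in $G$ that leaves $V_i$ must, because the $V_i$'s are cells of the $S_w/T_w$ partition, "cross" the position of some $w$, and so can be rerouted through $w$ at the cost of at most $2\varepsilon(w)\le 2E$ additional length; conversely the shortcuts never create paths shorter than genuine ones because of the truncation. I would then recurse: on each $G_i$ (which has $\Theta(n^{\ell/(\ell+1)}) + q = \tilde\Theta(n^{\ell/(\ell+1)})$ vertices) run the algorithm for parameter $\ell-1$, scaling so that the sub-instance has $(|V_i|)^{1/\ell}$-type part sizes; the total work across all $i$ is $\sum_i \tilde O(|E(G_i)| \cdot |V_i|^{1/\ell})$, and since $\sum_i |E(G_i)| = \tilde O(m)$ (each original edge lies in one cell; shortcut edges number $\tilde O(q\cdot n/q)=\tilde O(n)$ per level) and $|V_i|^{1/\ell} = \tilde O(n^{1/(\ell+1)})$, this is $\tilde O(mn^{1/(\ell+1)})$ per level, $\tilde O(\ell)$ levels total. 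The base case $\ell=0$ (or $\ell=1$ with brute force) runs exact APSP on parts of size $\tilde O(\sqrt n)$.

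For the approximation bound I would set up the recursion on the error. Let $f(\ell)$ be the approximation factor; the claim is $f(\ell) = 4\ell-1$. We output $\tilde D = \max\{E, \max_i \tilde D_i\}$ where $\tilde D_i$ is the recursive estimate on $G_i$. Upper bound $\tilde D \le D$: $E\le D$ always, and each $\tilde D_i \le$ (min-diameter of $G_i$) $\le D$ since $G_i$'s shortcuts only shorten distances, so it never overshoots — this needs the augmentation lemma in the direction $d_{G_i}\le d_G$ on the relevant pairs plus an inductive "never overestimates" hypothesis. Lower bound: assuming $s^*,t^*\in V_i$, the min-diameter of $G_i$ is at least $d_{min,G_i}(s^*,t^*) \ge d_{min,G}(s^*,t^*) - 2E = D - 2E$ by the augmentation lemma, so by induction $\tilde D_i \ge (D-2E)/f(\ell-1)$. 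Combining with $\tilde D \ge E$: if $E \ge D/(4\ell-1)$ we are done directly; otherwise $D - 2E \ge D(1 - 2/(4\ell-1)) = D(4\ell-3)/(4\ell-1)$, and we need $\tilde D_i = (D-2E)/(4(\ell-1)-1) \ge D/(4\ell-1)$, i.e. $(4\ell-3)/(4\ell-5) \ge$ ... let me just say: choose the threshold for "$E$ small" to make the two bounds match, which is a routine optimization giving exactly $f(\ell)=4\ell-1$ with $f(1)=3$ as base. The main obstacle — and the part I would spend the most care on — is the augmentation lemma: getting the additive error to be exactly $2E$ (not a larger multiple), simultaneously handling shortcuts through all $q$ vertices of $W$ rather than a single $v$, and ensuring the truncation rule is the right one so that distances are never under-shortcutted (breaking the lower bound) nor over-shortcutted (breaking the upper bound $\tilde D\le D$). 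A secondary subtlety is that the cells $V_i$ are intersections of many half-spaces, so a shortest path leaving $V_i$ may cross several $w$'s; I would argue it suffices to reroute through just one well-chosen $w$ (the first one whose side the path switches), keeping the error at a single $2E$ rather than letting it accumulate over many crossings.
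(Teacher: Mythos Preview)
Your plan matches the paper's: partition via Lemma~\ref{lem:sampling} with $q=n^{1/(\ell+1)}$, dispose of the cross-cell case by the $D\le 2E$ argument, augment each cell with shortcut vertices, and recurse with base case $\ell=1$. The gap is the additive error you expect from the augmentation: for a graph you can recurse on it is $4E$, not $2E$, and this is precisely what produces the factor $4\ell-1$ (with $2E$ your recursion would give $2\ell+1$, stronger than the theorem).

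The reason is that recursion forces a \emph{single} augmented graph $G_i$ per cell. In the $\ell=1$ base case the paper builds two graphs $G_i^S,G_i^T$ (each with one shortcut vertex $w_i^S$ resp.\ $w_i^T$), runs APSP on each, and uses $d'_i=\min\{d_{G_i^S},d_{G_i^T}\}$; each of $G_i^S,G_i^T$ individually has lower-bound error $2E$, whence the $3$-approximation. For $\ell>1$ you cannot recurse on $G_i^S$ and $G_i^T$ separately and combine answers afterward: only the pointwise minimum satisfies $\min\{d_{G_i^S},d_{G_i^T}\}\le d_G$, so the min-diameter of $G_i^S$ alone can exceed $D$, and your inductive ``never overestimate'' step would fail. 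The paper therefore places both shortcut vertices $w_i^S,w_i^T$ in one graph $G_i$ and recurses on that. Now a shortest path in $G_i$ may route through \emph{both} $w_i^S$ and $w_i^T$, picking up $2E$ error from each, so the correct bound is $d_{G_i}(u,v)\ge d_G(u,v)-4E$ (paper's Lemma~\ref{lem:diam_lower_bound_combined}). With threshold $E<D/(4\ell-1)$ the recursion then gives $(D-4E)/(4\ell-5)\ge D/(4\ell-1)$ on the nose. Your proposed ``one gadget per $w\in W$'' does not help: the paper merges all $S$-side $w$'s into a single $w_i^S$ via edge weight $\max\{0,\min_{w\in W_i^S}d(w,v)-E\}$ (and symmetrically for $w_i^T$) specifically to cap the number of shortcut vertices at two; keeping them separate would only allow more shortcut uses along a path and cannot reduce the error below $4E$.
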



We first prove a special case of Theorem~\ref{thm:min_diam_param} where $\ell=1$.

\subsection{An $\tO(m\sqrt{n})$ time 3-approximation}

\begin{theorem}{{\bf (Theorem~\ref{thm:min_diam_param}} with $\ell=1$)} 
There is an $\tilde{O}(m\sqrt{n})$ time randomized algorithm, that given a directed weighted graph $G=(V,E)$ with edge weights non-negative and polynomial in $n$, can output an estimate $\tilde{D}$ such that $D / 3 \le \tilde{D} \le D$ with high probability, where $D$ is the min-diameter of $G$.\label{thm:min_diam_no_param} 
\end{theorem}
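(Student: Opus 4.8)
The plan is to instantiate the three‑step framework from Section~\ref{sec:overview}: run the partitioning of Lemma~\ref{lem:sampling}, run Dijkstra from the sampled set $W$ to dispose of the case where the optimal min‑diameter pair is ``split'' by $W$, and then augment each part and brute‑force it to dispose of the case where the pair lies inside a single part.

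Concretely, first apply Lemma~\ref{lem:sampling} with $c=1/2$ to obtain $W$ and parts $V_1,\dots,V_{q+1}$ with $q=|W|=\Theta(\sqrt n)$ and $|V_i|=\Theta(\sqrt n)$; this costs $\tilde{O}(m\sqrt n)$. Then run Dijkstra forward and backward from every $w\in W$, recording $d_G(w,\cdot)$, $d_G(\cdot,w)$, and the min‑eccentricity $\eps(w)$; this again costs $\tilde{O}(m\sqrt n)$. Set $\hat\eps=\max_{w\in W}\eps(w)$. Since min‑eccentricity is at most min‑diameter, $\hat\eps\le D$ always, so outputting $\hat\eps$ is a safe lower bound, and if $\hat\eps\ge D/3$ we are already done; so for the analysis of the lower bound assume $\hat\eps<D/3$.

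Now case on $s^*,t^*$. If they lie in different parts $V_i,V_j$, then by property~2 of Lemma~\ref{lem:sampling} some $w\in W$ has (say) $V_i\subseteq S_w$, $V_j\subseteq T_w$; by Definition~3.1, $s^*\in S_w$ gives $d_G(s^*,w)=d_{min}(s^*,w)\le\eps(w)$ and $t^*\in T_w$ gives $d_G(w,t^*)=d_{min}(w,t^*)\le\eps(w)$, so $D\le d_G(s^*,t^*)\le d_G(s^*,w)+d_G(w,t^*)\le 2\eps(w)\le 2\hat\eps<\tfrac{2}{3}D$, a contradiction; hence under the assumption $s^*$ and $t^*$ lie in a common part $V_i$. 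For this case I would, for every part $V_k$, build an augmented graph $\hat G_k$ on $V_k$ together with a copy of $W$: keep all edges of $G$ inside $V_k$, and for each $u\in V_k$, $w\in W$ add a shortcut edge $u\to w$ of weight $d_G(u,w)$ and an edge $w\to u$ of weight $\max\{0,\,d_G(w,u)-\eps(w)\}$ (the ``discount'' $\eps(w)$ is what keeps $\hat G_k$ from \emph{over}estimating true distances). The heart of the argument is a two‑sided estimate of roughly the form
\[
 d_{min}^{G}(x,y)-2\hat\eps \ \le\ d_{min}^{\hat G_k}(x,y)\ \le\ d_{min}^{G}(x,y)\qquad\text{for all }k\text{ and all }x,y\in V_k .
\]
The right inequality, combined with $d_{min}^{G}\le D$, shows $\mathrm{MinDiam}(\hat G_k)\le D$ for every $k$, so the returned estimate never exceeds $D$; the left inequality applied to $s^*,t^*\in V_i$ gives $d_{min}^{\hat G_i}(s^*,t^*)\ge D-2\hat\eps>\tfrac{1}{3}D$, so $\mathrm{MinDiam}(\hat G_i)>D/3$. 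Running an exact APSP on each $\hat G_k$ and returning the maximum of $\hat\eps$ and the $q+1$ numbers $\mathrm{MinDiam}(\hat G_k)$ then yields $\tilde D\in(D/3,D]$, as required.

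The main obstacle is exactly this two‑sided estimate — establishing the correct form and constants, and in particular showing that whenever a shortest $x\to y$ path in $G$ leaves $V_k$ it can be rerouted through some $w\in W$ without inflating its length, while using several such reroutes does not let the per‑shortcut $\eps(w)$ discounts compound beyond a single additive $2\hat\eps$ slack. This is precisely where the shortcut weights must be defined with respect to \emph{all} of $W$ simultaneously, rather than a single vertex as in the $\tilde{O}(m+n)$ precursor, and it is where the recursion‑free structure buys a constant rather than a logarithmic factor. A second, more mundane obstacle is the running time: naively $\hat G_k$ has $|V_k|\cdot|W|=\Theta(n)$ added edges, so $\Theta(\sqrt n)$ exact APSP calls on $\Theta(\sqrt n)$‑vertex graphs would cost $\Theta(n^2)$; one must instead keep each $\hat G_k$ sparse (e.g.\ by contracting the other parts, or by only introducing shortcuts tied to the endpoints of $G$'s crossing edges, so that the total number of added edges is $\tilde{O}(m)$), after which the $\Theta(\sqrt n)$ APSP computations cost $\tilde{O}(m\sqrt n)$ in all; the balanced part sizes guaranteed by Lemma~\ref{lem:sampling} are what make this accounting go through.
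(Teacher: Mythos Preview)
Your high-level plan is right and matches the paper: partition via Lemma~\ref{lem:sampling}, run Dijkstra from $W$ to handle the ``split'' case via a $D/2$ lower bound on $\hat\eps$, then augment each part and brute-force APSP for the ``same part'' case. The gap is in the augmentation itself: the graph $\hat G_k$ you propose does not satisfy either inequality of your two-sided estimate, and your suggested running-time fixes don't repair this.

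Concretely, with edges $u\to w$ of weight $d_G(u,w)$ and $w\to u$ of weight $\max\{0,d_G(w,u)-\eps(w)\}$ for each $w\in W$:
\begin{itemize}
\item The upper bound $d_{\hat G_k}(x,y)\le d_G(x,y)$ fails. If the true $x\to y$ shortest path leaves $V_k$ at some $x'$ and $w\in W$ separates $V_k\subseteq S_w$ from $x'\in T_w$, the reroute $x\to w\to y$ costs $d_G(x,w)+\max\{0,d_G(w,y)-\eps(w)\}$; the second term is $\le d_G(x,y)$, but the first term contributes an extra $d_G(x,w)\le\eps(w)$ that you cannot absorb. In the symmetric case $V_k\subseteq T_w$ the forward edge $x\to w$ has weight $d_G(x,w)$ which is not bounded at all. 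So $d^{\hat G_k}_{min}$ can exceed $d^G_{min}$, and hence $\mathrm{MinDiam}(\hat G_k)$ can exceed $D$.
\item The lower bound fails because a shortest path in $\hat G_k$ may pass through many distinct $w\in W$, each contributing a $-\eps(w)$ discount; the slacks compound to $-|W|\cdot\hat\eps$ rather than $-2\hat\eps$. You flag this as the obstacle, but your edge weights do nothing to prevent it, and ``contracting the other parts'' or ``shortcuts only at crossing-edge endpoints'' does not obviously fix it either.
\end{itemize}

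The paper's augmentation resolves all three issues (both inequalities and the running time) with one idea: instead of a copy of $W$, add a \emph{single} proxy vertex per direction. Split $W$ into $W_i^S=\{w:V_i\subseteq S_w\}$ and $W_i^T=\{w:V_i\subseteq T_w\}$; build $G_i^S$ on $V_i\cup\{w_i^S\}$ with a weight-$0$ edge $v\to w_i^S$ and an edge $w_i^S\to v$ of weight $\max\{0,\min_{w\in W_i^S}d(w,v)-D'\}$ for every $v\in V_i$, where $D'=\hat\eps$; build $G_i^T$ symmetrically. The weight-$0$ direction gives the upper bound $d'_i(u,v):=\min\{d_{G_i^S}(u,v),d_{G_i^T}(u,v)\}\le d_G(u,v)$ for free (the forward reroute through $w_i^S$ is no longer penalized); having only one proxy means a shortest path uses it at most once, giving the lower bound $d'_i(u,v)\ge d_G(u,v)-2D'$; and each $G_i^S,G_i^T$ has only $O(|V_i|)$ extra edges, so APSP on all of them costs $\tilde O(m\sqrt n)$. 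Keeping the two directions in \emph{separate} graphs and taking $\min$ is what yields $-2D'$ rather than $-4D'$, and hence the factor $3$ rather than a larger constant.
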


\subsubsection{Algorithm Description}

Applying Lemma~\ref{lem:sampling} with $q=\sqrt{n}$ we obtain a partition of the vertices into $W, V_1, V_2, \ldots, V_{\sqrt{n}+1}$.

We perform Dijkstra's algorithm from every vertex in $W$ and define $D' =\max_{w\in W}\eps(w)$.
We will later show that $D'$ is a good approximation of the Min-Diameter when $s^*$ and $t^*$ are not in the same vertex set $V_i$. 

For every $i \in [\sqrt{n}+1]$, define $W_i^S = \{w\in W: V_i \subseteq S_{w}\}$, and $W_i^T = \{w\in W: V_i \subseteq T_{w}\}$. Then, for every $i$, we construct two graphs $G_i^S$ and $G_i^T$. The first graph $G_i^S$ contains all vertices of $V_i$ and an additional node $w_i^S$. It has the following edges:
\begin{enumerate}
    \item For every directed edge $(u, v)\in E$ such that $u, v \in V_i$, add this edge to $G_i^S$.
    \item Add a directed edge from $w_i^S$ to every $v \in V_i$, with weight $\max\left\{\min_{w \in W_i^S}d(w, v) - D', 0\right\}$, and a directed edge from every $v \in V_i$ to $w_i^S$ with weight $0$.
\end{enumerate}
The second graph $G_i^T$ is symmetric to $G_i^S$. It contains all vertices in $V_i$ and an additional node $w_i^T$. It has the following edges:
\begin{enumerate}
    \item For every directed edge $(u, v) \in E$ such that $u, v \in V_i$, add this edge to $G_i^T$.
    \item Add a directed edge from every $v \in V_i$ to $w_i^T$, with weight $\max\left\{\min_{w \in W_i^T} d(v, w) - D', 0\right\}$, and add a directed edge from $w_i^T$ to every $v \in V_i$ with weight $0$. 
\end{enumerate}

For all $i$, we run an exact all-pairs shortest paths algorithm on $G_i^S$ and $G_i^T$. This allows us to compute for all $i$ and all $u,v\in V_i$ the quantity $\min\{d_{G_i^S}(u,v), d_{G_i^T}(u,v)\}$, which we denote by $d'_i(u,v)$.

We choose the larger between $D'$ and $\max_{i\in[\sqrt{n}+1],u,v \in V_i} \min\{d'_i(u, v),d'_i(v,u)\}$ as our final estimate for the min-diameter.

\subsubsection{Analysis}
The following lemma will be used to show that $D'$ is a good estimate for the min-diameter if $s^*$ and $t^*$ happen to fall into different sets $V_i$

\begin{lemma}\label{lem:splitting_v}
For all vertices $v$, if either $s^* \in S_v$, $t^* \in T_v$, or $t^* \in S_v$, $s^* \in T_v$, then $\eps(v) \ge D/2$. 
\end{lemma}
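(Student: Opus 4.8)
The plan is to argue by contradiction, exploiting the fact that the \emph{directed} distance $d$ does satisfy the triangle inequality even though $d_{min}$ does not. By the symmetry of the two cases in the hypothesis, I would assume $s^* \in S_v$ and $t^* \in T_v$; the case $t^* \in S_v$, $s^* \in T_v$ is identical after swapping the names $s^*$ and $t^*$. Suppose for contradiction that $\eps(v) < D/2$.

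First I would unwind the definitions of $S_v$ and $T_v$ to identify which of the two directed distances to $v$ realizes the min-distance. Since $s^* \in S_v$, by definition either $d(s^*,v) < d(v,s^*)$ or $d(s^*,v) = d(v,s^*)$ with $ID(s^*) < ID(v)$; in both cases $d_{min}(s^*,v) = d(s^*,v)$, so $d(s^*,v) \le \eps(v) < D/2$. Since $t^* \in T_v = V \setminus (S_v \cup \{v\})$, we have $t^* \neq v$ and $t^* \notin S_v$, which (using that the IDs are distinct) forces $d(v,t^*) \le d(t^*,v)$; hence $d_{min}(v,t^*) = d(v,t^*) \le \eps(v) < D/2$.

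Now I would combine these two bounds via the triangle inequality for $d$: $d(s^*,t^*) \le d(s^*,v) + d(v,t^*) < D/2 + D/2 = D$. Therefore $d_{min}(s^*,t^*) \le d(s^*,t^*) < D$, contradicting the fact that $s^*, t^*$ are the endpoints of the min-diameter, for which $d_{min}(s^*,t^*) = D$. This contradiction shows $\eps(v) \ge D/2$.

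I do not expect any genuine obstacle here; the argument is exactly the "splitting" observation sketched in the overview. The only points requiring a little care are handling the equality/tie-breaking clauses in the definitions of $S_v$ and $T_v$ so that the min-distances to $v$ are correctly identified with the intended directed distances, and being careful to apply the triangle inequality to $d$ rather than to $d_{min}$ (which would be invalid).
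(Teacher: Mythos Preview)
Your argument is correct and follows exactly the same approach as the paper: reduce to one case by symmetry, assume $\eps(v)<D/2$, use membership in $S_v$ and $T_v$ to identify $d(s^*,v)$ and $d(v,t^*)$ as the relevant min-distances, and apply the triangle inequality for $d$ to contradict $d_{min}(s^*,t^*)=D$. Your write-up is slightly more careful than the paper's in handling the tie-breaking clause, but the structure is identical.
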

\begin{proof}
We only consider the case when $s^* \in S_v$ and $t^* \in T_v$ as the other case is symmetric. By way of contradiction, assume that $\eps(v) < D/2$, then we have $d_{\min}(s^*, v) < D/2$ and $d_{\min}(t^*, v) < D/2$. Since $s^* \in S_v$, $d(s^*, v) = d_{\min}(s^*, v) < D / 2$; similarly, since $t^* \in T_v$, $d(v, t^*) = d_{min}(t^*, v) < D / 2$. Therefore, by the triangle inequality, $d(s^*,t^*)<D$, a contradiction.
\end{proof}


The next two lemmas are used for the case where $s^*$ and $t^*$ fall into the same set $V_i$.

\begin{lemma}\label{lem:diam_upper_bound}
For every $i$, and every pair of vertices $u, v \in V_i$, $d'_i(u,v)\le d(u, v)$; that is, \\$\min\{d_{G_i^S}(u,v), d_{G_i^T}(u,v)\}\leq d(u,v)$.
\end{lemma}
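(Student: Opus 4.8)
The plan is to fix $i$ and vertices $u,v\in V_i$, take a shortest $u$-$v$ path $P$ in $G$, and produce a $u$-$v$ path of length at most $d(u,v)$ either in $G_i^S$ or in $G_i^T$. If every vertex of $P$ lies in $V_i$, then $P$ is already a path in $G_i^S$, since $G_i^S$ contains every edge of $G$ internal to $V_i$; so assume $P$ passes through some vertex $z\notin V_i$. Because $z$ lies on the shortest path $P$ and edge weights are nonnegative, $d(u,z)\le d(u,v)$ and $d(z,v)\le d(u,v)$. I will route through the new node: in $G_i^S$ take $u\to w_i^S\to v$, of length $\max\{\min_{w\in W_i^S}d(w,v)-D',0\}$ (using the weight-$0$ edge $u\to w_i^S$), and in $G_i^T$ take $u\to w_i^T\to v$, of length $\max\{\min_{w\in W_i^T}d(u,w)-D',0\}$. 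For the first route to have length at most $d(u,v)$ it suffices to find $w\in W_i^S$ with $d(w,z)\le D'$, because then $\min_{w'\in W_i^S}d(w',v)\le d(w,v)\le d(w,z)+d(z,v)\le D'+d(u,v)$; symmetrically, for the second route it suffices to find $w\in W_i^T$ with $d(z,w)\le D'$.

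Thus everything reduces to the following separation claim: there exists $w\in W$ with either ($V_i\subseteq S_w$ and $z\in T_w\cup\{w\}$) or ($V_i\subseteq T_w$ and $z\in S_w\cup\{w\}$). Given such a $w$, in the first case $w\in W_i^S$, and since $z\in T_w$ forces $d_{\min}(z,w)=d(w,z)$ we get $d(w,z)\le\eps(w)\le D'$ (trivially $d(w,z)=0$ if $z=w$); in the second case $w\in W_i^T$ and likewise $d(z,w)\le D'$. Either way one of the two sufficient conditions from the first paragraph is met.

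To prove the separation claim, split on where $z$ lies; since $W,V_1,\dots,V_{q+1}$ partition $V$, either $z\in V_j$ for some $j\ne i$, or $z\in W$. If $z\in V_j$ with $j\ne i$, Property~2 of Lemma~\ref{lem:sampling} yields $w\in W$ with $V_i$ and $V_j$ on opposite sides of $w$; as $z\in V_j$ it lies on $V_j$'s side, so this $w$ works. If $z\in W$: if $V_i\subseteq S_z$ or $V_i\subseteq T_z$, take $w=z$. Otherwise $V_i$ meets both $S_z$ and $T_z$; writing $z=w_r$ for the step of the construction in Lemma~\ref{lem:sampling} at which $z$ enters $W$, this forces the set $A\in\mathcal V_{r-1}$ containing $V_i$ to differ from the set $A_{r-1}$ split at step $r$ (otherwise, splitting $A_{r-1}$ by $z$ would place the part containing $V_i$ inside $A_{r-1}\cap S_z$ or $A_{r-1}\cap T_z$, putting $V_i$ cleanly on one side of $z$). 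The two distinct sets $A$ and $A_{r-1}$ of $\mathcal V_{r-1}$ have a least common ancestor $X$ in the binary splitting tree underlying that construction; the cut $w'$ splitting $X$ sends $A$ (hence $V_i$) to one side and $z\in A_{r-1}$ to the other, and since $X$ is an ancestor of $V_i$, the set $V_i$ lies cleanly on its side of $w'$. Hence $w=w'$ satisfies the claim.

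The step I expect to be the main obstacle is this last case: a vertex $z\in W$ need not be cleanly separated from $V_i$ by $S_z$ and $T_z$, because adding a vertex to $W$ only refines the currently largest part, not the part containing $V_i$. Resolving it requires exposing the binary-splitting-tree structure implicit in the proof of Lemma~\ref{lem:sampling} and observing that the earlier cut separating $V_i$'s part from $z$'s part does the job. Everything else is routine: the triangle inequality for the (asymmetric) distance $d$, the bound $\eps(w)\le D'$ for $w\in W$, and the fact that $z$ lies on a shortest $u$-$v$ path.
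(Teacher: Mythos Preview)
Your proof is correct and follows essentially the same route as the paper: pick a vertex $z$ on the shortest $u$--$v$ path that lies outside $V_i$, find a $w\in W$ separating $V_i$ from $z$, and route through $w_i^S$ or $w_i^T$ accordingly. In fact you are more careful than the paper on one point: the paper invokes only Property~2 of Lemma~\ref{lem:sampling} to obtain the separator, but that property concerns pairs $V_i,V_j$ and does not literally cover the case $z\in W$; your splitting-tree argument handles that case explicitly and correctly.
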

\begin{proof}
Take any shortest path in the original graph $G$ from $u$ to $v$. If this path does not leave $V_i$, then this path also exists in $G_i^S$ and $G_i^T$, and thus the inequality is true. 

 \begin{figure}
   \centering
     \includegraphics[width=0.3\textwidth]{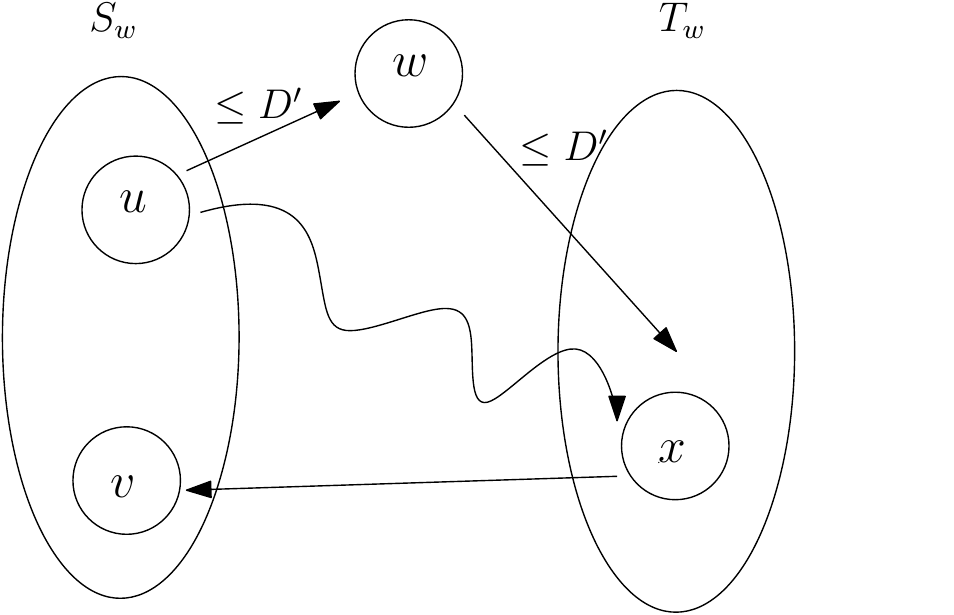}
     \caption{The case where $u,v\in S_w$ and the shortest path from $u$ to $v$ contains a node $x\in T_w\cup \{w\}$.}
 	\label{fig:diam_upper_bound}
 \end{figure}
 
It remains to prove for the case when the shortest $u,v$ path in the original graph leaves $V_i$. Let $x\not\in V_i$ be any vertex on a shortest $u,v$ path.
By Lemma \ref{lem:sampling}, property 2, there exists $w\in W$ such that $x \in S_{w} \cup \{w\}$ and $V_i \subseteq T_{w}$, or $x \in T_{w} \cup \{w\}$ and $V_i \subseteq S_{w}$. We first assume $x \in T_{w} \cup \{w\}$ and $V_i \subseteq S_{w}$ as shown in Figure~\ref{fig:diam_upper_bound}, and the other case is symmetric. 

Since $x$ is on the shortest path from $u$ to $v$, we have $d(u,v)\ge d(x,v)$. Also, we have $d(w, x) \leq D'$, by definition of $D'$. Therefore, 
\begin{equation}\label{eq:eq_lemma1}
    \begin{split}
        d(u,v)&\ge d(x,v)\\
              &\ge d(x,v)+\left(d(w, x) - D'\right)\\
              & \ge d(w, v)-D'
    \end{split}
\end{equation}
Now consider the path $u \rightarrow w_i^S \rightarrow v$ in $G_i^S$. The first part $u \rightarrow w_i^S$ costs $0$, because there is an edge from $u$ to $w_i^S$ with weight $0$; the second part $w_i^S \rightarrow v$ costs at most $\max\{0, d(w, v)-D'\}$. If $d(w, v) < D'$, then $d'_i(u,v)\le d_{G_i^S}(u,v) = 0 \le d(u,v)$; otherwise, $d'_i(u,v)\le d_{G_i^S}(u,v)  \le d(w, v)-D' \le d(u, v)$, where the last step is Equation \ref{eq:eq_lemma1}.

When $x \in S_{w} \cup \{w\}$, and $V_i \subseteq T_{w}$, we have a symmetric argument: $d(u,v)\ge d(u,x) \ge d(u, x) + \left(d(x, w)-D'\right) \ge d(u, w)-D'$. Consider the path $u\rightarrow w_i^T \rightarrow v$ in $G_i^T$. The second part $w_i^T \rightarrow v$ costs $0$, because there is an edge from $w_i^T$ to $v$ with weight $0$; the first part $u \rightarrow w_i^T$ costs at most $\max\{0, d(u, w)-D'\}$. If $d(u, w) < D'$, then $d'_i(u,v) \le d_{G_i^T}(u,v) = 0 \le d(u,v)$; otherwise, $d'_i(u, v)\le d_{G_i^T}(u,v)  \le d(u, w)-D' \le d(u, v)$. 
\end{proof}

\begin{lemma}\label{lem:diam_lower_bound}
For every $i$, and every pair of vertices $u, v \in V_i$, $d'_i(u,v)\geq d(u,v)-2D'$; that is, $d_{G_i^S}(u,v) \ge d(u,v)-2D'$ and $d_{G_i^T}(u,v) \ge d(u,v)-2D'$. 
\end{lemma}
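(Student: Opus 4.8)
The plan is to lower-bound the length of any $u\to v$ path in $G_i^S$ (and, symmetrically, in $G_i^T$) by $d(u,v)-2D'$. The first step is to put a shortest $u\to v$ path $P$ in $G_i^S$ into a canonical form. All edge weights of $G_i^S$ are non-negative, so $P$ may be taken to be simple and hence visits $w_i^S$ at most once; moreover, if $P$ visits $w_i^S$ we may assume it reaches $w_i^S$ directly from $u$ along the weight-$0$ edge $(u,w_i^S)$, since replacing any prefix $u\to\cdots\to x\to w_i^S$ (which has non-negative length) by this single edge does not increase the total length. Hence $P$ has one of two forms: (a) $P$ stays inside $V_i$ and uses only original edges of $G$, or (b) $P=u\to w_i^S\to y\to\cdots\to v$, where after $w_i^S$ it uses only original edges of $G$ inside $V_i$ to reach $v$ from some $y\in V_i$; in the latter case the length of $P$ is at least $\max\{\min_{w\in W_i^S}d(w,y)-D',\,0\}+d(y,v)$, since the portion from $y$ to $v$ is a path of $G$.

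In case (a), $P$ is a genuine path of $G$, so its length is at least $d(u,v)\ge d(u,v)-2D'$. In case (b), pick $w^*\in W_i^S$ attaining $\min_{w\in W_i^S}d(w,y)$. Two facts drive the argument: first, $V_i\subseteq S_{w^*}$ implies $u\in S_{w^*}$, which by the definition of $S_{w^*}$ forces $d(u,w^*)=d_{\min}(u,w^*)\le\eps(w^*)\le D'$; second, the edge $(w_i^S,y)$ has weight $\max\{d(w^*,y)-D',\,0\}$.

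Now split on whether $d(w^*,y)\le D'$. If so, the edge $(w_i^S,y)$ has weight $0$, so the length of $P$ is at least $d(y,v)$; since $d(u,v)\le d(u,w^*)+d(w^*,y)+d(y,v)\le 2D'+d(y,v)$, the length of $P$ is at least $d(u,v)-2D'$. If instead $d(w^*,y)>D'$, then the length of $P$ is at least $\big(d(w^*,y)-D'\big)+d(y,v)\ge d(w^*,v)-D'\ge\big(d(u,v)-d(u,w^*)\big)-D'\ge d(u,v)-2D'$. This gives $d_{G_i^S}(u,v)\ge d(u,v)-2D'$. The bound $d_{G_i^T}(u,v)\ge d(u,v)-2D'$ follows from the mirror-image argument: a canonical shortest $u\to v$ path in $G_i^T$ either stays inside $V_i$ on original edges, or has the form $u\to\cdots\to x\to w_i^T\to v$ with $x\in V_i$ reached from $u$ on original edges inside $V_i$; one then uses that the witness $w^*\in W_i^T$ satisfies $V_i\subseteq T_{w^*}$, hence $v\in T_{w^*}$ and $d(w^*,v)=d_{\min}(v,w^*)\le\eps(w^*)\le D'$, and repeats the two-case split according to whether $d(x,w^*)\le D'$.

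The step needing the most care is the canonical-form reduction for shortest paths in the augmented graphs (that $w_i^S$ is used at most once and entered directly from $u$), together with tracking which of $d(u,w^*)$ and $d(w^*,v)$ is bounded by $\eps(w^*)\le D'$: in $G_i^S$ this is $d(u,w^*)$ because $u\in S_{w^*}$, whereas in $G_i^T$ it is $d(w^*,v)$ because $v\in T_{w^*}$. Given the canonical form, the rest is a short chain of triangle inequalities.
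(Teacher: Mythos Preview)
Your proof is correct and follows essentially the same approach as the paper's: isolate the single use of $w_i^S$ on a shortest $u\to v$ path in $G_i^S$, then combine the bound $d(u,w^*)\le D'$ (from $u\in S_{w^*}$) with triangle inequalities. The paper streamlines slightly by using $d_{G_i^S}(u,w_i^S)=0$ directly (rather than your canonical-form reduction) and by writing the edge-weight bound as $d_{G_i^S}(w_i^S,y)\ge d(w^*,y)-D'$ uniformly, which collapses your two-case split into a single chain of inequalities; these are cosmetic differences only.
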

\begin{proof}

We only provide full proof for $d_{G_i^S}(u,v) \ge d(u,v)-2D'$. The inequality for $G_i^T$ can be proved by a symmetrical argument.
If the shortest path from $u$ to $v$ in $G_i^S$ does not contain $w_i^S$, then this path also exists in the original graph $G$, and thus the inequality is true. 
 
  \begin{figure}
   \centering
     \includegraphics[width=0.3\textwidth]{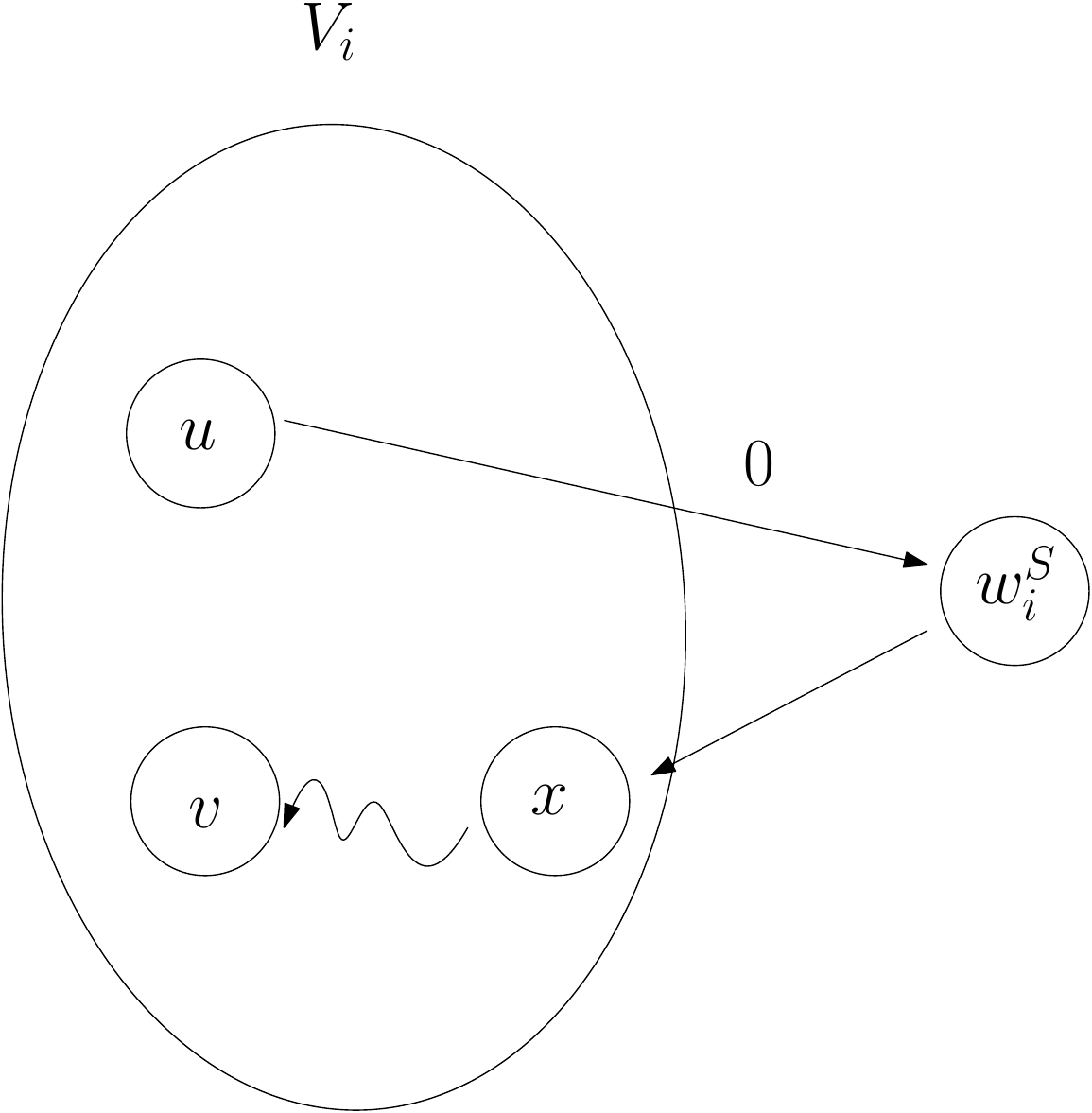}
     \caption{A shortest $u,v$ path in $G_i^S$ that contains $w_i^S$. The path goes from $u$, directly to $w_i^S$ using a weight 0 edge, then directly to a vertex $x$, and finally reaches $v$. }
 	\label{fig:diam_lower_bound}
 \end{figure}
 
Otherwise, the shortest path from $u$ to $v$ in $G_i^S$ contains $w_i^S$, as shown in 
Figure~\ref{fig:diam_lower_bound}. 
All edges on the shortest path from $w_i^S$ to $v$ exist in the original graph $G$ except for the first edge from $w_i^S$ to some node $x$, since a shortest path cannot use the vertex $w_i^S$ more than once. That is, $d_{G_i^S}(x,v)=d(x,v)$.  

By the definition of $w_i^S$ and the edges incident to it, there exists a $w \in W_i^S$ such that 
$d(w, x)\leq d_{G_i^S}(w_i^S, x) + D'$. Thus, we have

   \begin{alignat*}{3}
        d_{G_i^S}(u,v) & = d_{G_i^S}(u,w_i^S)+d_{G_i^S}(w_i^S, x)+d_{G_i^S}(x, v) \\
        &=d_{G_i^S}(w_i^S, x)+d_{G_i^S}(x, v) &&\text{ since $d_{G_i^S}(u,w_i^S)=0$ by construction}\\
        &=d_{G_i^S}(w_i^S,x)+d(x,v) &&\text{ from argument above}\\
        &\ge d(w, x)-D'+d(x,v) &&\text{ by the definition of $w$}\\
        &\ge d(w, v)-D' &&\text{ by the triangle inequality}\\
        & \ge \left(d(w, v)-D'\right) + \left(d(u, w) - D'\right) &&\text{ since $d(u,w)\leq D'$ by definition}\\
        &\ge d(u,v)-2D' &&\text{ by the triangle inequality}
        \end{alignat*}
\end{proof}


We are now ready to prove our approximation ratio guarantee: $D/3\leq \tilde{D}\leq D$.
Clearly $D'\le D$ because $D'$ is the min-eccentricity of a vertex. 
By Lemma~\ref{lem:diam_upper_bound} $\max_{i, u \in V_i, v \in V_i} \min\{d'_i(u,v),d'_i(v,u)\} \le \max_{i, u \in V_i, v \in V_i} d_{min}(u,v) \le D$ . Therefore, we never over estimate the Min-Diameter. 

If $s^*\in W$ or $t^*\in W$, then since we run Dijkstra from all vertices in $W$ we have $D'=D$. So assuming that $s^*,t^*\notin W$, we have two cases.

\noindent {\bf Case 1:}
$s^*$ and $t^*$ are not in the same vertex set $V_i$. By Lemma~\ref{lem:sampling}, property 2, there exists $w \in W$ such that one of $s^*$ and $t^*$ is in $S_{w}$ and the other is in $T_{w}$, so by Lemma~\ref{lem:splitting_v}, $\eps(w) \ge D/2$. Since $D' \ge \eps(w)$, we have $D' \ge D/2$. 

\noindent{\bf Case 2:} $s^*$ and $t^*$ are in the same vertex set $V_i$ for some $i$.
 By Lemma~\ref{lem:diam_lower_bound}, $\min\left(d'_i(s^*,t^*),d'_i(t^*,s^*)\right)\ge d_{min}(s^*,t^*)-2D' = D-2D'$. Since $\max\{D-2D',D'\}\ge D/3$, we get a $3$-approximation.

\paragraph*{Runtime analysis}
 It takes $\tilde{O}(m\sqrt{n})$ time to perform the partitioning from Lemma~\ref{lem:sampling} and to perform Dijkstra's algorithm from all $w \in W$ since $|W| = O(\sqrt{n})$.
    
    For all $i$, the number of vertices in $G_i^S$ is $\abs{V_i}+1=O(\sqrt{n})$ with high probability by property 1 of Lemma~\ref{lem:sampling} and the number of edges is $m_i+O(\sqrt{n})$ where $m_i$ is the number of edges in the graph induced by $V_i$. Hence we can run an all-pairs shortest paths algorithm on $G_i^S$ in time $\tilde{O}((m_i+\sqrt{n})\sqrt{n})$. Summing over all $i$ gives us $\tilde{O}(m\sqrt{n})$. The same analysis also works for $G^i_T$.



\subsection{Time/accuracy trade-off algorithm}

\subsubsection{Algorithm Description}

We begin by briefly outlining the differences between our trade-off algorithm and our $O(m\sqrt{n})$ time algorithm. For our trade-off algorithm, instead of applying Lemma \ref{lem:sampling} to sample $q=\sqrt{n}$ vertices, we will apply Lemma \ref{lem:sampling} with a smaller value of $q$ to save time. This results in a smaller set $W$ and larger sets $V_i$. In our $O(m\sqrt{n})$ time algorithm, we had time to apply brute force (i.e. run all-pairs shortest paths) on the graphs $G_i^S$ and $G_i^T$, however in our trade-off algorithm we do not. Instead, we apply recursion. Simply constructing $G_i^S$ and $G_i^T$ and recursing on both of them does not suffice because each recursive call only returns the min-diameter, whereas we require knowing all distances. To overcome this issue, instead of constructing $G_i^S$ and $G_i^T$ separately, we construct a graph $G_i$ that combines these two graphs. Then, we show that it suffices to recurse on $G_i$ to compute only its min-diameter rather than all distances.

The algorithm is as follows. We apply Lemma \ref{lem:sampling} with $q=O(n^{1/(\ell+1)})$ to partition the vertices into $W, V_1, V_2, \ldots, V_{q+1}$. We perform Dijkstra's algorithm from every vertex in $W$ and define $D' =\max_{w\in W}\eps(w)$. For every $i \in [\sqrt{n}+1]$, we define $W_i^S = \{w\in W: V_i \subseteq S_{w}\}$, and $W_i^T = \{w\in W: V_i \subseteq T_{w}\}$. For every $i\in [q+1]$, we construct the graph $G_i$ as follows.
The vertex set of $G_i$ is all vertices $V_i$ and two additional vertices $w_i^S$ and $w_i^T$. It contains the following edges:
\begin{enumerate}
    \item For every directed edge $(u, v)\in E$ such that $u, v \in V_i$, add this edge to $G_i$.
    \item Add a directed edge from $w_i^S$ to every $v \in V_i$, with weight $\max\{\min_{w \in W_i^S}d(w, v) - D', 0\}$, and add a directed edge from every $v$ to $w_i^S$ with weight $0$.
    \item Add a directed edge from every $v \in V_i$ to $w_i^T$, with weight $\max\{\min_{w \in W_i^T} d(v, w) - D', 0\}$, and add a directed edge from $w_i^T$ to every $v \in V_i$ with weight $0$. 
\end{enumerate}

For all $i$, we recursively compute a $(4\ell-5)$-approximation for the Min-Diameter of $G_i$ by calling the algorithm for $\ell-1$. We use the $\ell=1$ algorithm from the previous section as the base case. 

We choose the larger between $D'$ and the maximum approximated Min-Diameter over all $G_i$ as our final estimate.


\subsubsection{Analysis}
Before proving the main theorem for Min-Diameter,  we need to prove two lemmas for $G_i$, which are analogous to Lemma \ref{lem:diam_upper_bound} and Lemma \ref{lem:diam_lower_bound}.

\begin{lemma}\label{lem:diam_upper_bound_combined}
For every $i$, and every pair of vertices $u, v \in V_i$, $d(u, v) \ge d_{G_i}(u,v)$. 
\end{lemma}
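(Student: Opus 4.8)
\textbf{Proof proposal for Lemma~\ref{lem:diam_upper_bound_combined}.}

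The plan is to mimic the proof of Lemma~\ref{lem:diam_upper_bound} almost verbatim, since $G_i$ is built by merging $G_i^S$ and $G_i^T$ and therefore contains (as subgraphs, up to the renaming of the apex vertices) all the edges of both. First I would fix $i$ and a pair $u,v \in V_i$, and take a shortest $u \to v$ path $P$ in the original graph $G$. If $P$ stays entirely inside $V_i$, then every edge of $P$ is copied into $G_i$ by construction rule~1, so $d_{G_i}(u,v) \le \mathrm{len}(P) = d(u,v)$ and we are done.

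The remaining case is that $P$ leaves $V_i$. Pick any vertex $x \notin V_i$ on $P$. By property~2 of Lemma~\ref{lem:sampling}, there is some $w \in W$ with either ($x \in T_w \cup \{w\}$ and $V_i \subseteq S_w$) or ($x \in S_w \cup \{w\}$ and $V_i \subseteq T_w$). In the first sub-case, $w \in W_i^S$, so the edge $w_i^S \to v$ in $G_i$ has weight $\max\{0, \min_{w' \in W_i^S} d(w',v) - D'\} \le \max\{0, d(w,v) - D'\}$, and there is a weight-$0$ edge $u \to w_i^S$. Exactly as in Equation~\ref{eq:eq_lemma1}, using that $x$ lies on the shortest $u\to v$ path (so $d(u,v) \ge d(x,v)$) and that $d(w,x) \le D'$ by definition of $D'$, we get $d(u,v) \ge d(w,v) - D'$; hence the path $u \to w_i^S \to v$ in $G_i$ has length $\le \max\{0, d(w,v)-D'\} \le d(u,v)$, giving $d_{G_i}(u,v) \le d(u,v)$. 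The second sub-case is symmetric, routing the path as $u \to w_i^T \to v$ through the weight-$0$ edge $w_i^T \to v$ and the edge $u \to w_i^T$ of weight $\le \max\{0, d(u,w)-D'\}$, and using $d(u,v) \ge d(u,x) \ge d(u,w) - D'$.

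I do not anticipate a real obstacle here: the only thing to be slightly careful about is that $G_i$ contains \emph{both} apex vertices $w_i^S$ and $w_i^T$ with all their incident edges, so both re-routing arguments from Lemma~\ref{lem:diam_upper_bound} are simultaneously available inside the single graph $G_i$; one just invokes whichever one property~2 of Lemma~\ref{lem:sampling} hands us for the chosen $x$. In particular no interaction between the two apex vertices is needed — each re-routed path uses only one of them — so the bound transfers without loss.
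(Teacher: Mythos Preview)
Your argument is correct: you are effectively reproving Lemma~\ref{lem:diam_upper_bound} inside the single graph $G_i$, using whichever apex vertex the chosen $w$ dictates, and the case analysis goes through unchanged.

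The paper's own proof reaches the same conclusion by a shorter route, without redoing the case analysis. It simply observes that $G_i^S \subseteq G_i$ and $G_i^T \subseteq G_i$ as subgraphs, so $d_{G_i}(u,v) \le d_{G_i^S}(u,v)$ and $d_{G_i}(u,v) \le d_{G_i^T}(u,v)$, whence $d_{G_i}(u,v) \le \min\{d_{G_i^S}(u,v), d_{G_i^T}(u,v)\} = d'_i(u,v) \le d(u,v)$ by Lemma~\ref{lem:diam_upper_bound}. This is exactly the ``both re-routing arguments are simultaneously available inside $G_i$'' observation you make at the end, just invoked at the level of graph containment rather than by re-walking the path argument. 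Your version is fine; the paper's is a two-line corollary of the lemma you already have.
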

\begin{proof}
Since $G_i^S \subseteq G_i$ and $G_i^T \subseteq G_i$, we have $d_{G_i}(u,v) \le d_{G_i^S}(u,v)$ and $d_{G_i}(u,v) \le d_{G_i^T}(u,v)$. Then by Lemma \ref{lem:diam_upper_bound}, we have $d(u,v) \ge \min\{ d_{G_i^S}(u,v), d_{G_i^T}(u,v)\} \ge d_{G_i}(u,v)$.
\end{proof}

\begin{lemma}\label{lem:diam_lower_bound_combined}
For every $i$, and every pair of vertices $u,v \in V_i$, $d_{G_i}(u,v) \ge d(u,v)-4D'$.
\end{lemma}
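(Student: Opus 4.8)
The plan is to mimic the proof of Lemma~\ref{lem:diam_lower_bound}, but now accounting for the fact that a shortest path in $G_i$ may use the special vertices $w_i^S$ and $w_i^T$ — possibly both — whereas in the earlier lemma we analyzed $G_i^S$ and $G_i^T$ separately and each contained only one special vertex. First I would fix $u,v\in V_i$ and take a shortest $u,v$ path $P$ in $G_i$. If $P$ avoids both $w_i^S$ and $w_i^T$, then $P$ lies entirely in $G$ (its edges are original edges of $G$ between vertices of $V_i$), so $d_{G_i}(u,v)=d_G(u,v)\ge d(u,v)-4D'$ trivially. So assume $P$ uses at least one special vertex.

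The key structural observation is that a shortest path uses each of $w_i^S,w_i^T$ at most once, and that all edges out of $w_i^S$ and all edges into $w_i^T$ are the only "virtual" edges — every other edge of $P$ is an original edge of $G$. I would split into cases according to which special vertices appear on $P$ and in what order. In the case $P$ uses only $w_i^S$: write $P$ as $u \leadsto w_i^S \to x \leadsto v$ where the subpath from $x$ to $v$ lies in $G$; exactly as in Lemma~\ref{lem:diam_lower_bound} there is $w\in W_i^S$ with $d(w,x)\le d_{G_i}(w_i^S,x)+D'$ and $d(u,w)\le D'$, giving $d_{G_i}(u,v)=d_{G_i}(w_i^S,x)+d(x,v)\ge d(w,x)-D'+d(x,v)\ge d(w,v)-D'\ge d(u,v)-2D'$. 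The symmetric case using only $w_i^T$ gives the same bound. The remaining case is when $P$ uses both special vertices; since each is used once, the order along $P$ is either $w_i^S$ then $w_i^T$ or $w_i^T$ then $w_i^S$. Consider the order $\cdots w_i^S \to x \leadsto y \to w_i^T \cdots$: the subpath from $x$ to $y$ lies in $G$, there is $w\in W_i^S$ with $d(w,x)\le d_{G_i}(w_i^S,x)+D'$, there is $w'\in W_i^T$ with $d(y,w')\le d_{G_i}(y,w_i^T)+D'$, and by definition $d(u,w)\le D'$ and $d(w',v)\le D'$; chaining these with the triangle inequality along $w\to x\leadsto y\to w'$ gives $d_{G_i}(w_i^S,x)+d(x,y)+d_{G_i}(y,w_i^T)\ge d(w,x)-D'+d(x,y)+d(y,w')-D'\ge d(w,w')-2D'\ge d(u,v)-4D'$, and since $d_{G_i}(u,v)$ equals exactly this middle quantity (the portions $u\to w_i^S$ and $w_i^T\to v$ have weight $0$), we get $d_{G_i}(u,v)\ge d(u,v)-4D'$. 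The order $w_i^T$ then $w_i^S$ is handled by the symmetric argument. Taking the worst case over all cases yields the claimed bound $d_{G_i}(u,v)\ge d(u,v)-4D'$.

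The main obstacle I anticipate is the bookkeeping in the "both special vertices" case: one must be careful that between $w_i^S$ and $w_i^T$ the path really does stay in $G$ (it cannot revisit a special vertex), that the two witnesses $w\in W_i^S$ and $w'\in W_i^T$ exist and that the weight-$0$ "entry" edge $(u,w_i^S)$ and "exit" edge $(w_i^T,v)$ are the ones actually used (a shortest path will use them since they cost $0$), and that the two separate $D'$ losses from the $w_i^S$ side and the $w_i^T$ side genuinely add rather than interact. Everything else is a direct triangle-inequality chase parallel to Lemma~\ref{lem:diam_lower_bound}.
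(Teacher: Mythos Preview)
Your proposal is correct and follows essentially the same case analysis as the paper: distinguish whether the shortest $u$--$v$ path in $G_i$ uses neither, one, or both of the special vertices $w_i^S,w_i^T$. The one difference worth noting is in the ``both'' case. The paper does not unfold the triangle-inequality chase as you do; instead it observes that a path of the form $u\to w_i^S\to x\to w_i^T\to v$ (or with the order of the special vertices swapped) can be cut at the intermediate vertex $x\in V_i$ into two halves, one lying entirely in $G_i^S$ and the other entirely in $G_i^T$, and then applies Lemma~\ref{lem:diam_lower_bound} to each half to lose $2D'$ twice. This modular reuse of Lemma~\ref{lem:diam_lower_bound} is a bit cleaner and, in particular, makes the $w_i^T$-then-$w_i^S$ order genuinely parallel to the other order, whereas in your direct argument that case is not quite ``symmetric'' (the weight-$0$ edges now sit in the middle of the path rather than at its ends, so the prefix $u\leadsto w_i^T$ and suffix $w_i^S\leadsto v$ carry the nontrivial weights). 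Your direct computation still goes through in that case, but the paper's split-and-reuse avoids the asymmetry altogether.
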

\begin{proof}
    Consider the shortest path from $u$ to $v$ in $G_i$. If this path does not contain both $w_i^S$ and $w_i^T$, then this path exists in $G_i^S$ or $G_i^T$, and thus we can directly apply Lemma \ref{lem:diam_lower_bound} to get $d_{G_i}(u,v) \ge d_{G_i^S}(u,v) \ge d(u,v)-2D'$ 
    , or $d_{G_i}(u,v) \ge d_{G_i^T}(u,v) \ge d(u,v)-2D'$.
    
    Otherwise, the shortest path from $u$ to $v$ contain both $w_i^S$ and $w_i^T$. Such path can only be one of the following two forms: 
    \begin{itemize}
        \item $u \rightarrow w_i^S \rightarrow x \rightarrow w_i^T \rightarrow v$ for some vertex $x \in V_i$. The first half $u \rightarrow w_i^S \rightarrow x$ is contained in $G_i^S$, so we can apply Lemma \ref{lem:diam_lower_bound} to get $d_{G_i}(u, x) = d_{G_i^S}(u,x) \ge d(u,x)-2D'$; similarly, the second half $x \rightarrow w_i^T \rightarrow v$ is contained in $G_i^T$ so $d_{G_i}(x, v) \ge d(x,v) -2D'$. In total, $d_{G_i}(u,v)=d_{G_i}(u,x)+d_{G_i}(x,v)\ge (d(u,x)-2D')+(d(x,v)-2D') \ge d(u,v)-4D'$.
        \item $u \rightarrow w_i^T \rightarrow x \rightarrow w_i^S \rightarrow v$ for some vertex $x \in V_i$. We can similarly split this path to two halves, and apply the same analysis as the previous case to get $d_{G_i}(u,v) \ge d(u,v)-4D'$. 
    \end{itemize}
\end{proof}

We are now ready to prove our approximation ratio guarantee: $D / (4\ell-1) \le \tilde{D} \le D$.
We prove the result inductively. When $\ell=1$, it is exactly Theorem \ref{thm:min_diam_no_param}. Now assume it is true for $\ell-1$, and we will prove it for $\ell$. 

Clearly $D'\le D$ because $D'$ is the min-eccentricity of a vertex. By induction, the $(4\ell-5)$-approximation for the min-diameter of $G_i$ never exceeds the true min-diameter of $G_i$. Then by Lemma \ref{lem:diam_upper_bound_combined}, the min-diameter of $G_i$ does not exceed the min-diameter of $G$. 
Therefore, we never over estimate the min-diameter. 

If $s^*\in W$ or $t^*\in W$, then since we run Dijkstra from all vertices in $W$ we have $D'=D$. So assuming that $s^*,t^*\notin W$, we have two cases.\\
\noindent {\bf Case 1:} $s^*$ and $t^*$ are not in the same vertex set $V_i$. By Lemma~\ref{lem:sampling}, property 2, there exists $w \in W$ such that one of $s^*$ and $t^*$ is in $S_{w}$ and the other is in $T_{w}$, so by Lemma~\ref{lem:splitting_v}, $\eps(w) \ge D/2$. Since $D' \ge \eps(w)$, we have $D' \ge D/2$. \\
\noindent {\bf Case 2:} $s^*$ and $t^*$ are in the same vertex set $V_i$ for some $i$. 
If $D' \ge D/(4\ell-1)$, $D'$ is already a good approximation.
So assume $D' < D/(4\ell-1)$.
By Lemma \ref{lem:diam_lower_bound_combined}, $\min\{d_{G_i}(s^*,t^*),d_{G_i}(t^*,s^*)\} \ge d_{min}(s^*,t^*)-4D' = D-4D'$. Since we calculate a $(4\ell-5)$-approximation of $G_i$'s min diameter, our estimate is at least $$(D-4D')/(4\ell-5) \ge (D-4(D/(4\ell-1)))/(4\ell-5) = D/(4\ell-1)$$
\paragraph*{Runtime analysis}
    It takes $\tilde{O}(mn^{1/(\ell+1)})$ time to perform the partitioning from Lemma~\ref{lem:sampling} and to perform Dijkstra's algorithm from all $w\in W$ since 
    $|W| = O(n^{1/(\ell+1)})$.
    For all $i$, the number of vertices in $G_i$ is $\abs{V_i}+2=O(n^{\ell/(\ell+1)})$ with high probability by Lemma~\ref{lem:sampling}, property 1, and the number of edges is $m_i+O(n^{\ell/(\ell+1)})$ where $m_i$ is the number of edges in the graph induced by $V_i$. By induction, it takes $\tilde{O}\left((m_i+n^{\ell/(\ell+1)}) \left(n^{\ell/(\ell+1)}\right)^{1/\ell}\right)$ time to compute a $(4\ell-5)$-approximation of Min-Diameter of $G_i$ for each $i$. Summing over all $i$ gives us $\tilde{O}(mn^{1/(\ell+1)})$. 
    

 Note that we apply Lemma~\ref{lem:sampling} at most $poly(n)$ times in the recursion and this the only randomization so the whole algorithm works with high probability. 



\section{Min-Radius Algorithm}\label{section:rad}

\begin{theorem}
\label{thm:radius}
For any constant $\delta$ with $1>\delta>0$, there is an $\tilde{O}(m\sqrt n/\delta)$ time randomized algorithm that, given a directed weighted graph $G=(V,E)$ with weights positive and polynomial in $n$, can output an estimate $R'$ such that $R\le R'\le (3+\delta)R$ with high probability, where $R$ is the min-radius of the $G$. 
\end{theorem}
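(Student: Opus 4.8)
Since edge weights are positive and polynomial in $n$, the min-radius $R$ lies in a range spanned by a geometric grid $\rho_0<\rho_1<\cdots$ with ratio $1+\delta/3$ and only $O(\log n/\delta)$ points. For each guess $\rho$ I would run a subroutine $\mathsf{Test}(\rho)$ in $\tilde{O}(m\sqrt n)$ time that either reports failure or outputs a vertex $v$ together with a certificate of the form ``$\eps_{G_v}(v)\le\rho$'' for a carefully built subgraph $G_v\subseteq G$. Two properties are to be proved about $\mathsf{Test}$: (B) whenever it succeeds, the witnessed vertex $v$ satisfies $\eps_G(v)\le 3\rho$; and (A) it succeeds whenever $\rho\ge R$. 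Property (B) together with $\eps_{G_v}(v)\ge\eps_G(v)\ge R$ (deleting vertices only increases distances) shows $\mathsf{Test}(\rho)$ can only succeed for $\rho\ge R$. Hence, running $\mathsf{Test}$ on all grid points and returning $R':=3\rho^\star$ for the smallest successful $\rho^\star$ gives the theorem: $\rho^\star$ is exactly the first grid point $\ge R$, so $R\le\rho^\star<(1+\delta/3)R$ and $R\le 3R\le R'<(3+\delta)R$; the total running time is $\tilde{O}(m\sqrt n)\cdot O(\log n/\delta)=\tilde{O}(m\sqrt n/\delta)$.

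\textbf{Building a refined DAG-like order inside $\mathsf{Test}(\rho)$.} First apply Lemma~\ref{lem:sampling} with $q=\sqrt n$ to obtain $W$ with $|W|=\tilde{O}(\sqrt n)$, balanced parts $V_1,\dots,V_{\sqrt n+1}$, and the sets $S_w,T_w$; run Dijkstra to and from every $w\in W$. Form the undirected \emph{far graph} $G_{far}$ on $V$ with an edge $\{w,v\}$, $w\in W$, whenever $d_{\min}(w,v)>2\rho$, and let $F_1,\dots,F_p$ be the traces on $W$ of its connected components; the point, proved via the triangle inequality exactly as in the $S_v/T_v$ key observation, is that if $\eps_G(c)\le\rho$ then $c$ lies entirely before or entirely after each $F_i$ in the partial order. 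Next form the directed \emph{close graph} $H$ on $\{F_1,\dots,F_p\}$ with an edge $F_i\to F_j$ whenever some vertex of $F_i$ reaches some vertex of $F_j$ within distance $5\rho$, and contract the strongly connected components of $H$ to get \emph{supercomponents} $C_1\prec\dots\prec C_t$, ordered topologically in the condensation DAG. Then for $C_a\prec C_b$, $w\in W\cap C_a$, $w'\in W\cap C_b$ one gets: $d(w',w)>5\rho$ (else the edge from $w'$'s to $w$'s $F$-component would put $C_b$ before $C_a$), and $d(w,w')\le 2\rho$ (else, with the previous inequality, $d_{\min}(w,w')>2\rho$ and they would be merged in $G_{far}$). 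Using these, prove the \emph{fitting lemma}: every $x\in V$ gets a position $i(x)\in[t]$ with $d(w,x)\le 2\rho$ for all $w\in W$ in $C_1,\dots,C_{i(x)}$ and $d(x,w)\le 2\rho$ for all $w\in W$ in $C_{i(x)+1},\dots,C_t$. Finally, for each $x$, let $G_x$ be an interval of the order around $i(x)$ that is wide enough to contain a suitable number of supercomponents and is bounded on each side by a supercomponent holding an ``anchor'' vertex of $W$ (the analogue of $a_v,b_v$ in the DAG algorithm); run Dijkstra from each $x$ within $G_x$ and declare success iff some $x$ has $\eps_{G_x}(x)\le\rho$.

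\textbf{Correctness of $\mathsf{Test}(\rho)$.} For (B): assume $\eps_{G_x}(x)\le\rho$ and take any $y\in V$. If $y\in G_x$ then $d_{\min}(y,x)\le\rho$. Otherwise, say $y$ precedes $G_x$; let $w\in W$ be the left anchor of $G_x$. Since $w$ sits before $x$ in the order, of the two distances $d(w,x),d(x,w)$ the one bounded by $\rho$ (via $\eps_{G_x}(x)\le\rho$) is $d(w,x)$, while the fitting lemma applied to $y$ gives $d(y,w)\le 2\rho$; hence $d(y,x)\le d(y,w)+d(w,x)\le 3\rho$. The case ``$y$ after $G_x$'' is symmetric, so $\eps_G(x)\le 3\rho$. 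For (A): if $\rho\ge R$, the true center $c$ (with $\eps_G(c)=R\le\rho$) satisfies $\eps_{G_c}(c)\le\rho$, because a path of length $\le\rho$ leaving or entering $c$ can cross only boundedly many supercomponents --- going ``backwards'' costs more than $5\rho$ by the first structural fact, and $G_c$ is chosen wide enough for the forward direction --- so every such path stays inside $G_c$ and certifies $d_{\min}(y,c)\le\rho$ there for all $y$.

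\textbf{The main obstacle.} The hard part will be the fitting lemma together with the precise definition of the partial search graphs and their anchors. The data $\{S_w,T_w\}$ only records, per $w\in W$, whether $x$ is ``before'' or ``after'' $w$, and these verdicts can be inconsistent across the $w$'s inside one supercomponent; making them consistent at the granularity of supercomponents, pinning down $i(x)$, and choosing the anchors so that each boundary of $G_x$ supplies a $W$-vertex at distance $\le\rho$ from $x$ in the \emph{correct} direction (so the triangle-inequality chain in (B) closes with factor exactly $3$, not $4$) --- and dually so that short paths from the center cannot escape $G_c$ --- is the technically heavy part, and it is what forces the constants $2\rho$ and $5\rho$. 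One must also bound the total partial-search cost by $\tilde{O}(m\sqrt n)$: as in the DAG algorithm this needs that consecutive ``bad'' regions (the spans of the $F_i$'s, where the center cannot lie) enclose only $\tilde{O}(\sqrt n)$ vertices, so each edge is scanned in only $\tilde{O}(\sqrt n)$ of the $G_x$, and that $G_{far}$, $H$, and the positions $i(\cdot)$ are computable from the $\tilde{O}(\sqrt n)$ Dijkstra trees within the same budget.
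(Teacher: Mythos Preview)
Your high-level skeleton matches the paper's: geometric search on $\rho$, far graph, close graph, supercomponents ordered by the condensation DAG, partial searches, and a $3\rho$ certificate via a triangle-inequality chain through a $W$-anchor. But two intertwined pieces are missing, and without them neither correctness nor the runtime goes through.

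First, the fitting lemma as you state it is false. Not every $x\in V$ admits a position $i(x)$ with $d(w,x)\le 2\rho$ for all earlier $w\in W$ and $d(x,w)\le 2\rho$ for all later $w\in W$; in the paper's language you are asserting that every vertex lies in some $C_i$, whereas one must also carry a second class $B_i$ for which these bounds hold only with up to two exceptional supercomponents (Lemma~\ref{lem:propB}). This breaks your argument for~(B) when the far vertex $y$ is of $B$-type (the paper absorbs the exceptions by making the partial search graph ten supercomponents wide and by requiring $U_i\supseteq B_{i-2}\cup B_{i-1}\cup B_i$), and it also breaks the runtime: you run partial Dijkstra from \emph{every} $x\in V$, but the $\tilde O(m\sqrt n)$ bound needs at most $\tilde O(\sqrt n)$ sources per position. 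The paper gets this by restricting sources to $\bar C_i\subseteq C_i$ and observing $|C_i|=O(\sqrt n)$ via property~3 of Lemma~\ref{lem:sampling} (each $C_i$ is an intersection $\bigcap S_w\cap\bigcap T_{w'}$, hence contained in a single $V_j$); the sets $B_i$ admit no such bound.

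Second, restricting sources to $C$ forces you to prove that the true center $c$ lies in $C$ (Lemma~\ref{lem:propC}), which in turn needs $c\in S_{W_i}$ or $c\in T_{W_i}$ for every supercomponent $W_i$ (Lemma~\ref{lem:closegraph}). Your far-graph observation only gives this per far-component $F_j$; since a supercomponent merges several $F_j$'s linked by $\le 5\rho$ paths, one can a priori have $c\in S_{F_j}\cap T_{F_{j'}}$ inside one $W_i$. The paper closes this gap with an ``additional Dijkstra'' step you omit entirely: take a sparse strongly connected spanning subgraph of each close-graph SCC, realize each of its edges by a $\le 5\rho$ path in $G$, subdivide that path into pieces of length $\le \rho$ by $O(1)$ waypoints, and run full Dijkstra from every waypoint (still $O(\sqrt n)$ extra Dijkstras in total). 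If $c$ were inconsistent with some $W_i$, two consecutive waypoints $u',v'$ on such a path would straddle $c$, forcing $\eps(v')\le 3\rho$ and terminating the algorithm. Without this step neither the placement of $c$ in the order (and hence property~(A)) nor the $\tilde O(m\sqrt n)$ partial-search budget can be secured.
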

\begin{proof}
We fix a value $r$ and our algorithm either certifies that $R>r$ or $R\leq 3r$. Then by a binary search argument we get a $(3+\delta)$-approximation as follows. Let $\delta'=\delta/3$. Starting from $r=1$, we run the algorithm and increase $r$ for each run. If the output of the algorithm is that $R\le 3r$, then we stop. Otherwise (if $R>r$), we run the algorithm with the new value $r_{new}=(1+\delta')r$. This contributes a multiplicative factor of $\log_{1+\delta'}R=\tilde{O}(1/\delta)$ to the total runtime. Suppose that for some value of $r$ we have $R\le 3r$. So from the previous run of the algorithm, we know that $R>r/(1+\delta')$. Letting $R'=3r$, we have $R\le 3r=R'< 3(1+\delta')R=(3+\delta)R$, which means that $R'$ is a $(3+\delta)$-approximation. Now we present the algorithm. 

\subsubsection*{Algorithm Step 1: Preliminaries}
Let $c$ be the min-center (which is unknown). First we remove all the edges with weight more than $r$, because if $R\le r$, this removal does not change the min-radius. Then we sample a set $W$ of $\sqrt{n}$ vertices according to Lemma \ref{lem:sampling}. For every vertex $v\in W$, we run Dijkstra's algorithm from and to $v$ to obtain the min-distance between $v$ and all other vertices. If there exists a vertex $v\in W$ with $\eps(v)\leq 3r$, we have certified that $R\leq 3r$ so we are done. 

\subsubsection*{Algorithm Step 2: Constructing the ``far graph"}
 Now we can assume that for each $v\in W$, $\eps(v)>3r$. We say that a pair of vertices is {\it far} if their min-distance is more than $2r$, and let the \emph{far graph} $G_{far}$ be an undirected unweighted graph on $V$ defined as follows: for each $u\in W$ and $v\in V$, $(u,v)$ is an undirected edge if $u$ and $v$ are far. We partition $W$ based on the connected components of $G_{far}$. Specifically, for all $i$ define $Z_i$ to be the $i^{th}$ connected component of $G_{far}$ which contains at least one vertex in $W$. Let $F_i = W \cap Z_i$, note that $F_i$ is non-empty.


\subsubsection*{\emph{Analysis Step 2}} 

Remember that we defined $S_U=\bigcap_{v\in U}S_v$ and $T_U=\bigcap_{v\in U}T_v$.

By constructing $G_{far}$, we prune the set of candidate min-centers, as specified in the following lemma.
\begin{lemma}
\label{lem:SorT}
If $R\le r$, then for any $F_i$ either $c\in S_{F_i}$ or $c\in T_{F_i}$. 
\end{lemma} 
\begin{proof}
First note that we have $S_{F_i}\cup T_{F_i}\neq V\setminus F_i$. We know that $c \not\in F_i$ as $F_i \subseteq W$. By way of contradiction, assume that there are two vertices $u,v \in F_i$ such that $c\in S_u\cap T_v$. Consider a path in $G_{far}$ from $u$ to $v$. There must be a pair of adjacent vertices $(u',v')$ on the path such that $c\in S_{u'}\cap T_{v'}$. Then, by definition, $u'$ and $v'$ are far (with respect to the original graph $G$). Since $c\in S_{u'}\cap T_{v'}$, we have $d(v',c)\le r$ and $d(c,u')\le r$, so by the triangle inequality $d(v',u') \le 2r$. Thus $u'$ and $v'$ are not far, a contradiction.
\end{proof}


\subsubsection*{Algorithm Step 3: Defining a DAG-like structure}

\paragraph*{a) Constructing the ``close graph"} The purpose of constructing the close graph is that it allows us to either perform Dijkstra's algorithm from some additional vertices and obtain a good estimate (see step b), or ``merge" some connected components of the far graph to further prune the set of vertices that could be the min-center (see step c). 
 The \emph{close graph} $G_{close}$ is an unweighted directed graph with one vertex $f_i$ for each $F_i$.
For all $i$ and $j$, let $(f_i,f_j)$ be an edge in $G_{close}$ if for some $u\in F_i$ and some $v\in F_j$, $d(u,v)\le 5r$.

\paragraph*{b) Additional Dijkstra} We now perform Dijkstra's algorithm from some additional vertices, which are carefully chosen so that either we find a vertex with small min-eccentricity and are done in this step, or we can define a DAG-like structure in the graph (step c). We compute the strongly connected components (SCCs) of $G_{close}$. For each SCC $Q=(V_Q,E_Q)$, find $E'_Q\subseteq E_Q$ with $|E'_Q|\leq 2|V_Q|$ such that $Q'=(V_Q,E'_Q)$ is strongly connected; it is simple to show that such an $E'_Q$ exists and we include the proof in the appendix for completeness (Lemma~\ref{lem:SCC}). Let $E'=\cup_{Q}E'_Q$. Note that every edge $e\in E'$ corresponds to a path $P_e$ of length at most $5r$ in the original graph $G$. For each $e\in E'$, find an ordered set $V_e$ of at most 9 vertices on $P_e$ that divide $P_e$ into subpaths of length at most $r$; it is simple to show that such a $V_e$ exists and we include the proof in the appendix for completeness (Lemma~\ref{lem:divideTo_r_sections}). We run Dijkstra's algorithm from every vertex in $V_e$ and if we find a vertex $v$ with $\eps(v)\leq 3r$ then we are done.

\paragraph*{c) Constructing the DAG of ``supercomponents"} 

Let $H$ be the DAG created by contracting every strongly connected component of $G_{close}$ into a single vertex. That is, there is an edge from $u$ to $v$ in $H$ if the strongly connected component $v$ is reachable from the strongly connected component $u$. Let $k$ be the number of vertices in $H$; we number the vertices in $H$ from 1 to $k$ according to a topological ordering. For each $j\in [k]$, we merge the set of $F_i$'s represented by vertex $j$ in $H$ into a \emph{supercomponent} $W_j$. Formally, if we define $F_u$ to be the connected component of $G_{far}$ that contains $u$, a vertex $u\in W$ is in supercomponent $W_j$ if $f_u$ is in the strongly connected component of $H$ represented by vertex $j$. 

\paragraph*{d) Fitting the remaining vertices into the DAG structure} In the previous step, we defined a DAG-like structure on the vertices in $W$. Now we place the rest of the vertices into this structure. We partition the rest of the vertices based on whether they could potentially be the min-center. We define the vertex sets $C$ and $B$ next and in the analysis we prove that $c\in C$ (among other properties of $C$ and $B$). 
We will use the following notation: for any distance $d>0$, let $S_v^d=\{u\in S_v: d(u,v)\le d\}$ and let $T_v^d=\{u\in T_v: d(v,u)\le d\}$. Remember that for any set $U$ of vertices, we defined $S_U^d=\bigcap_{v\in U}S_v^d$, and $T_U^d=\bigcap_{v\in U}T_v^d$.
\begin{itemize}
    \item For $i=1,\ldots, k+1$, let $v\in C_i$ if for all $j<i$, $v\in T_{W_j}^{2r}$ and for all $j\ge i$, $v\in S_{W_j}^{2r}$. Let $C=\cup_{i=1}^{k+1} C_i$.
    \item For $i=2,\ldots, k+1$, let $v\in B_i$ if $v\notin C$ and $i$ is the largest integer for which $v\in T_{W_{i-1}}^{2r}$. Let $v\in B_1$ if there is no such $i$ and $v\notin C$. Let $B=\cup_{i=1}^{k+1} B_i$.
\end{itemize}


\subsubsection*{\emph{Analysis Step 3}}

Figure \ref{fig:graph-structure} shows a summary of the structure of the graph which we will describe in the following observations and lemmas. 
 \begin{figure}[h]
   \centering
     \includegraphics[width=0.5\textwidth]{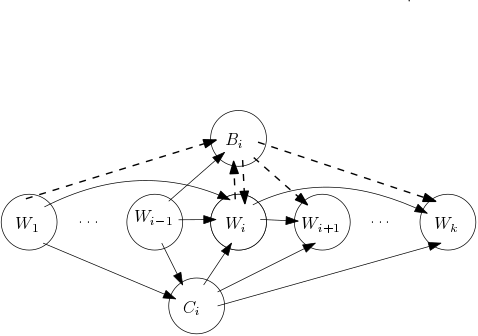}
     \caption{The graph structure for the sets $W_i$, $B_i$ and $C_i$. Solid lines are paths of length at most $2r$ between any member of the outgoing set to any member of the incoming set. Dashed lines are paths of length at most $2r$ which might not exist between all pairs, which is expressed more accurately in Lemma \ref{lem:propB}.}
 	\label{fig:graph-structure}
 \end{figure}
 
We first observe two important properties of supercomponents:
\begin{observation}\label{obs:backedge}
For every pair of vertices $v_i \in W_i$ and $v_j \in W_j$ with $i<j$, $d(v_j,v_i)>5r$.
\end{observation}

This is true because if $d(v_j,v_i)\le 5r$, then there is an edge from $f_j$ to $f_i$ in $G_{close}$, so there is an edge from $j$ to $i$ in $H$. Since $i<j$, this contradicts the topological ordering of $H$.

\begin{observation}
\label{obs:completegraph}
For every pair of vertices $v_i \in W_i$ and $v_j \in W_j$ with $i<j$, $v_i\in S_{W_j}^{2r}$ and $v_j\in T_{W_i}^{2r}$.
\end{observation}

This is true because $v_i$ and $v_j$ are in different $F_k$'s since $W_i$ and $W_j$ are collections of disjoint sets of $F_i$'s. So $v_i$ and $v_j$ are not far i.e. $d_{min}(v_i, v_j) \le 2r$ and by Observation \ref{obs:backedge} we know that $d(v_j,v_i)>5r>2r$, so it must be that $d(v_i,v_j)\le 2r$. Since this is true for all vertices $v_j\in W_j$, we have $v_i\in S_{W_j}^{2r}$. Similarly, $v_j\in T_{W_i}^{2r}$.

We now prove a refinement of Lemma~\ref{lem:SorT} where we consider supercomponents instead of far graph components. This further prunes the vertices that could potentially be the min-center.

\begin{lemma}
\label{lem:closegraph}
If $R\le r$, then for each $i=1,\ldots,k$, either $c\in S_{W_i}$ or $c\in T_{W_i}$.
\end{lemma}

\begin{proof}
Fix $i$ and suppose by way of contradiction that there are nodes $u,v\in W_i$ such that $c\in S_u\cap T_v$. By Lemma~\ref{lem:SorT}, $u$ and $v$ must be in different $F_i$'s say $F_u$ and $F_v$. 

 Recall that by the definition of a supercomponent, $f_u$ and $f_v$ are in the same strongly connected component of $G_{close}$. So there is a path $P$ from $f_u$ to $f_v$ in $G_{close}$ such that all of its edges are in $E'$. By Lemma~\ref{lem:SorT} Since $c\in S_u\cap T_v$, we have that $c\in S_{F_u}\cap T_{F_v}$. So there are two consecutive nodes $f_{j}$ and $f_{j'}$ on $P$ (in that order) such that $c\in S_{F_{j}}\cap T_{F_{j'}}$. 

Recall that each edge $e\in E'$ corresponds to a path $P_e$ of length at most $5r$ in the original graph. Let $e$ be the edge $(f_j,f_{j'})$ and consider $P_e$ and $V_e$, where $V_e$ is the set of vertices that divides $P_e$ into subpaths of length at most $r$. Since the endpoints of $P_e$ are in $F_j$ and $F_{j'}$ respectively, there exists a pair of vertices $u',v'$ consecutive in $V_e$ (in that order) such that $c\in S_{u'}\cap T_{v'}$. We note that $d(u',v')\leq r$.



Now we claim that $\eps(v')\leq 3r$. This is because $d(v',c)\le r$ and $d(c,v')\le d(c,u')+d(u',v') \le 2r$. Consider an arbitrary vertex $w\in V$. Either $d(c,w)\le R$ or $d(w,c)\le R$. If $d(c,w)\le R$ then $d(v',w)\le d(v',c)+d(c,w) \le 2r$. If $d(w,c)\le R$, then $d(w,v')\le d(w,c)+d(c,v')\le 3r$. In this case, the algorithm would have stopped after step 3b.

\end{proof}

We now prove that $c\in C$, which further prunes the vertices that could potentially be the min-center.

\begin{lemma}
\label{lem:propC}
If $R\le r$, then $c\in C$.
\end{lemma}
\begin{proof}
By Lemma~\ref{lem:closegraph}, either $c\in S_{W_i}$ or $c\in T_{W_i}$. Since $c$ is the min-center and $R\leq r$, if $c\in S_{W_i}$ then $c\in S_{W_i}^{2r}$, and similarly if $c\in T_{W_i}$ then $c\in T_{W_i}^{2r}$. We claim that for each $i<j$, $S_{W_i}^{2r}\cap T_{W_j}^{2r} = \emptyset$, which completes the proof. Suppose otherwise and let $i$ and $j$ be such that $i<j$ and there is a vertex $v\in S_{W_i}^{2r}\cap T_{W_j}^{2r}$. Then for every vertex $v_i\in W_i$ and $v_j\in W_j$, $d(v_j,v)\le 2r $ and $d(v,v_i)\le 2r$, so $d(v_j,v_i)\le 4r$. This contradicts Observation \ref{obs:backedge}.
\end{proof}

Now we prove that the vertices in $B$ fit into the DAG structure in a similar but weaker sense than the vertices in $C$:

     

\begin{lemma}
\label{lem:propB}
    Consider a node $v\in B_i$. Then for all $z\ge  i$ except for at most two values, we have $v\in S_{W_z}^{2r}$. And for all $z\le i$ except for at most two values, we have $v\in T_{W_z}^{2r}$.

\end{lemma}

\begin{proof}
 We first observe that there is at most one $j$ such that $v$ is far from some vertex in $W_j$. This is because if $v$ were far from two vertices $u,w$ in different supercomponents, then $G_{far}$ would contain the edges $(u,v)$ and $(w,v)$ making $u$ and $w$ in the same connected component of $G_{far}$, and thus in the same supercomponent. We fix $j$ and consider two cases:
 
\noindent {\bf Case 1:} Suppose by way of contradiction that for some node $w\in W_z$ for some $z<i,$ $z\neq j$, we have $v\in S_{w}^{2r}$. We know that $z<i-1$, since by definition of $B_i$, we have $v\in T_{W_{i-1}}^{2r}$. Let $w'\in W_{i-1}$ be an arbitrary node, then $d(w',w)\le d(w',v)+d(v,w)\le 2r+2r< 5r$, a contradiction to Observation~\ref{obs:backedge}.

\noindent {\bf Case 2:} Now suppose that for some node $w\in W_z$ for some $z>i,$ $z\neq j$, we have $v\in T_{w}^{2r}$. We will show that $j=i$ and $z=i+1$; that is, for all $z'\ge i+2$, we have that $v\in S_{W_{z'}}^{2r}$. If there is some node $w'\in W_i$ such that $v\in S_{w'}^{2r}$, then $d(w,w')\le d(w,v)+d(v,w')\le 2r+2r<5r$, a contradiction to Observation~\ref{obs:backedge}. Assume that there is no such $w'$ i.e. $d(v,w')>2r$ for all $w'\in W_i$. Then for every node $w'\in W_i$, either $v$ and $w'$ are far or $d(w',v)\leq 2r$. If for all $w'\in W_i$, $d(w',v)\leq 2r$, then $v\in T_{W_{i}}^{2r}$, which cannot happen since by the definition of $B_i$, $i$ is the biggest integer that $v\in T_{W_{i-1}}^{2r}$. Thus, $v$ is far from some vertex in $W_i$ so we have that $j=i$.  If $z>i+1$, then by definition of $B_i$ there is some vertex $u\in W_{i+1}$ such that $v\in S_{u}^{2r}$. So $d(w,u)\le d(w,v)+d(v,u)\le 2r+2r<5r$, a contradiction to Observation~\ref{obs:backedge}. So it must be that $z=i+1$. So for all $z'\ge i+2$, we have that $v\in S_{W_{z'}}^{2r}$.
\end{proof}

We have observed stronger properties than Lemma~\ref{lem:propB} for vertices $v\in W_i$ (Observation \ref{obs:completegraph}) and $v\in C_i$ (by definition), so we have the following corollary. 

\begin{corollary}
\label{cor:Bproperty_extension}
Lemma \ref{lem:propB} is true for all $v\in B_i\cup C_i\cup W_i$. Moreover, for such $v$'s, we have $v\in T_{W_{i-1}}^{2r}$.
\end{corollary}

\subsubsection*{Algorithm Step 4: Partial search} From each of the potential min-centers, we will run Dijkstra's algorithm on a small subgraph of $G$. For each $i=1,\ldots,k+1$, let $G_i$ be the subgraph of $G$ induced by $W_{i-6}\cup\ldots\cup W_{i+3}\cup B_{i-6}\cup\ldots\cup B_{i+3}\cup C_{i-5}\cup \ldots\cup C_{i+3}$. Define $\bar{C}_i$ to be the set of nodes $v \in C_i$ such that $v$ is within min-distance $r$ from all vertices in $W$ (we know this set of nodes because we have already run Dijkstra's algorithm from and to every vertex in $W$). For every vertex $v$ in $\bar{C}_i$, run Dijkstra's algorithm from $v$ with respect to the graph $G_i$. If $v$ is within min-distance $r$ from all nodes in $U_i=C_i\cup B_{i-2}\cup B_{i-1}\cup B_i$, we will show that $R\leq 3r$. If there is no such $v$, we will show that $r<R$.

\subsubsection*{\emph{Analysis Step 4}} 

The following two claims prove that our algorithm either certifies that $R>r$ or $R\leq 3r$.


\begin{claim}
\label{claim:sp_inside}
 For some $i$, if $c\in \bar{C}_i$ and $R\le r$, then for all $u\in U_i$, the min-distance between $c$ and $u$ with respect to $G_i$ is at most $r$. 
\end{claim}
\begin{claim}
\label{claim:approx_center_valid}
If a vertex $v\in \bar{C}_i$ is within min-distance $r$ from all vertices in $U_i$ with respect to the graph $G_i$, then $R \le \eps(v)\leq 3r$. 
\end{claim}
\begin{proof}[Proof of Claim \ref{claim:sp_inside}] 
We will prove something slightly stronger: for all $i$ any shortest path in $G$ between two nodes $u,u'\in U_i$ that has length at most $r$ is completely contained in $G_i$. 

By way of contradiction, suppose that the shortest path $P$ from $u$ to $u'$ is not completely in $G_i$. Define $V_{r}$ and $V_{l}$ to be sets of nodes on the right and left of $G_i$ respectively, i.e. $V_{r}=W_{i+4}\cup \ldots \cup W_{k}\cup B_{i+4}\cup \ldots \cup B_{k+1}\cup C_{i+4}\cup \ldots \cup C_{k+1}$ and $V_{l}=W_{1}\cup \ldots \cup W_{i-7}\cup B_{1}\cup \ldots \cup B_{i-7}\cup C_{1}\cup \ldots \cup C_{i-6}$. 

First suppose that $P$ contains some node $v_r\in V_r$. There is some $j>i+3$ such that $v_r\in B_j\cup C_j\cup W_j$. So by Corollary \ref{cor:Bproperty_extension}, $v_r\in T_{W_{j-1}}^{2r}$.
Furthermore, Corollary \ref{cor:Bproperty_extension} implies that there is some $j'\in \{i,i+1,i+2\}$ such that $u'\in S_{W_{j'}}^{2r}$. Pick $w_{j'}\in W_{j'}$ and $w_{j-1}\in W_{j-1}$. We have that $d(w_{j-1},w_{j'})\le d(w_{j-1},v_r)+ d(v_r,u')+d(u',w_{j'})\le 2r+r+2r = 5r$. Since $j-1>j'$, this contradicts Observation~\ref{obs:backedge}. This case is shown in Figure \ref{fig:claim1case1}.

 \begin{figure}
   \centering
     \includegraphics[width=\textwidth]{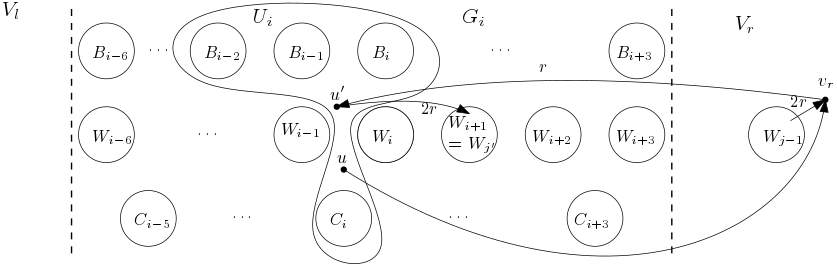}
     \caption{First case in Claim \ref{claim:sp_inside} where the path $P$ from $u$ to $u'$ passes through some vertex $v_r\in V_r$. In this figure $j'=i+1$. The upper bound on the weight of each part of the path from $W_{j-1}$ to $W_{j'}$ is written on the edges.}
 	\label{fig:claim1case1}
 \end{figure}

Now suppose that $P$ contains some node $v_l$ in $V_l$. The argument in this case is symmetric to the previous case. Since $u\in U_i$, there is some $j\in \{i-2,i-1,i\}$ such that $u\in B_j\cup C_j$, and hence by Corollary \ref{cor:Bproperty_extension}, $u\in T_{W_{j-1}}^{2r}$. Furthermore, Corollary \ref{cor:Bproperty_extension} implies that there is at least one value $j'\in \{i-5,i-4,i-3\}$ such that $v_l\in S_{W_{j'}}^{2r}$. Pick $w_{j'}\in W_{j'}$ and $w_{j-1}\in W_{j-1}$. 
We have that $d(w_{j-1},w_{j'})\le d(w_{j-1},u)+d(u,v_l)+d(v_l,w_{j'})\le 2r+r+2r=5r$.
Since $j'<j-1$, this contradicts Observation~\ref{obs:backedge}.
\end{proof}

\begin{proof}[Proof of Claim \ref{claim:approx_center_valid}]
We show that for any node $u\in V$ we have $d_{min}(u,v)\le 3r$. We have $3$ cases:

\noindent{\bf Case 1:} $u\in W$: From the definition of $\bar{C}_i$, we know that $v$ has min-distance at most $r$ to all vertices in $W$.

\noindent{\bf Case 2:} $u\in C_j$ for some $j=1,\ldots,k+1$. If $j=i$, then $u\in U_i$ so we know that $d_{min}(u,v)\le r$. If $j> i$, then pick some vertex $w_i\in W_i$. By the definition of $C_i$ and $C_j$ we know that $d(v,u)\le d(v,w_i)+d(w_i,u)\le r+2r = 3r$. Similarly, if $j\le i-1$, pick some vertex $w_{i-1}\in W_{i-1}$. Then $d(u,v)\le d(u,w_{i-1})+d(w_{i-1},v)\le 2r+r=3r$.

\noindent{\bf Case 3:} $u\in B_j$ for some $j=1,\ldots, k+1$. If $j\in \{i-2,i-1,i\}$, then since $v\in \bar{C}_i$ we know that $d_{min}(u,v)\le r$. 
So first suppose that $j\le i-3$. Then by Lemma \ref{lem:propB}, there is at least one $j'\in \{j+1,\ldots,i-1\}$, such that $u\in S_{W_{j'}}^{2r}$. Pick some node $w_{j'}\in W_{j'}$. So by the definition of $C_i$ we have that $d(u,v)\le d(u,w_{j'})+d(w_{j'},v)\le 2r+r=3r$. 
Now suppose that $j\ge i+1$. Then by definition of $B_j$ we know that $u\in T_{W_{j-1}}^{2r}$. Pick some vertex $w_{j-1}\in W_{j-1}$. Since $j-1\ge i$ and by the definition of $C_i$, we have that $d(v,u) \le  d(v,w_{j-1})+d(w_{j-1},u)\le r+2r=3r.$

\end{proof}



\subsubsection*{Runtime analysis}
We analyze the running time of each step.

\noindent{\bf Step 1, preliminaries:} $\tilde{O}(m\sqrt{n})$. This is because each Dijkstra in Step 1 takes $\tilde{O}(m)$ time and $|W|=\sqrt{n}$.

\noindent{\bf Step 2, constructing the “far graph”:} $\tilde{O}(n\sqrt{n})$. Each edge in the far graph has at least one endpoint in $W$, and so the construction of the far graph takes $O(n\sqrt{n})$ time. Note that the existence of each edge in the far graph was determined in Step 1.

\noindent{\bf Step 3, defining a DAG-like structure:} 

{\bf a, constructing the “close graph”: } $\tilde{O}(n\sqrt{n})$. This is because the connected components of the far graph can be determined in $O(n\sqrt{n})$ time since it has that many edges. The number of components containing a node in $W$ are not more than $|W|$, and so the close graph which is on at most $|W|$ nodes can be constructed in time $O(|W|^2)=O(n)$.

{\bf b, additional Dijkstra:} $\tilde{O}(m\sqrt{n})$. This is because by Lemma \ref{lem:SCC}, the number of vertices we run Dijkstra from in SCC $Q$ of the close graph is at most $9|E_Q|\le 18|V_Q|$. Since the number of vertices in close graph is at most $|W|$, we run Dijkstra from at most $18|W|=\tilde{O}(\sqrt{n})$ vertices. Also running the algorithm of Lemma \ref{lem:SCC} takes $O(|E_Q|)=O(|V_Q|^2)$ for each SCC $Q$, which takes $O(|W|^2)=\tilde{O}(n)$ time in total.

{\bf c, constructing  the  DAG  of  “supercomponents”:} $\tilde{O}(n)$. This is because $H$
has at most $|W|$ vertices, so obtaining the DAG ordering of $H$ takes at most $|W|^2=\tilde{O}(n)$ time.

{\bf d, fitting the remaining vertices into the DAG structure:} $\tilde{O}(n\sqrt{n})$. For each vertex in $V$, it takes $O(|W|)$ time to see which set it belongs to, since it only depends on its distances to and from the vertices in $W$. 

\noindent{\bf Step 4, partial search:} $\tilde{O}(m\sqrt{n})$. The Dijkstras ran in $G_i$ take $\tilde{O}(m_{i}|C_i|)$ time, where $m_i$ is the number of edges with at least one endpoint in $G_i$. By Lemma \ref{lem:sampling}, property 3, with high probability $|C_i|=O(\sqrt{n})$. We know that $ \bar{C}_i\subseteq C_i$ and so the running time of this step is $O(\sqrt{n}\sum_{i=1}^{k+1} m_i)$. Now since each node is in at most $10$ $G_i$s, we have that each edge is also in at most $20$ $G_i$s, and hence $\sum_{i=1}^{k+1} m_i \le 20m$.

So overall the algorithm runs in $\tilde{O}(m\sqrt{n})$ time. 



\end{proof}


\section{Min-Eccentricities Algorithm}\label{section:ecc}


The min-eccentricities algorithm is similar to the min-radius algorithm. Below we will describe the modifications.

\begin{theorem}
\label{thm:eccntricities}
For any constant $\delta$ with $1>\delta>0$, there is an $\tilde{O}(m\sqrt n/\delta)$ time randomized algorithm, that given a directed weighted graph $G=(V,E)$ with weights positive and polynomial in $n$, can output an estimate $\eps'(s)$ for every vertex $s \in V$ such that $\eps(s)\le \eps'(s)\le (5+\delta)\eps(s)$ with high probability, where $\eps(s)$ is the min-eccentricity of the vertex $s$ in $G$. 
\end{theorem}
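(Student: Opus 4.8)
We obtain Theorem~\ref{thm:eccntricities} by a \emph{threshold} version of the Min-Radius algorithm of Theorem~\ref{thm:radius}. The plan is: for a geometric sequence of guesses $\rho=1,\,(1{+}\delta'),\,(1{+}\delta')^2,\dots$ with $\delta'=\delta/5$, run the Min-Radius algorithm with $r$ set to $\rho$ (prune edges of weight $>\rho$; sample $W$ via Lemma~\ref{lem:sampling}; build the far graph with threshold $2\rho$ and the close graph with threshold $5\rho$; form the supercomponent DAG and the sets $C_i,B_i,U_i$ and subgraphs $G_i$; and run the partial searches), but with two modifications. First, the algorithm never terminates early: it always executes all four steps. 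Second, every vertex from which a Dijkstra search is performed --- the vertices of $W$ and of the sets $V_e$ of Step~3b --- is searched in \emph{both} directions, so that its exact min-eccentricity and its distances to and from every vertex are known. After this computation we assign estimates by two rules below and keep, for each vertex, the first estimate it ever gets, advancing $\rho$ until every vertex is assigned (a vertex still unassigned once $\rho>n\cdot W_{\max}$ has $\eps(s)=\infty$ and gets $\eps'(s)=\infty$).

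The two rules are as follows. \textbf{Tool rule:} for each searched vertex $v$ with (computed) $\eps(v)\le 3\rho$ and each $s$ with $\max\{d(s,v),d(v,s)\}\le 2\rho$, set $\eps'(s)\gets 5\rho$; this is sound because for every $u$ either $d(v,u)\le 3\rho$, so $d(s,u)\le d(s,v)+d(v,u)\le 5\rho$, or $d(u,v)\le 3\rho$, so $d(u,s)\le d(u,v)+d(v,s)\le 5\rho$, hence $\eps(s)\le 5\rho$. \textbf{Partial-search rule:} exactly as in the Min-Radius algorithm, whenever a vertex $v\in\bar{C}_i$ is within min-distance $\rho$ of all of $U_i$ inside $G_i$, set $\eps'(v)\gets 3\rho$; soundness is Claim~\ref{claim:approx_center_valid}, which is a purely structural statement about $\bar{C}_i$, $U_i$ and $G_i$ that does not mention the min-center. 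Since pruning edges of weight $>\rho$ only increases distances, both rules give $\eps(s)\le\eps'(s)$, i.e.\ soundness. For the running time, each value of $\rho$ costs $\tilde{O}(m\sqrt n)$ by the runtime analysis of Theorem~\ref{thm:radius} --- the extra reverse Dijkstras and the $O(n)$-per-tool scan over the $\tilde{O}(\sqrt n)$ searched vertices add only $\tilde{O}(m\sqrt n+n\sqrt n)$ --- and there are $O(\log_{1+\delta'} n^{O(1)})=\tilde{O}(1/\delta)$ guesses, for a total of $\tilde{O}(m\sqrt n/\delta)$.

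The main work is \emph{completeness}: if $\eps(s)\le\rho$ in $G$ (hence also in the pruned graph $G'$, since a min-eccentricity-$\le\rho$ path survives pruning at $\rho$) then $s$ receives an estimate of at most $5\rho$ at threshold $\rho$. I would prove this by rerunning the Min-Radius analysis with ``the min-center $c$'' replaced throughout by ``$s$''. Lemma~\ref{lem:SorT}, Observations~\ref{obs:backedge} and~\ref{obs:completegraph}, and the structural claims carry over unchanged, since they never used that $c$ was \emph{the} center, only that $\eps(c)\le\rho$. The sole place where the Min-Radius proof uses early termination is inside Lemma~\ref{lem:closegraph}: if $s\in S_u\cap T_v$ for some $u,v$ in one supercomponent, one extracts a vertex $v'$ on some $P_e$ with $d(v',s)\le\rho$, $d(s,v')\le 2\rho$, and $\eps(v')\le 3\rho$. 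In our algorithm this $v'$ lies in some $V_e$ and was searched in both directions, so it is an eligible tool with $\max\{d(s,v'),d(v',s)\}\le 2\rho$, and the tool rule assigns $\eps'(s)\le 5\rho$. Otherwise the dichotomy ``$s\in S_{W_i}$ or $s\in T_{W_i}$'' holds for every supercomponent; then, exactly as in Lemma~\ref{lem:propC}, $s\in C_i$ for some $i$, and since $\eps(s)\le\rho$ we also get $s\in\bar{C}_i$; Claim~\ref{claim:sp_inside} (again structural) shows every length-$\le\rho$ shortest path between members of $U_i$ stays in $G_i$, so $s$ is within min-distance $\rho$ of all of $U_i$ in $G_i$ and the partial-search rule assigns $\eps'(s)=3\rho$. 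Finally, letting $\rho^{*}$ be the first threshold at which $s$ is assigned, soundness gives $\eps(s)\le\eps'(s)$, and completeness (applied at the first power of $1{+}\delta'$ above $\eps(s)$, which does not raise $s$'s min-eccentricity above itself) gives $\rho^{*}<(1{+}\delta')\eps(s)$, hence $\eps'(s)\le 5\rho^{*}<5(1{+}\delta')\eps(s)=(5{+}\delta)\eps(s)$.

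I expect the completeness argument to be the crux, and in particular pinning the constant at exactly $5$ rather than something larger. This relies on the precise \emph{asymmetric} bounds $d(v',s)\le\rho$ and $d(s,v')\le 2\rho$ produced by the close-graph construction: a weaker statement such as merely $d_{\min}(v',s)\le 2\rho$ would not let the tool rule bound \emph{both} $d(s,u)$ and $d(u,s)$. It also requires a careful audit that none of the Min-Radius structural lemmas secretly exploited early termination beyond the one occurrence in Lemma~\ref{lem:closegraph}, and that the sets $\bar{C}_i$, $U_i$ and the ``within min-distance $\rho$ of all of $W$'' condition behave for an arbitrary low-eccentricity vertex exactly as they did for the min-center. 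The remaining ingredients --- the reverse Dijkstras, the per-tool scan, and the $\tilde{O}(1/\delta)$ rescalings --- are routine bookkeeping.
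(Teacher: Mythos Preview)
Your proposal is correct and follows essentially the same route as the paper: run the Min-Radius machinery at a geometric sequence of thresholds $\rho$, never terminate early, and replace the single early-termination point inside Lemma~\ref{lem:closegraph} by a ``tool'' step that certifies $\eps(s)\le 5\rho$ whenever $s$ is close in both directions to a searched vertex of small eccentricity.

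The only noticeable difference is the precise form of the tool rule. The paper certifies $s$ using a \emph{consecutive pair} $(a,b)$ on some $P_e$ with $\eps(a),\eps(b)\le 3\rho$ and $s\in S^{\rho}_b\cap T^{\rho}_a$, whereas you use a single searched vertex $v$ with $\eps(v)\le 3\rho$ and $\max\{d(s,v),d(v,s)\}\le 2\rho$. Your formulation is a little cleaner and the completeness argument you give (the extracted $v'$ has $d(v',s)\le\rho$, $d(s,v')\le 2\rho$, hence $\max\le 2\rho$, and $\eps(v')\le 3\rho$ via the triangle lemma) lines up exactly with what the paper proves about $u',v'$ in its modified Lemma~\ref{lem:closegraph}. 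Either rule yields the same $5\rho$ bound and the same running time, so this is a cosmetic variation within the same strategy.
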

\begin{proof}
We fix a value $\rho$ and our algorithm certifies for each $s \in V$ that either $\eps(s)>\rho$ or $\eps(s)\leq 5\rho$ with high probability. Starting from $\rho=1$, we will run the algorithm and increase $\rho$ for each run. We will call the vertices for which we have certified $\eps(s)\leq 5\rho$ for earlier values of $\rho$ as \emph{marked}. Let $\delta' = \delta/5$. Starting from $\rho=1$, we run the algorithm. If the output of the algorithm is that $\eps(s)\le 5\rho$ and $s$ was unmarked, then we will mark $s$ and set $\eps'(s) = 5\rho$. Then, we run the algorithm with the new value $\rho_{new}=(1+\delta')\rho$. Since $\eps(s)\le poly(n)$ for all $s\in V$, this contributes a multiplicative factor of $\log_{1+\delta'}n=\tilde{O}(1/\delta)$ to the total runtime. Suppose that for some value of $\rho$ and for some vertex $s$ we have $\eps(s)\le 5\rho$ and $s$ was unmarked. From the previous run of the algorithm, we know that $\eps(s) >\rho/(1+\delta')$. Then for $\eps'(s)=5\rho$, we have $\eps'(s) \geq \eps(s)$ and $\eps'(s)\le 5(1+\delta')\eps(s)=(5+\delta)\eps(s)$, which means that $\eps'(s)$ is a $(5+\delta)$-approximation of $\eps(s)$. After running the whole algorithm for this value of $\rho$ we will also mark all such vertices $s$. Now we present the algorithm.

Throughout the algorithm $\rho$ behaves analogously to $r$ in the min-radius algorithm. Whenever we say that a certain part of the algorithm is the same we mean that it is same after replacing $r$ by $\rho$. Note that any vertex $s$ with $\eps(s) = \rho$ satisfies the property that its min-distance to any vertex is at most $\rho$. This is analogous to the center vertex $c$ in the Min-Radius algorithm using $r = \rho$.

\subsubsection*{Algorithm Step 1: Preliminaries}
First we remove all the edges with weight more than $\rho$, because if for a vertex $s$ with $\eps(s)\le \rho$, this removal does not change the min-eccentricity of $s$. Then we sample a set $W$ of $\sqrt{n}$ vertices according to Lemma \ref{lem:sampling}. For every vertex $v\in W$, we run Dijkstra's algorithm from and to $v$ to obtain the min-distance between $v$ and all other vertices. This means we know $\eps(v)$ for all $v \in W$ and in particular we know if $\eps(v) > \rho$ or $\eps(v) \le 3\rho$. We use the vertices in $W$ with min-eccentricity less than $3\rho$ to detect vertices with min-eccentricity less than $5\rho$ in the graph.

\subsubsection*{Algorithm Step 2: Constructing the ``far graph"}
The far graph and the $F_i$'s are defined the same way as in the min-radius algorithm. 


\subsubsection*{\emph{Analysis Step 2}} 
The purpose of constructing $G_{far}$ is to prune the set of vertices that could potentially have low min-eccentricity.
Next we state a modified Lemma~\ref{lem:SorT}.
\begin{lemma}[Modification of Lemma~\ref{lem:SorT}]
\label{lem:SorT-ecc}
If for a vertex $s \in V\setminus W$, $\eps(s)\le \rho$, then for any $F_i$, either $s\in S_{F_i}$ or $s\in T_{F_i}$. 
\end{lemma}
\begin{proof}
For $s \in F_i$ note that $F_i \subseteq W$ and hence we know $\eps(s)$ and have certified either $\eps(s) > \rho$ or $\eps(s) \le 3\rho$. For the other vertices the proof is analogous to that of Lemma~\ref{lem:SorT} 
\end{proof}

\subsubsection*{Algorithm Step 3: Defining a DAG-like structure}

\paragraph*{a) Constructing the ``close graph"} The purpose of constructing the close graph is that it allows us to perform Dijkstra's algorithm from some additional vertices and certify some vertices as having min-eccentricities $\le 5\rho$. Then  we   ``merge" some connected components of the far graph to further prune the set of vertices that could be having small min-eccentricities (see step c). 
$G_{close}$ is defined as in the min-radius algorithm.

\paragraph*{b) Additional Dijkstra} This step of the algorithm diverges from the min-radius algorithm at the end, and hence we state it in full detail. Similar to the min-radius algorithm, we perform Dijkstra's algorithm from some additional vertices, which are chosen so that we detect more vertices with low min-eccentricity and at the end define a DAG-like structure (step c). Recall that we compute the strongly connected components (SCCs) of $G_{close}$. For each SCC $Q=(V_Q,E_Q)$, find $E'_Q\subseteq E_Q$ with $|E'_Q|\leq 2|V_Q|$ such that $Q'=(V_Q,E'_Q)$ is strongly connected (where the existence of $E_Q'$ is shown in Lemma~\ref{lem:SCC}). Let $E'=\cup_{Q}E'_Q$. Recall that every edge $e\in E'$ corresponds to a path $P_e$ of length at most $5\rho$ in the original graph $G$. For each $e\in E'$, find an ordered set $V_e$ of at most 9 vertices on $P_e$ that divide $P_e$ into sections of length at most $\rho$ (see Lemma \ref{lem:divideTo_r_sections}). For each $e\in E'$, we run Dijkstra's algorithm from and to every vertex in $V_e$. This means we know $\eps(v)$ for all $v \in V_e$; and in particular we know whether $\eps(v) > \rho$ or $\eps(v) \le 3\rho$. Now here is the new part of the algorithm in this step: For every consecutive pair of vertices $(a, b)$ in $V_e$ over all $e$ with $\eps(a), \eps(b) \leq 3\rho$ we certify for all $s \in S^{\rho}_b \cap T^{\rho}_a$ that $\eps(s) \le 5\rho$.

\paragraph*{c) Constructing the DAG of ``supercomponents"} 

The graphs $H$, $W_i$'s and the ``supercomponents" are defined as in the min-radius algorithm. 

\paragraph*{d) Fitting the remaining vertices into the DAG structure} In the previous step, we defined a DAG-like structure on the vertices of $W$. Now we place the rest of the vertices into this structure. We partition the rest of the vertices based on whether they haven't been certified to have eccentricity $\le 5\rho$ \emph{and} could potentially have small eccentricity. Vertex sets $C$ and $B$ are defined as in the min-radius algorithm. In the analysis we prove that all vertices which haven't been certified to have eccentricity $\le 5\rho$ \emph{and} could potentially have small eccentricity must be in $C$, among other properties of $C$ and $B$.


\subsubsection*{\emph{Analysis Step 3}}
First note that one major difference of this algorithm and the min-radius algorithm is in part b; in the min-radius algorithm we stop whenever we find a good approximate center among the vertices in $V_e$s, but here we can only upper bound the eccentricity of some vertices by $5\rho$ if we find vertices with eccentricity $\le3\rho$ among $V_e$s. 

We first show that if for some vertex $s$ and for some consecutive pair of vertices $(a,b)$ in $V_e$ such that $\eps(a),\eps(b)\le 3\rho$ and $s\in S_b^{\rho}\cap T_a^{\rho}$, then $\eps(s)\le 5\rho$. This is derived by Lemma~\ref{lem:triangle-lemma-ecc} which we state bellow, by the following substitution: let $c=s$, $\gamma_1=\rho,\gamma_2=2\rho$ and $\gamma_3=3\rho$. 

\begin{lemma}\label{lem:triangle-lemma-ecc}
Consider vertices $b, c$ such that $d(b, c) \leq \gamma_1$, $d(c, b) \leq \gamma_2$ and $\eps(b) \leq \gamma_3$ then $\eps(c) \leq \gamma_3+\max(\gamma_1, \gamma_2)$.
\end{lemma}

\begin{proof}
Consider a vertex $v$, as $\eps(b) \leq \gamma_3$ either $d(v, b) \le \gamma_3$ or $d(b, v) \le \gamma_3$. If $d(v, b) \le \gamma_3$ then $d(v, c) \le d(v, b)+d(b, c) \le \gamma_3+\gamma_1$. Otherwise $d(b, v) \le \gamma_3$ then $d(c, v) \le d(c, b)+d(b, v) \le \gamma_3+\gamma_2$. In both cases $\eps(c) \leq \gamma_3+\max(\gamma_1, \gamma_2)$.
\end{proof}

Now we observe an important property of supercomponents with an analogous proof to that of Observation~\ref{obs:backedge}.

\begin{observation}[Modification of Observation~\ref{obs:backedge}]\label{obs:backedge-ecc}
For every pair of vertices in $v_i \in W_i$ and $v_j \in W_j$ with $i<j$, $d(v_j,v_i)>5\rho$.
\end{observation}

We now prove a modification of Lemma~\ref{lem:closegraph}. This further prunes the vertices that could potentially have small eccentricity.

\begin{lemma}[Modification of Lemma~\ref{lem:closegraph}]
\label{lem:closegraph-ecc}
If for a vertex $s \in V$, $\eps(s)\le \rho$ and we haven't yet certified $\eps(s) \le 5\rho$ then for each $i=1,\ldots,k$, either $s\in S_{W_i}$ or $s\in T_{W_i}$.
\end{lemma}
\begin{proof}
Fix $i$ and suppose by way of contradiction that there are nodes $u,v\in W_i$ such that $s\in S_u\cap T_v$ and $\eps(s) \leq \rho$. By Lemma~\ref{lem:SorT-ecc}, $u$ and $v$ must be in different $F_i$'s say $F_u$ and $F_v$. 

 Recall that by the definition of a supercomponent, $f_u$ and $f_v$ are in the same strongly connected component of $G_{close}$. So there is a path $P$ from $f_u$ to $f_v$ in $G_{close}$ such that all of its edges are in $E'$. By Lemma~\ref{lem:SorT-ecc} since $s\in S_u\cap T_v$, we have that $s\in S_{F_u}\cap T_{F_v}$. So there are two consecutive nodes $f_{j}$ and $f_{j'}$ on $P$ (in that order) such that $s\in S_{F_{j}}\cap T_{F_{j'}}$. 

Recall that an edge $e\in E'$ corresponds to a path $P_e$ of length at most $5\rho$ in the original graph. Let $e$ be the edge $(f_j,f_{j'})$ and consider $P_e$ and $V_e$. Since the endpoints of $P_e$ are in $F_j$ and $F_{j'}$ respectively, there exists a pair of vertices $u',v'$ consecutive in $V_e$ (in that order) such that $s\in S_{u'}\cap T_{v'}$. We note that $d(u',v')\leq \rho$.




Recall that we assumed that $\eps(s)\leq \rho.$
Note as well that $d(v',s)\le \rho$ and $d(s,v')\le d(s,u')+d(u',v') \le 2\rho$. Then, using Lemma~\ref{lem:triangle-lemma-ecc} with $b=s,\gamma_3=\rho,c=v',\gamma_1=2\rho,\gamma_2=\rho$, we get that $\eps(v')\leq \rho+\max\{2\rho,\rho\}=3\rho$.
A symmetric argument holds for $u'$, giving $\eps(u'),\eps(v')\leq 3\rho$. In this case, the algorithm would have already marked $s$ in step 3b as it is in the intersection of $S^{\rho}_{u'} \cup T^{\rho}_{v'}$.
\end{proof}


We now prove that for vertices $s$ which have small min-eccentricity and have not been certified as such, $s \in C$. The proof is analogous to that of Lemma~\ref{lem:propC}.

\begin{lemma}[Modification of Lemma~\ref{lem:propC}]
\label{lem:propC-ecc}
If for a vertex $s \in V$, $\eps(s)\le \rho$ and we haven't yet certified $\eps(s) \le 5\rho$ then $s\in C$.
\end{lemma}

Now we prove that the vertices in $B$ fit into the DAG structure in a similar but weaker sense than the vertices in $C$. The proofs are analogous to those of Lemma~\ref{lem:propB} and Corollary~\ref{cor:Bproperty_extension}.

     

\begin{lemma}[Modification of Lemma~\ref{lem:propB}]
\label{lem:propB-ecc}
    Consider a node $v\in B_i$. Then for all $z\le i$ except for at most two values, we have $v\in T_{W_z}^{2\rho}$. And for all $z\ge  i$ except for at most two values, we have $v\in S_{W_z}^{2\rho}$.

\end{lemma}
\begin{corollary}[Modification of Corollary~\ref{cor:Bproperty_extension}]
\label{cor:Bproperty_extension-ecc}
Lemma~\ref{lem:propB-ecc} is true for all $v\in B_i\cup C_i\cup W_i$. Moreover for all $v\in B_i\cup C_i\cup W_i$, we have $v\in T_{W_{i-1}}^{2\rho}$.
\end{corollary}

\subsubsection*{Algorithm Step 4: Partial search}

From each of the potential vertices with small min-eccentricity in $C$, we will run Dijkstra's algorithm on a small subgraph of $G$. $G_i$ and $ U_i$ are defined as in the min-radius algorithm. Define $\bar{C}_i$ to be the set of nodes $v \in C_i$ such that $v$ is within min-distance $\rho$ from all vertices in $W$ (we know this set of nodes because we have already run Dijkstra's algorithm from and to every vertex in $W$). From each node $v\in \bar{C}_i$ run Dijkstra's algorithm from and to $v$ with respect to the graph $G_i$. If $v$ is within min-distance $\rho$ from all nodes in $U_i$, we will show that this certifies that $\eps(s)\leq 3\rho$ and otherwise $\eps(s) > \rho$.

\subsubsection*{\emph{Analysis Step 4}} 

The following two claims prove that our algorithm for vertices $s \in C$ either certifies that $\eps(s)>\rho$ or $\eps(s)\leq 3\rho$. The proofs are analogous to those of Claim~\ref{claim:sp_inside} and Claim~\ref{claim:approx_center_valid}.


\begin{claim}[Modification of Claim~\ref{claim:sp_inside}]
\label{claim:sp_inside-ecc}
 If $s\in C_i$ and $\eps(s)\le \rho$, then for all $u\in U_i$, the min-distance between $c$ and $u$ with respect to $G_i$ is at most $\rho$. 
\end{claim}
\begin{claim}[Modification of Claim~\ref{claim:approx_center_valid}]
\label{claim:approx_center_valid-ecc}
If a vertex $s$ is within min-distance $\rho$ from all vertices in $U_i$ in $G_i$, then $\eps(s)\leq 3\rho$.
\end{claim}
For all the vertices for which we haven't certified either $\eps(s) \le 3\rho$ or $\eps(s) \le 5\rho$ we know that $\eps(s) > \rho$ and can certify that.


\end{proof}

The runtime is $\tO(m\sqrt{n})$ with analogous runtime analysis to that of the min-radius algorithm.

\subsection{$(3+\delta)$-approximation for unweighted graphs}

In this part we show that given an unweighted graph, by a slight modification of the min-eccentricity algorithm in Theorem \ref{thm:eccntricities}, we are able to improve the approximation factor of the min-eccentricity problem to match that of the min-radius problem, namely we present a $(3+\delta)$-approximation algorithm for every $\delta>0$.


\begin{theorem}
\label{thm:eccntricities-unweighted}
For any constant $\delta$ with $1>\delta>0$, there is an $\tilde{O}(m\sqrt n/\delta^2)$ time randomized algorithm, that given a directed unweighted graph $G=(V,E)$, can output an estimate $\eps'(s)$ for every vertex $s \in V$ such that $\eps(s)\le \eps'(s)\le (3+\delta)\eps(s)$ with high probability, where $\eps(s)$ is the min-eccentricity of the vertex $s$ in $G$. 
\end{theorem}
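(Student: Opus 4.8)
The plan is to start from the min-eccentricities algorithm of Theorem~\ref{thm:eccntricities} and change only the one step that costs more than the min-radius factor of $3$. Every component of that algorithm --- the far graph, the close graph and supercomponent DAG, and especially the partial searches of Step~4 (Claim~\ref{claim:sp_inside-ecc} and Claim~\ref{claim:approx_center_valid-ecc}) --- already certifies $\eps(s)\le 3\rho$; the sole exception is the blanket certification in Step~3b, where a vertex $s\in S^{\rho}_b\cap T^{\rho}_a$, for a consecutive pair $(a,b)$ on a subdivided close-graph path with $\eps(a),\eps(b)\le 3\rho$, is only declared to satisfy $\eps(s)\le 5\rho$ because Lemma~\ref{lem:triangle-lemma-ecc} adds an additive slack of $2\rho$ on top of the tool eccentricity bound $3\rho$. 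Such ``caught'' vertices are exactly the low-eccentricity vertices that wind up inside a single supercomponent and therefore escape the set $C$ that Step~4 handles. So the goal for unweighted $G$ is to re-certify every caught vertex $s$ with $\eps(s)\le\rho$ as satisfying $\eps(s)\le(3+\delta)\rho$, at the price of one extra $1/\delta$ factor in the running time of a single pass; the usual geometric search over $\rho$ with ratio $1+\Theta(\delta)$ then produces the $(3+\delta)$-approximation and supplies the second $1/\delta$.

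Concretely I would make two modifications. First, subdivide each close-graph path $P_e$ (which has length at most $5\rho$) into $O(1/\delta)$ subpaths of length at most $\delta'\rho$, $\delta'=\Theta(\delta)$, instead of subpaths of length at most $\rho$ (when $\delta'\rho<1$, use unit subpaths --- still $O(1/\delta)$ of them since then $\rho<1/\delta'$ --- and absorb the resulting additive $O(1)$ slack, which only matters for the handful of smallest thresholds). Running BFS from and to each of the $O(\sqrt n/\delta)$ resulting subdivision points is what raises a single pass to $\tO(m\sqrt n/\delta)$. Second, replace the weak Step~3b certification by a localized test: call $s$ a \emph{caught candidate} if it is within min-distance $\rho$ of every vertex of $W$ (known from Step~1), is not in $C$, and lies in $S^{\rho}_b\cap T^{\rho}_a$ for some subdivision pair $(a,b)$ with $\eps(a),\eps(b)\le 3\rho$ whose path $P_e$ joins far-components of a single supercomponent $W_i$; for each such $s$ run BFS from and to $s$ inside the Step~4 subgraph $G_i$, and certify $\eps(s)\le(3+\delta)\rho$ precisely when $s$ reaches, and is reached from, every vertex of the associated set $U_i$ within min-distance $\rho$. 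Rerunning the argument of Claim~\ref{claim:sp_inside-ecc} with the bound $\delta'\rho$ in place of $\rho$ wherever a subdivision point appears shows the test succeeds when $\eps(s)\le\rho$, and rerunning Claim~\ref{claim:approx_center_valid-ecc} shows any $s$ passing the test has $\eps(s)\le(3+O(\delta))\rho$.

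The main obstacle is the running time of this localized test: a priori the sets $S^{\rho}_b\cap T^{\rho}_a$ can be huge, and many can be tied to the same subgraph $G_i$, so a naive implementation would run far more than $O(\sqrt n)$ BFS calls per subgraph. The structural leverage I would exploit is that a caught candidate $s$ with $\eps(s)\le\rho$ has its membership in the $S_w/T_w$ fixed on all supercomponents other than $W_i$: if $s\in S_u\cap T_v$ with $u,v\in W_i$ and, say, $s\in S_w$ for some $w\in W_{i'}$ with $i'<i$, then $d(v,w)\le d(v,s)+d(s,w)\le 2\rho$, contradicting Observation~\ref{obs:backedge-ecc}, so $s$ must lie in $T_w^{\rho}$ for every earlier supercomponent and $S_w^{\rho}$ for every later one --- exactly like the vertices of $C_i$, whose count Lemma~\ref{lem:sampling}(3) bounds by $O(\sqrt n)$. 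What remains, and is the technical heart of the argument, is to control the remaining degree of freedom, namely $s$'s pattern on $W_i$: one needs to show (this is where the unweighted hypothesis must finally be used, since the weighted bound is $5$ rather than $3$) that the localized test only has to be run from $O(\sqrt n)$ carefully chosen representatives inside each supercomponent --- for instance by a charging argument that merges candidates sharing enough structure, or by showing the realizable patterns on $W_i$ are few --- so that, since each edge lies in $O(1)$ subgraphs $G_i$, the total extra work is $\tO(m\sqrt n)$ before the two $1/\delta$ factors, giving $\tO(m\sqrt n/\delta^2)$ overall.
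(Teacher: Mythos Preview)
Your diagnosis of the bottleneck is right --- Step~3b is the only place a factor larger than $3$ enters --- and your first modification (subdividing each $P_e$ into pieces of length at most $\delta'\rho$) is exactly the paper's move. But you then take a long detour that the paper avoids entirely: you keep the ``tool eccentricity bound'' at $3\rho$ and try to repair the resulting $5\rho$ certification with an additional localized partial search from each caught candidate. The paper observes instead that the finer subdivision \emph{already} sharpens the blanket certification of Step~3b to $(3+O(\delta'))\rho$, with no new searches at all. The point you missed is that the bound $\eps(a),\eps(b)\le 3\rho$ in Lemma~\ref{lem:closegraph-ecc} itself depends on the subdivision length: in that proof one has $d(s,v')\le d(s,u')+d(u',v')\le \rho + (\text{subdivision length})$, so with pieces of length $\le \delta'\rho/2$ the tool bound drops to $\eps(a),\eps(b)\le (2+\delta'/2)\rho$ (Lemma~\ref{lem:closegraph-ecc-unweighted}). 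Testing for this tighter threshold and applying Lemma~\ref{lem:triangle-lemma-ecc} then certifies every $s\in S^\rho_b\cap T^\rho_a$ as $\eps(s)\le (3+\delta')\rho$ directly. The unweighted hypothesis is used only to guarantee that a $5\rho$-path can actually be cut into $O(1/\delta')$ pieces of length $\le\delta'\rho/2$ (a single heavy edge would block this), and the extra $1/\delta$ in the running time comes solely from the $O(\sqrt n/\delta)$ subdivision-point BFS calls.

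By contrast, your second modification is both unnecessary and incomplete. You yourself flag the obstacle: the sets $S^\rho_b\cap T^\rho_a$ can be large, and caught candidates tied to a single supercomponent $W_i$ need not share a common $S/T$-pattern on $W_i$, so Lemma~\ref{lem:sampling}(3) does not bound their number by $O(\sqrt n)$. Your proposed fixes --- ``carefully chosen representatives'', a ``charging argument'', or ``few realizable patterns on $W_i$'' --- are all left as hopes, and this is precisely the step you label the technical heart. That gap is real, and the paper sidesteps it entirely.
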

\begin{proof}
There are only two parts of the algorithm in Theorem \ref{thm:eccntricities} that change: 

{\bf (1)} Letting $\delta'=\delta/5$, in each run of the algorithm, for each vertex $s$, we certify that either $\eps(s)>\rho$ or $\eps(s)\leq (3+\delta')\rho$ (instead of $\eps(s)\leq 5\rho)$. The subsequent changes follow naturally: We start from $\rho = 1$ and we run the algorithm and increase $\rho$ by a factor of $(1+\delta')$. We call the vertices for which we have certified $\eps(s)\leq (3+\delta')\rho$ for earlier values of $\rho$ as \emph{marked}, and if for an unmarked vertex $s$ the output of the algorithm is $\eps(s)\leq (3+\delta')\rho$, then we let $\eps'(s)=(3+\delta')\rho$. If for some value of $\rho$ and for some vertex $s$ we have $\eps(s)\le (3+\delta')\rho$ and $s$ was unmarked, then from the previous run of the algorithm, we know that $\eps(s) >\rho/(1+\delta')$. So for $\eps'(s)=(3+\delta')\rho$, we have $\eps'(s) \geq \eps(s)$ and $\eps'(s)\le (3+\delta')(1+\delta')\eps(s)=(3+\delta)\eps(s)$.

{\bf (2)} In step 3, part b of the algorithm  (Additional Dijkstra), recall that each edge $e\in E'$ is a path of length at most $5\rho$ in $G$. Now instead of dividing each $e$ into at most $9$ subpaths of length at most $\rho$, we divide it into subpaths of length at most $\delta'\rho/2\ge 1$ using at most $20/\delta'-1=O(1/\delta')$ vertices which we call $V_e$. The rest of this step follows naturally: We run Dijkstra from and to each $v\in V_e$, so we know that whether $\eps(v)>\rho$ or $\eps(v)<(2+\delta'/2)\rho$. For every consecutive pair of vertices $(a, b)$ in $V_e$ over all $e$ with $\eps(a), \eps(b) \leq (2+\delta'/2)\rho$ we certify for all $s \in S^{\rho}_b \cap T^{\rho}_a$ that $\eps(s) \le (3+\delta')\rho$. This is indeed true by Lemma~\ref{lem:triangle-lemma-ecc} (in the statement of the lemma, let $c=s$, $\gamma_1=\delta\rho/2,\gamma_2=(1+\delta/2)\rho$ and $\gamma_3=(2+\delta'/2)\rho$). 

First note that by this change the number of vertices that we do Dijkstra from/to in step 3(b) of the algorithm is now $O(|W|/\delta')=\tilde{O}(\sqrt{n}/\delta')=\tilde{O}(\sqrt{n}/\delta)$ (see runtime analysis of step 3(b) in Theorem \ref{thm:radius}). The runtime of the other steps are not changed, so the overall runtime of the algorithm is $\tilde{O}(m\sqrt{n}/\delta^2)$.

The main issue in the min-eccentricity algorithm that didn't allow us to get a $(3+\delta')$ approximation is that we could have potentially big weighted edges, and that didn't let us divide $5\rho$-length paths into smaller parts. The analysis of this part is due to Lemma \ref{lem:closegraph-ecc}, which is modified as in Lemma \ref{lem:closegraph-ecc-unweighted}.

\end{proof}

\begin{lemma}[Modification of Lemma~\ref{lem:closegraph-ecc}]
\label{lem:closegraph-ecc-unweighted}
If for a vertex $s \in V$, $\eps(s)\le \rho$ and we haven't yet certified $\eps(s) \le (3+\delta')\rho$ then for each $i=1,\ldots,k$, either $s\in S_{W_i}$ or $s\in T_{W_i}$.
\end{lemma}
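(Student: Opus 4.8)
Looking at this statement, it is the unweighted-graph analogue of Lemma~\ref{lem:closegraph-ecc}, where the certification threshold for the ``additional Dijkstra'' marking has been tightened from $5\rho$ to $(3+\delta')\rho$. The plan is to mirror the proof of Lemma~\ref{lem:closegraph-ecc} essentially line by line, tracking where the modified subdivision of step~3(b) into segments of length at most $\delta'\rho/2$ (rather than $\rho$) changes the bookkeeping.

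\medskip
\noindent\textbf{Proof plan.} Fix $i$ and suppose for contradiction that there exist $u,v\in W_i$ with $s\in S_u\cap T_v$ while $\eps(s)\le\rho$ and $s$ has not yet been certified as having $\eps(s)\le(3+\delta')\rho$. First I would invoke Lemma~\ref{lem:SorT-ecc} (which is unchanged in the unweighted setting) to conclude $u$ and $v$ lie in distinct far-graph components $F_u,F_v$, and that $s\in S_{F_u}\cap T_{F_v}$. Since $u,v$ are in the same supercomponent $W_i$, there is a path $P$ from $f_u$ to $f_v$ in $G_{close}$ all of whose edges lie in $E'$; walking along $P$ and using $s\in S_{F_u}\cap T_{F_v}$, one finds a consecutive pair $f_j,f_{j'}$ on $P$ (in that order) with $s\in S_{F_j}\cap T_{F_{j'}}$. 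The edge $(f_j,f_{j'})\in E'$ corresponds to a path $P_e$ of length at most $5\rho$ in $G$, and in the modified algorithm $V_e$ subdivides $P_e$ into subpaths of length at most $\delta'\rho/2$. Since the endpoints of $P_e$ lie in $F_j,F_{j'}$ respectively, there is a consecutive pair $u',v'$ in $V_e$ (in that order) with $s\in S_{u'}\cap T_{v'}$, and now $d(u',v')\le\delta'\rho/2$.

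\medskip
The key step is to bound $\eps(v')$ (and symmetrically $\eps(u')$). We have $d(v',s)\le\rho$ since $s\in T_{v'}$ and $\eps(s)\le\rho$; and $d(s,v')\le d(s,u')+d(u',v')\le\rho+\delta'\rho/2=(1+\delta'/2)\rho$. Applying Lemma~\ref{lem:triangle-lemma-ecc} with $b=s$, $c=v'$, $\gamma_3=\rho$, $\gamma_1=(1+\delta'/2)\rho$, $\gamma_2=\rho$ gives $\eps(v')\le\rho+\max\{(1+\delta'/2)\rho,\rho\}=(2+\delta'/2)\rho$, and the symmetric computation gives $\eps(u')\le(2+\delta'/2)\rho$. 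Therefore in step~3(b) of the modified algorithm, the consecutive pair $(u',v')$ in $V_e$ satisfies $\eps(u'),\eps(v')\le(2+\delta'/2)\rho$, so the algorithm would have certified $\eps(s')\le(3+\delta')\rho$ for every $s'\in S^{\rho}_{v'}\cap T^{\rho}_{u'}$. Since $d(u',s)\le\rho$ (from $s\in S_{u'}$, $\eps(s)\le\rho$) and $d(s,v')\le(1+\delta'/2)\rho\le\rho$ fails in general — wait, I must be careful here: membership in $S^{\rho}_{v'}$ requires $d(s,v')\le\rho$. Actually $s\in T_{v'}$ means $d(v',s)\le d(s,v')$, and I only bounded $d(s,v')\le(1+\delta'/2)\rho$, so $s$ need not lie in $S^\rho_{v'}$. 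The correct observation, exactly as in Lemma~\ref{lem:closegraph-ecc}, is that $s\in S^\rho_{u'}\cap T^\rho_{v'}$: from $s\in S_{u'}$ and $\eps(s)\le\rho$ we get $d(s,u')\le\rho$, wait no — $s\in S_{u'}$ means $d(s,u')\le d(u',s)$, so $d_{min}(s,u')=d(s,u')\le\rho$. Hmm, I should re-examine: in the original lemma the conclusion invokes ``$s$ is in the intersection of $S^\rho_{u'}\cup T^\rho_{v'}$'' — so the marking set there is stated with $u'$ before $v'$, matching $d(s,u')\le\rho$ (as $s\in S_{u'}$, $\eps(s)\le\rho$ forces $d(s,u')\le\rho$) and $d(v',s)\le\rho$ (as $s\in T_{v'}$, $\eps(s)\le\rho$). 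So $s\in S^\rho_{v'}\cap T^\rho_{u'}$ in the algorithm's notation is precisely what holds, and the certification applies to $s$, contradicting the assumption that $s$ is uncertified. This is the main subtlety to get right: carefully matching the orientation of $(u',v')$ with the sets $S^\rho,T^\rho$ named in step~3(b), so that $s$ itself falls into the marked set.

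\medskip
\noindent\textbf{Expected obstacle.} Everything is routine triangle-inequality chasing; the only real care needed is (i) verifying that the tightened segment length $\delta'\rho/2$ feeds correctly through Lemma~\ref{lem:triangle-lemma-ecc} to yield the threshold $(2+\delta'/2)\rho$ used in the modified step~3(b), and (ii) confirming the orientation bookkeeping so that $s\in S^\rho_{v'}\cap T^\rho_{u'}$ (the set the algorithm actually marks) rather than the reverse. No new structural idea beyond the proof of Lemma~\ref{lem:closegraph-ecc} is required, since Observation~\ref{obs:backedge-ecc}, Lemma~\ref{lem:SorT-ecc}, and the supercomponent construction are all unchanged in the unweighted case.
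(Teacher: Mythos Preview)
Your proposal is correct and follows essentially the same route as the paper's proof: set up the contradiction, trace through $G_{close}$ and $E'$ to locate a consecutive pair $u',v'\in V_e$ with $s\in S_{u'}\cap T_{v'}$ and $d(u',v')\le\delta'\rho/2$, bound $\eps(u'),\eps(v')\le(2+\delta'/2)\rho$, and conclude $s$ was already certified in step~3(b). The paper derives the bound on $\eps(v')$ by a direct triangle-inequality argument rather than citing Lemma~\ref{lem:triangle-lemma-ecc}, but that is cosmetic. Your stream-of-consciousness worry about the orientation of $S^\rho$ versus $T^\rho$ is understandable (the paper's own notation for the marked set in step~3(b) versus the conclusion of the proof is not perfectly aligned), but your final identification $s\in S^\rho_{u'}\cap T^\rho_{v'}$ from $d(s,u')\le\rho$ and $d(v',s)\le\rho$ is what the paper also asserts, so there is no gap.
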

\begin{proof}
The proof is similar to that of Lemma \ref{lem:closegraph-ecc}, with a change at the end of the argument because of our finer division of paths. Fix $i$ and suppose by way of contradiction that there are nodes $u,v\in W_i$ such that $s\in S_u\cap T_v$ and $\eps(s) \leq \rho$. Similar to Lemma \ref{lem:closegraph-ecc}, we can assume that there are two vertices $u',v'$ that we have done Dijkstra from such that $s\in S_{u'}\cap T_{v'}$ and $d(u',v')\le \rho\delta'/2$.

Now we claim that $\eps(v')\leq (2+\delta'/2)\rho$. Note that $d(v',s)\le \rho$ and $d(s,v')\le d(s,u')+d(u',v') \le (1+\delta'/2)\rho$. Consider an arbitrary vertex $w\in V$. Either $d(s,w)\le \rho$ or $d(w,s)\le \rho$. If $d(s,w)\le \rho$ then $d(v',w)\le d(v',s)+d(s,w) \le 2\rho$. If $d(w,s)\le \rho$, then $d(w,v')\le d(w,s)+d(s,v')\le (2+\delta'/2)\rho$. A symmetric argument holds for $u'$. In this case, the algorithm would have already marked $s$ in step 3b as it is in the intersection of $S^{\rho}_{u'} \cap T^{\rho}_{v'}$.
\end{proof}



\section*{Acknowledgements} 
\noindent The authors would like to thank the members of the MIT course 6.S078 open problem sessions, especially Thuy-Duong Vuong, Robin Hui, and Ali Vakilian. These sessions were organized by Erik Demaine, Ryan Williams, and Virginia Vassilevska Williams, and used the collaboration software Coauthor, created by Erik Demaine.
\bibliographystyle{plain}
\bibliography{references}

\appendix
\section{Appendix}

\begin{lemma}
\label{lem:divideTo_r_sections}
Given a weighted graph $G$ and a path $P$ in $G$ from $v$ to $u$ of length at most $zr$ for some integers $z$ and $r$, one can find in $O(|P|)$ time vertices $v_1,\ldots, v_{z'}$ such that $z'\le 2z-1$ and they divide $P$ into subpaths of length at most $r$ if there are no edges of weight more than $r$ on the path. Equivalently, $|P_{v_iv_{i+1}}|\le r$, for $i=0,\ldots,z$, where $v_0=v, v_{z'+1}=u$ and $P_{v_iv_{i+1}}$ is the part of the path $P$ between $v_i$ and $v_{i+1}$. 
\end{lemma}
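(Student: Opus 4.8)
The plan is to process $P$ greedily from $v$ to $u$ in a single left-to-right pass, starting a new subpath every time the current subpath would otherwise exceed length $r$, and then to bound the number of subpaths produced by a short amortized argument.

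\emph{Construction.} Write $P = (v = x_0, x_1, \dots, x_p = u)$ and let $w_t$ be the weight of the edge $(x_{t-1}, x_t)$, so $\sum_{t=1}^{p} w_t = |P| \le zr$ and, by hypothesis, $w_t \le r$ for every $t$. Maintain a running sum $\sigma$ initialized to $0$; for $t = 1, \dots, p$, if $\sigma + w_t \le r$ then update $\sigma \leftarrow \sigma + w_t$, and otherwise output $x_{t-1}$ as one of the dividing vertices and reset $\sigma \leftarrow w_t$. This is one scan over $P$, hence $O(|P|)$ time. The no-heavy-edge hypothesis is exactly what guarantees a single edge always fits into a freshly reset subpath, so the procedure is well defined and every subpath it carves out has length at most $r$. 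Letting the subpaths in order be $I_1, \dots, I_k$ with lengths $\ell_1, \dots, \ell_k \le r$, the output vertices $v_1, \dots, v_{z'}$ satisfy $z' = k-1$ and divide $P$ precisely into $I_1, \dots, I_k$.

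\emph{Counting the pieces.} It remains to show $z' \le 2z-1$, i.e.\ $k \le 2z$; in fact we get the slightly stronger $k \le 2z-1$. The key point is that the cut made after $I_j$ (for $j < k$) is triggered by the first edge of $I_{j+1}$, whose weight $w$ satisfies $\ell_j + w > r$; since $w \le \ell_{j+1}$ this gives $\ell_j + \ell_{j+1} > r$ for every $j \in \{1, \dots, k-1\}$. Grouping the pieces into disjoint consecutive pairs $(I_1,I_2),(I_3,I_4),\dots$ yields $\lfloor k/2 \rfloor$ pairs, each of total length strictly more than $r$, so $zr \ge |P| = \sum_{j=1}^{k}\ell_j \ge \sum_{i=1}^{\lfloor k/2\rfloor}(\ell_{2i-1}+\ell_{2i}) > \lfloor k/2 \rfloor \cdot r$. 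Hence $\lfloor k/2 \rfloor < z$, i.e.\ $\lfloor k/2 \rfloor \le z-1$, which forces $k \le 2z-1$ and therefore $z' = k-1 \le 2z-2 \le 2z-1$.

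\emph{Where the difficulty lies.} There is no genuine obstacle; this is a textbook greedy-plus-amortization argument. The only things needing care are (i) invoking the no-heavy-edge hypothesis so that the greedy never emits a subpath of length exceeding $r$ (and never stalls), (ii) using the \emph{strict} inequality $\ell_j + \ell_{j+1} > r$ produced by the cut rule rather than the useless bound $\ell_j,\ell_{j+1}\le r$, and (iii) treating the parity of $k$ through the floor so that the final count $z' \le 2z-1$ comes out without an off-by-one.
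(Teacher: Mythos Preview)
Your proof is correct and follows essentially the same approach as the paper: a single left-to-right greedy pass to place the dividers, followed by the observation that any two consecutive subpaths together have length strictly exceeding $r$, which bounds the number of pieces. Your counting is slightly more careful with parity and in fact yields $z'\le 2z-2$, a hair stronger than the paper's stated $z'\le 2z-1$, but the method is the same.
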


\begin{proof}
Start from $v_0=v$ and go through the path until the last vertex $w$ such that $d(v,w)\le r$ but $d(v,w')>r$ where $w'$ is the node right after $w$ on the path. Note that since there are no edges of weight more than $r$, such $w$ exists. Let $v_1=w$. Starting from $v_1$, we can do the same and find all vertices $v_2,\ldots, v_{z'}$. It is remained to prove that $z'<2z$. By the definition of $v_1$, we know that $d(v_0,v_2)>r$. Similarly, we can argue that $d(v_i,v_{i+2})>r$ for all $i=0,\ldots, z'-1$. So $d(v_0,v_{2i})>ir$. Since $|P|\le zr$, we have $z'\le 2z-1$. We went through the vertices of $P$ once, so the running time is linear in terms of the length of the path.  
\end{proof}

\begin{lemma}
\label{lem:SCC}
There is an algorithm that given a strongly connected graph $H=(V,E)$, outputs in $O(|E|)$ time a subset $E'\subseteq E$ of size at most $2(|V|-1)$ such that $H'=(V,E')$ is strongly connected.
\end{lemma}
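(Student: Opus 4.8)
The plan is to use the textbook ``out-arborescence plus in-arborescence'' construction. Fix an arbitrary root vertex $r\in V$. Because $H$ is strongly connected, $r$ can reach every vertex and every vertex can reach $r$; I would capture the first fact with a single spanning tree of forward edges and the second fact with a single spanning tree of ``backward'' edges, and take their union. Every vertex will then be able to reach every other vertex by routing through $r$.

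Concretely, first I would run a BFS (or DFS) from $r$ in $H$. Since all of $V$ is reachable from $r$, this produces a spanning out-arborescence rooted at $r$; call its tree-edge set $E_{out}$, so $E_{out}\subseteq E$ and $|E_{out}|\le |V|-1$, and for every $u\in V$ there is a directed $r\to u$ path using only edges of $E_{out}$. Next I would build the reverse graph $H^{\mathrm{rev}}=(V,E^{\mathrm{rev}})$ with $E^{\mathrm{rev}}=\{(v,u):(u,v)\in E\}$, which is also strongly connected, and run BFS from $r$ in $H^{\mathrm{rev}}$; reversing each of its tree edges back to the original orientation yields a set $E_{in}\subseteq E$ with $|E_{in}|\le |V|-1$ such that for every $u\in V$ there is a directed $u\to r$ path in $H$ using only edges of $E_{in}$. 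I would then set $E'=E_{out}\cup E_{in}$.

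To finish, I would check that $|E'|\le |E_{out}|+|E_{in}|\le 2(|V|-1)$ and that $H'=(V,E')$ is strongly connected: for any $u,v\in V$, concatenating the $u\to r$ path in $E_{in}$ with the $r\to v$ path in $E_{out}$ is a directed walk from $u$ to $v$ in $H'$. For the time bound, each of the two searches and the construction of $H^{\mathrm{rev}}$ runs in $O(|V|+|E|)$ time, which is $O(|E|)$ since a strongly connected graph on at least two vertices has $|E|\ge |V|$ (the one-vertex case being trivial). I do not expect a genuine obstacle here; the only points needing a line of care are verifying that reversed tree edges of $H^{\mathrm{rev}}$ really lie in $E$, and that the bound is $2(|V|-1)$ rather than $2|V|$ because each arborescence has at most $|V|-1$ edges.
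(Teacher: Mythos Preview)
Your proposal is correct and is essentially identical to the paper's proof: the paper also fixes a vertex, runs BFS to and from it, and takes the union of the two resulting trees, noting the $2(|V|-1)$ bound and the $a\to r\to b$ connectivity argument. Your write-up is simply more explicit about the reverse-graph construction and the $O(|E|)$ running time.
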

\begin{proof}
For any vertex $v$ do a BFS to and from $v$ and denote by $E'$ the union of edges in the two computed BFS trees. $H'=(V,E')$ is strongly connected as for every ordered pair of vertices $(a, b)$ we can go from $a$ to $b$ by following the path $a \to v \to b$. It is clear that since $E'$ is the union of two trees, $|E'|\leq 2(|V|-1)$.



\end{proof}

\end{document}